\documentclass[lettersize,onecolumn,11pt]{IEEEtran}
\usepackage{amsmath,amsfonts}
\usepackage{algorithmic}
\usepackage{algorithm}
\usepackage{array}
\usepackage[caption=false,font=normalsize,labelfont=sf,textfont=sf]{subfig}
\usepackage{textcomp}
\usepackage{stfloats}
\usepackage{url}
\usepackage{verbatim}
\usepackage{graphicx}
\usepackage{cite}
\usepackage[dvipsnames]{xcolor}
\usepackage{tikz,tikz-cd}

\usepackage{mathrsfs}
\usepackage{amsfonts}
\usepackage{amsmath}
\usepackage{amsthm}
\usepackage{amssymb}
\usepackage{dsfont}
\usepackage{esint}
\usepackage{bbm}
\usepackage{bm}
\usepackage{mathtools}

\newtheorem{theorem}{Theorem}
\newtheorem{corollary}[theorem]{Corollary}
\newtheorem{lemma}[theorem]{Lemma}
\newtheorem{remark}[theorem]{Remark}
\newtheorem{proposition}[theorem]{Proposition}
\newtheorem{definition}[theorem]{Definition}

\newcommand{\RR}{\mathbb{R}}
 
\newcommand{\NN}{\mathbb{N}} 
 
\newcommand{\ii}{\mathbbm{1}} 
\newcommand{\ket}[1]{|#1\rangle} 
\newcommand{\bra}[1]{\langle #1|} 
\newcommand{\op}[2]{|#1\rangle\!\langle #2|}

\newcommand{\Tr}{\operatorname{Tr}} 
 
\newcommand{\mc}[1]{\mathcal{#1}} 
\newcommand{\fall}{\textrm{ }\forall\textrm{ }} 
\newcommand{\norm}[1]{\left\lVert#1\right\rVert} 
\newcommand{\abs}[1]{\left|#1\right|} 
\newcommand{\ra}{\rightarrow}

\newcommand{\ep}{\varepsilon} 
\newcommand{\bp}[1]{\!\left(#1\right)} 
\newcommand{\bb}[1]{\!\left[#1\right]}
\newcommand{\bc}[1]{\!\left\{#1\right\}} 
 
\newcommand{\hilb}{\mc{H}} 
\newcommand{\ph}{\varphi}
\allowdisplaybreaks
\usepackage{hyperref}

\begin{document}

\title{Flood of multipartite Rains entanglement}

\author{Hailey S.~Murray, Sagnik Bhattacharya, M.~Cerezo, Liuke Lyu, Mark M.~Wilde

\thanks{Hailey S.~Murray is with the School of Applied and Engineering Physics at Cornell University, Ithaca, NY 14850, USA (email: hm649@cornell.edu). Sagnik Bhattacharya is with the Department of Electrical and Computer Engineering at the University of Maryland, College Park, MD 20742, USA. M.~Cerezo is with Information Sciences at the Los Alamos National Laboratory, Los Alamos, NM 87545, USA. Liuke Lyu is with the D\'epartment de Physique at the Universit\'e de Montr\'eal, Montr\'eal, QC H3C 3J7, Canada. Mark M.~Wilde is with the School of Electrical and Computer Engineering at Cornell University, Ithaca, New York 14850, USA (email: wilde@cornell.edu).

This work was presented in part at the IEEE International Symposium on Information Theory 2026 \cite{murray2026}.
}
}

\maketitle

\begin{abstract}Multipartite entanglement admits phenomena such as the activation of genuine multipartite entanglement (GME) and the existence of inequivalent classes of entanglement, and existing bipartite entanglement measures have no unique generalization to this regime. In this work, we define the Rains, monsoon, hurricane, and squall entanglement as generalizations of the bipartite Rains relative entropy, and we establish various properties of these entanglement measures. We also prove that the Rains entanglement is monotone under selective quantum operations that completely preserve the positivity of the partial transpose. We establish single-letter upper bounds on the one-shot and asymptotic rates at which a fixed pure state can be distilled from an arbitrary state in both the standard and probabilistic approximate distillation scenarios.
Among the entanglement measures we define, the tightest upper bound on the one-shot pure-state distillation rate is in terms of the Rains entanglement. However, the activation of GME---or, equivalently, the tensor instability of biseparability---makes it unclear if the one-shot bound in terms of the Rains entanglement can be extended to a single-letter asymptotic bound. Instead, we establish upper bounds on the asymptotic pure-state distillation rate in terms of the hurricane and squall entanglement. Upper bounds on the GHZ- and W-distillable entanglement follow as a consequence. Additionally, we define the multipartite max-Rains entanglement, write it as a semidefinite program, and derive a dual program for it. Finally, we analyze these measures for quantum pairwise independent networks, and we establish a conditional gradient (Frank--Wolfe) algorithm for computing the Rains entanglement.
\end{abstract}

\begin{IEEEkeywords}
Multipartite entanglement, genuine multipartite entanglement,
Rains entanglement, entanglement distillation,
distillable entanglement, 
semidefinite programming.
\end{IEEEkeywords}

\tableofcontents

\section{Introduction}\label{sec:intro}

\subsection{Background and motivation}

Pure maximally entangled states are valuable resources for many quantum information-processing protocols~\cite{bennett_etal1993,bennett_wiesner1992}. Given a quantum state, quantifying the amount of useful entanglement extractable by distant parties is critical. The bipartite distillable entanglement refers to the ultimate rate at which Bell states can be extracted from copies of an arbitrary entangled state using only local operations and classical communication (LOCC)~\cite{bennett1996prl,bennett_divincenzo1996}, but despite its theoretical importance, the distillable entanglement is difficult to calculate and may not even be computable~\cite{wolf_cubitt_perezgarcia2011}. The tightest known upper bound on the bipartite distillable entanglement is the Rains relative entropy~\cite{rains2001}---an entanglement measure that is efficiently computable using semidefinite programming~\cite{audenaert2002,fawzi_2018,wilde_2018_entanglement_cost}.

When venturing beyond bipartite systems, characterizing entanglement becomes more nuanced, and existing bipartite entanglement measures have no unique multipartite generalization. A quantum state is considered ``biseparable'' if it can be written as a convex mixture of states that are separable with respect to some bipartition, and a state that is not biseparable possesses ``genuine multipartite entanglement'' (GME)~\cite{dur_vidal_cirac2000,Seevinck2001}. A phenomenon that is unique to multipartite systems is the activation of GME~\cite{huber2011,yamasaki_etal2022}: A mixture of states that are separable with respect to distinct bipartitions can exhibit GME upon taking a sufficient number of copies of the mixture~\cite{palazuelos_vicente2022}.

Another distinguishing feature of multipartite systems is the existence of inequivalent classes of entanglement. Indeed, for three-qubit systems, Greenberger-Horne-Zeilinger (GHZ)-type and W-type states constitute the two separate classes of pure-state entanglement, with there being an infinite number of classes for systems consisting of four or more parties~\cite{dur_vidal_cirac2000}. One respect in which GHZ and W states differ is that the W state is robust to particle loss, in the sense that entanglement persists upon tracing out one subsystem, but this property does not hold for GHZ states~\cite{dur_vidal_cirac2000}. GHZ states are nevertheless ubiquitous in information-processing protocols, such as fault-tolerant error correction schemes, secret sharing, and conference key agreement~\cite{shor1997faulttolerantquantumcomputation,hillery1999,bose_vedral_knight1998}. There exist computational problems for which the W and GHZ states are the only pure-state solutions~\cite{dhondt2006}, and the use of multipartite entangled states can present an advantage for tasks such as multipartite purification~\cite{huber2011}.

\subsection{Summary of contributions}

In this work, we define the Rains, monsoon, hurricane, and squall entanglement as measures of multipartite entanglement that generalize the bipartite Rains relative entropy\footnote{These names were chosen to honor Professor Eric Rains and his seminal work, ``A semidefinite program for distillable entanglement''~\cite{rains2001}.}. We discuss their properties and establish an ordering among the collection of multipartite entanglement measures in Proposition~\ref{lem:multipartite_rains_inequalities}, where the Rains entanglement is the smallest. We prove that the squall and hurricane entanglement are tensor-product and tensor-power subadditive, respectively (Lemmas~\ref{lem:min_bi_rains_tensor_power_subadd} and~\ref{lem:squall_subadd}). We also prove that the monsoon, hurricane, and squall entanglement are monotone under positive partial transpose (PPT) operations (Proposition~\ref{prop:basic_props}), which implies LOCC monotonicity. Furthermore, we prove that the Rains entanglement does not increase on average under operations that completely preserve the positivity of the partial transpose (Theorem~\ref{rains_ppt_monotone_thm}). This property, known as ``selective PPT monotonicity," implies LOCC monotonicity and is critical for proving an upper bound in the probabilistic approximate distillation scenario.

The bipartite distillable entanglement is defined to have a maximally entangled state as the target state of the distillation protocol. In principle, one could define the distillable entanglement with other pure states as targets. However, given that the theory of bipartite pure-state entanglement manipulation is asymptotically reversible \cite{bennett_bernstein1996}, one can focus exclusively on distilling pure maximally entangled states. In the multipartite case, this reversibility ceases to hold. This means that a bound on the distillable entanglement with a maximally entangled target state is not exhaustive for systems of three or more parties in the same way that it is in the bipartite case. In this work, we take a more general approach by establishing upper bounds on the one-shot and asymptotic rate at which an arbitrary state can be converted by LOCC to a fixed pure state. Specifically, the upper bound that we establish in the one-shot case is expressed in terms of the Rains entanglement (Theorem~\ref{thm:oneshot_psi_distill}), and the upper bounds in the asymptotic case are single-letter formulae in terms of the hurricane and squall entanglement (Theorem~\ref{thm:pure_distill_bound}).

As an application of these general results, we establish upper bounds on the W- and GHZ-distillable entanglement in Corollaries~\ref{cor:w_distill} and~\ref{cor:ghz_distill}. While GHZ and W states are of particular interest for tripartite systems as representatives for the two classes of tripartite pure-state entanglement, our results hold for GHZ and W states shared among an arbitrary number of parties. As a precursor to our proofs of these bounds, we derive analytical expressions for the Rains entanglement of GHZ and W states (Lemmas~\ref{multipartite_rains_lower_bound_lemma} and~\ref{lem:gmre_lb_w}). The lower bound for the Rains entanglement of these states follows from a general result that we prove for pure states (Lemma~\ref{lem:gmre_pure_state_ub_lb}). To derive matching upper bounds, we use a formula for the Rains entanglement that minimizes the quantum relative entropy over a set of feasible operators (Lemma~\ref{lem:alt_rains_def}). The upper bound for the Rains entanglement of GHZ states follows by selecting a completely dephased GHZ state, which is in the feasible set, and using various properties of maximally entangled states. Deriving a matching upper bound on the Rains entanglement of W states is more involved, and our proof involves constructing a particular mixture of a W state and a corresponding completely dephased W state that is in the feasible set. 

As other contributions, we define the max-Rains entanglement, formulate it as a semidefinite program (SDP), and derive a dual program (Lemma~\ref{lem:sdp_max_gmre}). We also define the max-hurricane entanglement, whose SDP is a direct extension of the SDP for the bipartite max-Rains relative entropy. The SDP for the max-Rains entanglement is distinct, and the derivation of its dual is nontrivial. 

As an application, we evaluate the Rains, monsoon, hurricane, and squall entanglements for quantum Pairwise Independent Networks (qPINs), specifically showing that the first three measures equal the minimum cut of the underlying multigraph, whereas the squall entanglement equals the maximum cut. 

Finally, we compare the Rains, hurricane, and squall entanglement with the genuine multipartite log-negativity in the one-dimensional transverse-field Ising model. In this application, we show that there is a strict separation between all four entanglement measures.

\subsection{Paper organization}

Our paper is organized as follows. We cover preliminary topics in Section~\ref{sec:prelims}. We define the Rains, monsoon, hurricane, and squall entanglement in Section~\ref{sec:defs} and establish some of their properties in Section~\ref{sec:props}. We provide bounds for the Rains entanglement of pure states and an exact expression for the Rains entanglement of W and GHZ states in Section~\ref{sec:pure}. In Section~\ref{sec:distill}, we establish bounds on the one-shot and asymptotic rates at which an arbitrary state can be converted by LOCC to a fixed pure state, where we also specify the bounds for the GHZ- and W-distillable entanglement. In Section~\ref{sec:sdp}, we write the max-Rains entanglement as an SDP, derive a dual program characterizing it, and discuss its use in an SDP for the Rains entanglement. In Section~\ref{sec:qpin-models}, we evaluate the multipartite entanglement measures for quantum pairwise independent networks and relate them to the minimum and maximum cuts of the underlying multigraph. In Section~\ref{sec:multi_ent_tfim}, we compare the Rains, hurricane, and squall entanglement with the genuine multipartite log-negativity in the one-dimensional transverse-field Ising model.
Appendices~\ref{app:d_rains_entangle_T} through \ref{app:gmn_sdp} provide detailed proofs for various claims in the main text, while Appendix~\ref{app:Rains-algs} offers a conditional gradient (Frank--Wolfe) algorithm for computing a lower bound on the Rains entanglement.

\section{Preliminaries}

\label{sec:prelims}

Let $\mc{S}(\hilb)$ denote the set of quantum states, unit trace positive semidefinite (PSD) operators, on a Hilbert space $\hilb$ and $\mc{L}_+(\hilb)$ the set of PSD linear operators on $\hilb$. We denote the set of quantum channels (completely positive trace-preserving maps) by $\operatorname{CPTP}(\hilb,\hilb')$ and the set of LOCC channels by $\operatorname{LOCC}(\hilb,\hilb')$~\cite{chitambar2014}.

The quantum relative entropy of a quantum state $\rho$ and a PSD operator $\sigma$ is defined as~\cite{umegaki1962}
\begin{align}
    D(\rho\|\sigma)&\coloneq\begin{cases}
        \Tr\bb{\rho(\log_2\rho-\log_2\sigma)}&\textrm{if }(\star),\\
        +\infty&\textrm{else},
    \end{cases}
\end{align}
where $(\star)\equiv\operatorname{supp}(\rho)\subseteq\operatorname{supp}(\sigma)$. For $\alpha\in(0,1)\cup(1,\infty)$, the sandwiched R\'enyi relative entropy of a state $\rho$ and a PSD operator $\sigma$ is defined as~\cite{mullerlennert2013,wilde2014}
\begin{align}
    \widetilde{D}_\alpha(\rho\|\sigma)&\coloneq\begin{cases}
    \frac{1}{\alpha-1}\log_2\Tr\bb{\bp{\sigma^{\frac{1-\alpha}{2\alpha}}\rho\sigma^{\frac{1-\alpha}{2\alpha}}}^\alpha}&\textrm{if }\alpha\in(0,1)\vee(\alpha\in(1,\infty)\wedge(\star)),\\
    +\infty&\textrm{else}.
    \end{cases}
\end{align}
The max-relative entropy of a state $\rho$ and a PSD operator $\sigma$ is given by~\cite{datta2009}
\begin{align}
    D_{\max}(\rho\|\sigma)\coloneq\log_2\norm{\sigma^{-\frac{1}{2}}\rho\sigma^{-\frac{1}{2}}}_\infty,
\end{align}
where $\norm{\cdot}_\infty$ denotes the operator/spectral norm (i.e., the maximum singular value). The max-relative entropy has the following relationship to the sandwiched R\'enyi relative entropy~\cite{mullerlennert2013}:
\begin{align}
    \lim_{\alpha\ra\infty}\widetilde{D}_\alpha(\rho\|\sigma)=D_{\max}(\rho\|\sigma).\label{eq:dmax_limit_sandwich_renyi}
\end{align}
It is also known that the max-relative entropy can be written as the following SDP~\cite{datta2009}:
\begin{align}
    D_{\max}(\rho\|\sigma)&=\log_2\inf_{\lambda\geq0}\bc{\lambda:\rho\leq\lambda\sigma}.\label{eq:lambda_sdp_dmax}
\end{align}
The sandwiched R\'enyi relative entropy is monotone increasing in $\alpha$ \cite{mullerlennert2013} and converges to the standard relative entropy in the limit $\alpha \to 1$ \cite{mullerlennert2013,wilde2014},
\begin{align}
    \lim_{\alpha\ra1}\widetilde{D}_\alpha(\rho\|\sigma)&=D(\rho\|\sigma),
    \label{res:limit_sandwich_renyi_quantum_rel_entropy}
\end{align}
which implies that
\begin{equation}
D(\rho \| \sigma) \leq D_{\max}(\rho\|\sigma).    
\label{eq:ineq-D-Dmax}
\end{equation}

A function $\boldsymbol{D}\colon\mc{S}(\hilb)\times\mc{L}_+(\hilb)\ra\RR$ is a generalized divergence~\cite{polyanskiy_verdu2010,sharma2013} if it satisfies the data-processing inequality. That is, for every $\rho\in\mc{S}(\hilb)$, $\sigma\in\mc{L}_+(\hilb)$, and $\mc{N}\in\operatorname{CPTP}(\hilb,\hilb')$,
\begin{align}
    \boldsymbol{D}(\rho\|\sigma)\geq\boldsymbol{D}(\mc{N}(\rho)\|\mc{N}(\sigma)).\label{eq:data_processing}
\end{align}
The data-processing inequality is satisfied by the quantum relative entropy~\cite{lindblad1975}, sandwiched R\'enyi relative entropy for $\alpha\in[1/2,1)\cup(1,\infty)$~\cite{frank2013} (see also~\cite{wilde2018}), and max-relative entropy~\cite{datta2009}. We say that a generalized divergence $\boldsymbol{D}$ satisfies the scaling property if, for $c>0$,
\begin{equation}
\boldsymbol{D}(\rho\|c\sigma)=\boldsymbol{D}(\rho\|\sigma)-\log_2c.     \label{eq:scaling-prop}
\end{equation} 
Furthermore, we say that $\boldsymbol{D}$ satisfies a direct-sum equality or inequality if, respectively,
\begin{align}
    \boldsymbol{D}(\omega\|\tau)&=D(p\| q)+\sum_xp(x)\boldsymbol{D}(\omega_x\|\tau_x), \label{direct_sum_prop} \\
    \boldsymbol{D}(\omega\|\tau)&\geq D(p\|q)+\sum_xp(x)\boldsymbol{D}(\omega_x\|\tau_x),
    \label{res:direct_sum_sandwich_renyi}
\end{align}
where $D$ is the classical relative entropy, and $\omega$ is a classical-quantum state and $\tau$ a classical-quantum PSD operator of the forms
\begin{align}
    &\omega\coloneq\sum_xp(x)\op{x}{x}\otimes\omega_x,&\tau\coloneq\sum_xq(x)\op{x}{x}\otimes\tau_x.&
    \label{def:states_in_direct_sum_prop}
\end{align}
The quantum relative entropy satisfies~\eqref{direct_sum_prop}, and the sandwiched R\'enyi relative entropy satisfies~\eqref{res:direct_sum_sandwich_renyi} when $\alpha>1$~\cite{khatri2024,eisert2022}. We say that $\boldsymbol{D}$ is tensor-product subadditive if
\begin{equation}
    \boldsymbol{D}(\rho_1 \otimes \rho_2 \| \sigma_1\otimes\sigma_2)\leq\boldsymbol{D}(\rho_1\|\sigma_1)+\boldsymbol{D}(\rho_2\|\sigma_2)
\end{equation}
 for all states $\rho_1$, $\rho_2$, $\sigma_1$, and $\sigma_2$, and we say it is tensor-power subadditive if
\begin{equation}
    \boldsymbol{D}(\rho^{\otimes n}  \| \sigma^{\otimes n})\leq n \boldsymbol{D}(\rho  \| \sigma)
\end{equation}
for all $n\in\mathbb{N}$ and for all states $\rho$ and $\sigma$.

The fidelity of states $\rho$ and $\sigma$ is defined as~\cite{uhlmann1976}
\begin{align}
    F(\rho,\sigma)&\coloneq\norm{\sqrt{\rho}\sqrt{\sigma}}_1^2,
\end{align}
where $\norm{\cdot}_1\coloneq\Tr[\abs{\cdot}]$ denotes the trace norm.

The entanglement entropy of a pure bipartite quantum state $\psi_{AB}$ is defined as~\cite{bennett_bernstein1996,bennett_divincenzo1996}
\begin{align}
    E(\psi_{AB})\coloneq H(\Tr_B[\psi_{AB}]), \label{def:entangle_entropy}
\end{align}
where $H(\cdot)\coloneqq-\Tr[(\cdot)\log_2(\cdot)]$ denotes the von Neumann entropy~\cite{Neumann1927}, $\Tr_B(\cdot)\coloneq(\ii_A\otimes\Tr)(\cdot)$ is the partial trace with respect to $B$, and $\ii_A$ is the identity operator acting on $A$. It is known that the Rains relative entropy of a pure state is equal to its entanglement entropy~\cite{audenaert2002}.

The sandwiched R\'enyi relative entropy of entanglement of a bipartite state $\rho_{AB}$ is defined as~\cite{tomamichel2017}
\begin{align}\label{def:sandwich_renyi_rel_entr_entanglement}
    \widetilde{E}_\alpha(\rho_{AB})\coloneq \inf_{\sigma_{AB}\in\operatorname{SEP}(\hilb_{AB})}\widetilde{D}_\alpha\bp{\rho_{AB}\|\sigma_{AB}},
\end{align}
where $\operatorname{SEP}(\hilb_{AB})$ is the set of separable states.

The min-entropy of a state $\rho$ is defined as~\cite{renner2006securityquantumkeydistribution}
\begin{align}
    H_{\min}(\rho)\coloneqq -\log_2\lambda_{\max}(\rho),
\end{align}
where $\lambda_{\max}(\rho)$ is the maximum eigenvalue of $\rho$. More generally, the R\'enyi entropy is defined for $\alpha \in (0,1)\cup (1,\infty)$ as
\begin{equation}
    H_\alpha(\rho) \coloneqq \frac{1}{1-\alpha} \log_2 \operatorname{Tr}[\rho^\alpha].
    \label{eq:renyi-ent-def}
\end{equation}
The von Neumann entropy is recovered in the limit $\alpha \to 1$ and the min-entropy in the limit $\alpha \to \infty$. The R\'enyi entropies are also antimonotone in the R\'enyi parameter $\alpha$; i.e., $0 \leq \alpha \leq \beta$ implies that $H_\alpha(\rho)\geq H_\beta(\rho)$.

The bipartite Rains relative entropy of a state $\rho_{AB}$ is defined as~\cite{rains2001}
\begin{align}
    R(\rho_{AB})&\coloneq\inf_{\sigma\in\mc{S}(\hilb_{AB})}\bc{D(\rho\|\sigma)+\log_2\norm{T_B(\sigma)}_1}\\
    &=\inf_{\sigma\in\operatorname{PPT}'(\hilb_{AB})}D(\rho\|\sigma),\label{eq:bi_rains_conv}\\
    \operatorname{PPT}'&\coloneq\bc{\sigma\in\mc{L}_+(\hilb_{AB}):\norm{T_B(\sigma)}_1\leq1},
\end{align}
where $T_B(\cdot)\coloneq(\ii_A\otimes T)(\cdot)$ refers to the partial transpose with respect to subsystem $B$, and~\eqref{eq:bi_rains_conv} was established by~\cite{audenaert2002}. The bipartite max-Rains relative entropy is given by~\cite{tomamichel2017}
\begin{align}
    R_{\max}(\rho)\coloneqq \inf_{\sigma\in\operatorname{PPT}'}D_{\operatorname{max}}(\rho\|\sigma).
\end{align}
The max-Rains relative entropy $R_{\max}$ can be written as the following SDPs~\cite{wang_duan2016,wang_fang_duan2019}:
\begin{align}
    R_{\max}(\rho)&=\log_2\bp{\inf_{K_{AB},L_{AB}\geq0}\bc{\Tr[K_{AB}+L_{AB}]:T_B\bb{K_{AB}-L_{AB}}\geq\rho_{AB}}}\label{eq:bi_max_rains_sdp1}\\
    &=\log_2\bp{\sup_{Y_{AB}\geq0}\bc{\Tr[Y_{AB}\rho_{AB}]:\norm{T_B[Y_{AB}]}_\infty\leq1}}.\label{eq:bi_max_rains_sdp2}
\end{align}
The bipartite max-Rains relative entropy is the limit of the sandwiched R\'enyi-Rains relative entropy as 
$\alpha\ra\infty$~\cite[Appendix 10.A.2]{khatri2024}. That is,
\begin{align}
    \lim_{\alpha\ra\infty}\widetilde{R}_\alpha(\rho)= R_{\operatorname{max}}(\rho).
\end{align}

\section{Multipartite entanglement measure definitions}

\label{sec:defs}

Here we define four multipartite entanglement measures that generalize the bipartite Rains relative entropy\footnote{ We chose the names of the entanglement measures such that the severity of the water-related phenomena (in volume of water per unit time) matches the ordering in Proposition~\ref{lem:multipartite_rains_inequalities}.}. For the following definitions, let $\boldsymbol{D}$ be a generalized divergence that satisfies the scaling property in \eqref{eq:scaling-prop}, and let  $\rho$ be a $k$-partite quantum state.

\begin{definition}[Rains]\label{def:rains}
    We define the $\boldsymbol{D}$-Rains entanglement of  $\rho$ as
\begin{align}
    \boldsymbol{R}(\rho)&\coloneq\inf_{\substack{\sigma\in\mc{S}(\hilb)\\\sigma=\sum_mr_m\omega_m}}\bc{\boldsymbol{D}(\rho\|\sigma)+\log_2\bp{\sum_mr_m\norm{T_m(\omega_m)}_1}}\label{eq:d_rains_entangle},
\end{align}
where $m$ refers to a unique bipartition of $\hilb$, $\left(r_m\right)_m$ is a probability distribution, and each $\omega_m$ is a $k$-partite state.
\end{definition}

Let us note here that, in our previous paper \cite{murray2026}, we referred to the $\boldsymbol{D}$-Rains entanglement as the genuine multipartite Rains entanglement (GMRE). 

\begin{definition}[Monsoon]\label{def:monsoon}
     We define the $\boldsymbol{D}$-monsoon entanglement as
    \begin{align}
        \boldsymbol{R}_{\operatorname{M}}(\rho)\coloneq\inf_{\rho=\sum_mp_m\rho_m}\sum_mp_m\inf_{\sigma_m\in\mc{T}_m}\boldsymbol{D}(\rho_m\|\sigma_m),\label{def:monsoon}
    \end{align}
    where $m$ denotes a bipartition, and
    \begin{align}
        \mc{T}_m\coloneq\bc{\sigma\geq0:\norm{T_m(\sigma)}_1\leq1}.\label{def:t_m_set}
    \end{align}
\end{definition}

\begin{definition}[Hurricane]\label{def:hurricane}
     We define the $\boldsymbol{D}$-hurricane entanglement as
    \begin{align}
        \boldsymbol{R}_{\operatorname{H}}(\rho)\coloneq\inf_{m}\inf_{\sigma\in\mc{T}_m}\boldsymbol{D}(\rho\|\sigma).
    \end{align}
\end{definition}
The hurricane entanglement is equal to the minimum bipartite Rains relative entropy taken over all possible bipartitions of the system.

\begin{definition}[Squall]\label{def:squall}
 We define the $\boldsymbol{D}$-squall entanglement as
\begin{align}
\boldsymbol{R}_{\operatorname{S}}(\rho) \coloneqq  \min_{\sigma \in \mathcal{T}'} \boldsymbol{D}(\rho \| \sigma),
\end{align}
where
\begin{align}
\mathcal{T}' 
\coloneqq  \Bigl\{ \sigma \ge 0  :  \left\| T_m(\sigma) \right\|_1 \le 1 \ \ \forall\, m \in \mathcal{B} \Bigr\}.\label{def:t_prime_set}
\end{align}
Here $\mathcal{B}$ denotes the set of all bipartitions of the $k$ parties.
\end{definition}

Let us begin with a lemma that provides an alternative representation of the Rains entanglement:

\begin{lemma}\label{lem:alt_rains_def}
Let $\rho$ be a $k$-partite quantum state, and 
    define 
\begin{align}
    \mc{T}(\hilb)\coloneq\bc{\tau=\sum_{m}\tau_m\in\mc{L}_+(\hilb):\tau_m\geq0\ \forall m,\sum_m\norm{T_m(\tau_m)}_1\leq1}.\label{def:t_set}
\end{align}
Then
\begin{equation}
    \boldsymbol{R}(\rho) = \inf_{\sigma\in\mc{T}(\hilb)}\boldsymbol{D}(\rho\|\sigma)\label{eq:d_rains_entangle_T}.
\end{equation}
\end{lemma}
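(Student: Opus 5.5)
We prove the equality by establishing both inequalities. The only tool needed is the scaling property~\eqref{eq:scaling-prop} of $\boldsymbol{D}$, together with a rescaling argument that moves between normalized decompositions $\sigma=\sum_m r_m\omega_m$ and subnormalized decompositions $\tau=\sum_m\tau_m$.

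\emph{Step 1: $\inf_{\tau\in\mc{T}}\boldsymbol{D}(\rho\|\tau)\leq\boldsymbol{R}(\rho)$.} Fix a feasible pair in~\eqref{eq:d_rains_entangle}, namely a state $\sigma=\sum_m r_m\omega_m$ with $(r_m)_m$ a probability distribution and each $\omega_m$ a state. Set $c\coloneq\sum_m r_m\norm{T_m(\omega_m)}_1$. Since $\norm{T_m(\omega)}_1\geq\abs{\Tr[T_m(\omega)]}=1$ for every state, we have $c\geq 1>0$. Define
\[
\tau_m\coloneq r_m\omega_m/c\geq0, \qquad \tau\coloneq\sum_m\tau_m=\sigma/c .
\]
Then $\sum_m\norm{T_m(\tau_m)}_1=1$, so $\tau\in\mc{T}(\hilb)$. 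By scaling,
\[
\boldsymbol{D}(\rho\|\tau)=\boldsymbol{D}(\rho\|\sigma)+\log_2 c,
\]
which is exactly the objective in~\eqref{eq:d_rains_entangle}. Taking the infimum over feasible pairs gives the first inequality.

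\emph{Step 2: $\boldsymbol{R}(\rho)\leq\inf_{\tau\in\mc{T}}\boldsymbol{D}(\rho\|\tau)$.} Take $\tau=\sum_m\tau_m\in\mc{T}(\hilb)$ and discard the indices with $\tau_m=0$. If $\tau=0$, then $\boldsymbol{D}(\rho\|0)=+\infty$ (for example, by scaling with $c\to0$), so nothing needs proving. Otherwise set
\[
t\coloneq\Tr[\tau]=\sum_m\Tr[\tau_m]>0, \qquad r_m\coloneq\Tr[\tau_m]/t, \qquad \omega_m\coloneq\tau_m/\Tr[\tau_m], \qquad \sigma\coloneq\tau/t=\sum_m r_m\omega_m .
\]
Then $\sigma$ is feasible in~\eqref{eq:d_rains_entangle}, and
\[
\sum_m r_m\norm{T_m(\omega_m)}_1=\frac{1}{t}\sum_m\norm{T_m(\tau_m)}_1\leq \frac1t .
\]
Hence
\[
\boldsymbol{D}(\rho\|\sigma)+\log_2\Bigl(\sum_m r_m\norm{T_m(\omega_m)}_1\Bigr)\leq\boldsymbol{D}(\rho\|\tau/t)-\log_2 t=\boldsymbol{D}(\rho\|\tau),
\]
where the last equality is scaling with $c=1/t$. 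Taking the infimum over $\tau\in\mc{T}(\hilb)$ gives the second inequality.

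Neither step is a real obstacle. The two points needing care are the degenerate cases (zero components $\tau_m$, and $\tau=0$) and the fact that the Rains objective uses a possibly non-unique decomposition of $\sigma$. That second point is harmless because both infima range over decompositions, not just over the operators $\sigma$ and $\tau$ themselves. Several bipartitions may also contribute components that sum to the same operator, but this causes no issue since the index $m$ is simply carried through both constructions.
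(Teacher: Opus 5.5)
Your proof is correct and is essentially the paper's argument in Appendix~\ref{app:d_rains_entangle_T}: both directions apply the scaling property to two rescalings, namely $\sigma\mapsto\sigma/c$ with $c=\sum_m r_m\norm{T_m(\omega_m)}_1$, and $\tau\mapsto\tau/\Tr[\tau]$ with $\omega_m=\tau_m/\Tr[\tau_m]$. Your version is slightly cleaner, since it skips the paper's unnecessary detour through subnormalized $\sigma$ in the first direction and it handles the degenerate cases $\tau_m=0$ and $\tau=0$, which the paper passes over (although for $\tau=0$ what you actually need is the convention $\boldsymbol{D}(\rho\|0)=+\infty$, which holds for every divergence used here, rather than a limit $c\to0$, because the scaling property only holds for fixed $c>0$).
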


\begin{proof}
    See Appendix~\ref{app:d_rains_entangle_T}.
\end{proof}

When the generalized divergence $\boldsymbol{D}$ is the quantum relative entropy, we denote the Rains, monsoon, hurricane, and squall entanglement by $R$, $R_{\operatorname{M}}$, $R_{\operatorname{H}}$, and $R_{\operatorname{S}}$, respectively. When $\boldsymbol{D}$ is the sandwiched R\'enyi relative entropy, we denote the sandwiched R\'enyi-Rains, monsoon, hurricane, and squall entanglement by $\widetilde{R}_\alpha$, $\widetilde{R}_{\alpha,\operatorname{M}}$, $\widetilde{R}_{\alpha,\operatorname{H}}$, and $\widetilde{R}_{\alpha,\operatorname{S}}$. One can also select $\boldsymbol{D}$ to be the geometric R\'enyi relative entropy~\cite{petz_ruskai1998,matsumoto2015new,Matsumoto2018}. The sandwiched R\'enyi-Rains entanglement converges to the Rains entanglement in the $\alpha\ra1$ limit, as stated in the following proposition.

\begin{proposition}
\label{prop:multi_rains_limit_sandwiched_renyi_rains}
    The Rains entanglement is the $\alpha\ra1$ limit of the sandwiched R\'enyi--Rains entanglement. That is, for a $k$-partite state $\rho$,
    \begin{align}
        \lim_{\alpha\ra1}\widetilde{R}_\alpha(\rho)&=R(\rho).\label{eq:rains_lim_sandwich}
    \end{align}
\end{proposition}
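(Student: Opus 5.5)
The plan is to work with the representation of Lemma~\ref{lem:alt_rains_def}, which writes both quantities as infima over the \emph{same} set $\mc{T}(\hilb)$:
\begin{align}
\widetilde{R}_\alpha(\rho)=\inf_{\sigma\in\mc{T}(\hilb)}\widetilde{D}_\alpha(\rho\|\sigma),\qquad R(\rho)=\inf_{\sigma\in\mc{T}(\hilb)}D(\rho\|\sigma).
\end{align}
The problem then becomes exchanging a limit in $\alpha$ with an infimum over $\sigma$. The two one-sided limits $\alpha\downarrow1$ and $\alpha\uparrow1$ need different arguments, because $\widetilde{D}_\alpha$ is monotone increasing in $\alpha$.

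\emph{Limit from above ($\alpha\downarrow1$).} Monotonicity in $\alpha$ makes $\alpha\mapsto\widetilde{R}_\alpha(\rho)$ nondecreasing, so the limit exists. Infima commute, and $\widetilde{D}_\alpha(\rho\|\sigma)$ decreases to $D(\rho\|\sigma)$ as $\alpha\downarrow1$ by \eqref{res:limit_sandwich_renyi_quantum_rel_entropy}. Hence
\begin{align}
\lim_{\alpha\downarrow1}\widetilde{R}_\alpha(\rho)=\inf_{\alpha>1}\inf_{\sigma\in\mc{T}}\widetilde{D}_\alpha(\rho\|\sigma)=\inf_{\sigma\in\mc{T}}\inf_{\alpha>1}\widetilde{D}_\alpha(\rho\|\sigma)=\inf_{\sigma\in\mc{T}}D(\rho\|\sigma)=R(\rho).
\end{align}
This direction requires nothing beyond monotonicity and the pointwise limit.

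\emph{Limit from below ($\alpha\uparrow1$).} Monotonicity immediately gives $\lim_{\alpha\uparrow1}\widetilde{R}_\alpha(\rho)\le R(\rho)$. The reverse inequality is the main obstacle, since a supremum over $\alpha$ does not commute with an infimum over $\sigma$ in general. I will use a compactness (Dini-type) argument, in three steps.

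First, I show that $\mc{T}(\hilb)$ is compact. For each $m$, $\Tr\tau_m=\Tr T_m(\tau_m)\le\norm{T_m(\tau_m)}_1$, so every $\tau\in\mc{T}$ satisfies $\Tr\tau\le1$ and $\mc{T}$ is bounded. For closedness, take a convergent sequence in $\mc{T}$. Each decomposition $(\tau_m)_m$ lies in a bounded set, so I can extract a subsequence along which every $\tau_m$ converges. The trace norm is continuous, so the limit stays in $\mc{T}$.

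Second, for each fixed $\alpha\in[1/2,1)$, I check that $\sigma\mapsto\widetilde{D}_\alpha(\rho\|\sigma)$ is lower semicontinuous on $\mc{T}$. Indeed, $\sigma\mapsto\Tr[(\sigma^{(1-\alpha)/2\alpha}\rho\,\sigma^{(1-\alpha)/2\alpha})^\alpha]$ is continuous. Taking $-\log$ of it and multiplying by the positive factor $1/(1-\alpha)$ gives an extended-real-valued lower semicontinuous function.

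Third, I run the compactness argument. Pick $\alpha_n\uparrow1$ and minimizers $\sigma_n\in\mc{T}$, which exist by compactness and lower semicontinuity. Pass to a subsequence with $\sigma_n\to\sigma^\ast\in\mc{T}$. For any fixed $\beta<1$ and all $n$ with $\alpha_n\ge\beta$, monotonicity gives $\widetilde{D}_{\alpha_n}(\rho\|\sigma_n)\ge\widetilde{D}_\beta(\rho\|\sigma_n)$. Lower semicontinuity in $\sigma$ then yields
\begin{align}
\lim_n\widetilde{R}_{\alpha_n}(\rho)\ge\widetilde{D}_\beta(\rho\|\sigma^\ast).
\end{align}
Letting $\beta\uparrow1$ and using \eqref{res:limit_sandwich_renyi_quantum_rel_entropy} gives $\lim\widetilde{R}_{\alpha_n}(\rho)\ge D(\rho\|\sigma^\ast)\ge R(\rho)$.

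In this last step I must verify that the $\alpha\uparrow1$ convergence to $D(\rho\|\sigma)$ also holds when $\operatorname{supp}\rho\not\subseteq\operatorname{supp}\sigma$, with both sides equal to $+\infty$. This holds because the $\alpha<1$ quantities then blow up as $\alpha\uparrow1$, or, more simply, because $\sup_{\alpha<1}\widetilde{D}_\alpha=D$ is the standard statement of \eqref{res:limit_sandwich_renyi_quantum_rel_entropy} extended to $[0,\infty]$. Combining the two one-sided limits proves \eqref{eq:rains_lim_sandwich}.
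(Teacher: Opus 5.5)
Your proposal is correct and follows the paper's proof. The $\alpha\downarrow1$ limit is handled identically by commuting two infima. For $\alpha\uparrow1$, the paper invokes the Mosonyi--Hiai minimax theorem using exactly your ingredients (compactness of $\mc{T}$, lower semicontinuity in $\sigma$, monotonicity in $\alpha$), and your Dini-type subsequence argument is essentially the proof of that minimax step written out inline. Your version is slightly more self-contained: it also proves that $\mc{T}$ is compact and checks the $\alpha\uparrow1$ blow-up when $\operatorname{supp}\rho\not\subseteq\operatorname{supp}\sigma$, both of which the paper takes for granted.
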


\begin{proof}
    We will show that this limit holds from the right and then from the left. 
    First, consider that
    \begin{align}
    R(\rho)&=\inf_{\tau\in\mc{T}(\hilb_k)}D(\rho\|\tau) \\
    &=\inf_{\tau\in\mc{T}(\hilb_k)}\lim_{\alpha\ra1^+}\widetilde{D}_\alpha(\rho\|\tau) \\
    &=\inf_{\tau\in\mc{T}(\hilb_k)}\inf_{\alpha\in(1,\infty)}\widetilde{D}_\alpha(\rho\|\tau) \\
    &=\inf_{\alpha\in(1,\infty)}\inf_{\tau\in\mc{T}(\hilb_k)}\widetilde{D}_\alpha(\rho\|\tau) \\
    &=\lim_{\alpha\ra1^+}\widetilde{R}_\alpha(\rho),\label{res:limit_multi_renyi_rains_right}
\end{align}
where~\eqref{res:limit_sandwich_renyi_quantum_rel_entropy} was used for the second line, and we used the fact that $\widetilde{D}_\alpha$ is monotonically increasing in $\alpha$ for $\alpha\in(0,1)\cup(1,\infty)$~\cite[Prop.~7.32]{khatri2024} to replace $\lim_{\alpha\ra1^+}$ with $\inf_{\alpha\in(1,\infty)}$ for the third line. Now, consider that
\begin{align}
    R(\rho)&=\inf_{\tau\in\mc{T}(\hilb_k)}D(\rho\|\tau) \\
    &=\inf_{\tau\in\mc{T}(\hilb_k)}\lim_{\alpha\ra1^-}\widetilde{D}_\alpha(\rho\|\tau) \\
    &=\inf_{\tau\in\mc{T}(\hilb_k)}\sup_{\alpha\in[1/2,1)}\widetilde{D}_\alpha(\rho\|\tau) \\
     &=\sup_{\alpha\in[1/2,1)}\inf_{\tau\in\mc{T}(\hilb_k)}\widetilde{D}_\alpha(\rho\|\tau) \\
      &=\lim_{\alpha\ra1^-}\widetilde{R}_\alpha(\rho).\label{res:limit_multi_renyi_rains_left}
\end{align}
Here, we used~\eqref{res:limit_sandwich_renyi_quantum_rel_entropy} for the second line. We also employed the Mosonyi--Hiai minimax theorem (\cite[Corollary A.2]{mosonyi_hiai2011}; see also~\cite[Theorem 2.26]{khatri2024}), which, along with the fact that $\mc{T}(\hilb_k)$ is compact, is applicable due to the function $(\tau,\alpha)\ra\widetilde{D}_\alpha(\rho\|\tau)$ being lower semi-continuous in $\tau$ and monotonically increasing in $\alpha$ for $\alpha\in(0,1)\cup(1,\infty)$. With~\eqref{res:limit_multi_renyi_rains_right} and~\eqref{res:limit_multi_renyi_rains_left}, we conclude the proof of~\eqref{eq:rains_lim_sandwich}.
\end{proof}

The following lemma states that the set $\mc{T}$ defined in~\eqref{def:t_set} can be written in terms of semidefinite constraints. The proof follows the same approach from \cite[Lemma~12]{fang_wang_tomamichel_duan2019}.
\begin{lemma}\label{lem:T_semidef_constraint}
    Suppose $\tau=\sum_m\widetilde{\tau}_m$ where $\widetilde{\tau}_m\geq0$ for all $m$. Then, $\tau\in\mc{T}$ if and only if, for each $m$, there exist $\widetilde{\tau}_m^+,\widetilde{\tau}_m^-\geq0$ such that $T_m(\widetilde{\tau}_m)=\widetilde{\tau}_m^+-\widetilde{\tau}_m^-$ and $\sum_m\Tr[\widetilde{\tau}_m^++\widetilde{\tau}_m^-]\leq1$.
\end{lemma}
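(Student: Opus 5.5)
The plan is to reduce everything to a single variational characterization of the trace norm of a Hermitian operator, applied separately to each bipartition $m$. Membership in $\mc{T}$ is witnessed by a decomposition $\tau=\sum_m\widetilde{\tau}_m$ with $\sum_m\norm{T_m(\widetilde{\tau}_m)}_1\leq1$. I read the lemma as saying that, for the given decomposition $(\widetilde{\tau}_m)_m$, this norm constraint holds if and only if the stated semidefinite constraints hold. Taking the union over all decompositions then gives the characterization of $\mc{T}$ itself.

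The key auxiliary fact is the following. For every Hermitian $X$,
\begin{align}
\norm{X}_1=\min\bc{\Tr[P+N]: X=P-N,\ P,N\geq0}.
\end{align}
I will prove it in two steps. First, for any feasible $P,N$, the triangle inequality gives $\norm{X}_1\leq\norm{P}_1+\norm{N}_1=\Tr[P+N]$, since $P,N\geq0$. Second, the Jordan decomposition $X=X^+-X^-$ is feasible: here $X^\pm\geq0$ are the positive and negative parts with orthogonal supports, so $\Tr[X^++X^-]=\Tr\abs{X}=\norm{X}_1$, and the minimum is attained. The operators $T_m(\widetilde{\tau}_m)$ are Hermitian because each $\widetilde{\tau}_m\geq0$ is Hermitian and the partial transpose preserves Hermiticity.

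For the ``only if'' direction, I set $\widetilde{\tau}_m^\pm\coloneq(T_m(\widetilde{\tau}_m))^\pm$, the Jordan parts. These are positive semidefinite and satisfy $T_m(\widetilde{\tau}_m)=\widetilde{\tau}_m^+-\widetilde{\tau}_m^-$. Summing the equalities $\Tr[\widetilde{\tau}_m^++\widetilde{\tau}_m^-]=\norm{T_m(\widetilde{\tau}_m)}_1$ over $m$ gives a total of at most $1$. For the ``if'' direction, the triangle inequality gives
\begin{align}
\sum_m\norm{T_m(\widetilde{\tau}_m)}_1\leq\sum_m\Tr[\widetilde{\tau}_m^++\widetilde{\tau}_m^-]\leq1,
\end{align}
so $\tau\in\mc{T}$ as defined in~\eqref{def:t_set}.

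There is no real obstacle here; the argument follows \cite[Lemma~12]{fang_wang_tomamichel_duan2019}. The only point that needs care is the bookkeeping about which decomposition $(\widetilde{\tau}_m)_m$ witnesses membership. If the constraint is meant to hold for some decomposition rather than the given one, the same argument applies verbatim to whichever decomposition is used, with $\widetilde{\tau}_m$ treated as additional SDP variables.
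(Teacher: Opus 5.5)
Your proof is correct and follows the paper's argument exactly. The forward direction takes the Jordan--Hahn decomposition of each $T_m(\widetilde{\tau}_m)$, and the converse uses the triangle inequality. Your remark about which decomposition witnesses membership in $\mc{T}$ is well taken: the paper's proof likewise silently reads $\tau\in\mc{T}$ as the norm constraint holding for the given decomposition $(\widetilde{\tau}_m)_m$, which is how the lemma is used in Lemma~\ref{lem:sdp_max_gmre}.
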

\begin{proof}
    Let $\tau=\sum_m\widetilde{\tau}_m$ where $\widetilde{\tau}_m\geq0$ for all $m$. 
    
    ($\Rightarrow$) Suppose that $\tau\in\mc{T}$. Each $T_m(\widetilde{\tau}_m)$ has a Jordan--Hahn decomposition given by
    \begin{align}
        T_m(\widetilde{\tau}_m)&=\widetilde{\tau}_m^+-\widetilde{\tau}_m^-,
    \end{align}
    where $\widetilde{\tau}_m^+,\widetilde{\tau}_m^-\geq0$, and $\widetilde{\tau}_m^+$ and $\widetilde{\tau}_m^-$ are supported on orthogonal subspaces. Then, 
    \begin{align}
        \sum_m\Tr[\widetilde{\tau}_m^++\widetilde{\tau}_m^-]&= \sum_m\norm{T_m(\widetilde{\tau}_m)}_1 \\
        &\leq1.
    \end{align}

    ($\Leftarrow$) Suppose that there exist $\widetilde{\tau}_m^+,\widetilde{\tau}_m^-\geq0$ such that $T_m(\widetilde{\tau}_m)=\widetilde{\tau}_m^+-\widetilde{\tau}_m^-$ and $\sum_m\Tr[\widetilde{\tau}_m^++\widetilde{\tau}_m^-]\leq1$. Then, using the triangle inequality, consider that
    \begin{align}
        \sum_m\norm{T_m(\widetilde{\tau}_m)}_1&=\sum_m\norm{\widetilde{\tau}_m^+-\widetilde{\tau}_m^-}_1 \\
        &\leq\sum_m\norm{\widetilde{\tau}_m^+}_1+\norm{\widetilde{\tau}_m^-}_1 \\
        &=\sum_m \Tr[\widetilde{\tau}_m^++\widetilde{\tau}_m^-] \\
        &\leq1,
    \end{align}
    concluding the proof.
\end{proof}
The same approach as Lemma~\ref{lem:T_semidef_constraint} can be employed to write $\mc{T}_m$ in~\eqref{def:t_m_set} and $\mc{T}'$ in~\eqref{def:t_prime_set} in terms of semidefinite constraints. In conjunction with existing relative entropy optimization methods~\cite{koßmann_schwonnek2025,he_saunderson_fawzi2025,cvxquad}, the Rains, hurricane, and squall entanglement can be computed using semidefinite programming. See Appendix~\ref{app:Rains-algs} for an alternative conditional gradient (Frank--Wolfe) method for computing the Rains entanglement.

The following theorem states that there is an equivalent way to write the monsoon entanglement as a mixed convex roof of the hurricane entanglement.
\begin{theorem}
    Let $\rho$ be a $k$-partite state. If $\boldsymbol{D}$ is jointly convex, then the $\boldsymbol{D}$-monsoon entanglement of $\rho$ can be written as
    \begin{align}
        \boldsymbol{R}_{\operatorname{M}}(\rho)&=\inf_{\rho=\sum_ip_i\rho_i}\sum_ip_i\boldsymbol{R}_{\operatorname{H}}(\rho_i),\label{eq:monsoon_mcr}
    \end{align}
    where the infimum is taken over all mixed-state decompositions of $\rho$.
\end{theorem}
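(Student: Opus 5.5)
We prove the identity~\eqref{eq:monsoon_mcr} by establishing the two inequalities separately. Throughout, write $f(\omega,m)\coloneq\inf_{\sigma\in\mc{T}_m}\boldsymbol{D}(\omega\|\sigma)$ for the bipartite $\boldsymbol{D}$-Rains quantity across bipartition $m$. With this notation, Definition~\ref{def:hurricane} reads $\boldsymbol{R}_{\operatorname{H}}(\omega)=\inf_m f(\omega,m)$, and Definition~\ref{def:monsoon} reads $\boldsymbol{R}_{\operatorname{M}}(\rho)=\inf_{\rho=\sum_m p_m\rho_m}\sum_m p_m f(\rho_m,m)$. In the monsoon, the decomposition is indexed by bipartitions. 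In the right-hand side of~\eqref{eq:monsoon_mcr}, the decomposition is arbitrary and each component picks its best bipartition.

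\textbf{Step 1 (RHS $\le$ LHS).} Take any decomposition $\rho=\sum_m p_m\rho_m$ appearing in the monsoon. It is in particular a mixed-state decomposition of $\rho$ with index $i=m$. Since $f(\rho_m,m)\ge\inf_{m'}f(\rho_m,m')=\boldsymbol{R}_{\operatorname{H}}(\rho_m)$, we get $\sum_m p_m f(\rho_m,m)\ge\sum_m p_m\boldsymbol{R}_{\operatorname{H}}(\rho_m)\ge$ RHS. Taking the infimum over such decompositions gives the claim. This direction does not use convexity.

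\textbf{Step 2 (LHS $\le$ RHS).} Fix a decomposition $\rho=\sum_i p_i\rho_i$ and $\varepsilon>0$. For each $i$, choose a bipartition $m(i)$ with $f(\rho_i,m(i))\le\boldsymbol{R}_{\operatorname{H}}(\rho_i)+\varepsilon$; this is possible because the set of bipartitions is finite. Then group the components by bipartition: set $q_m\coloneq\sum_{i:m(i)=m}p_i$, and for $q_m>0$ set $\bar\rho_m\coloneq q_m^{-1}\sum_{i:m(i)=m}p_i\rho_i$. This gives $\rho=\sum_m q_m\bar\rho_m$, which is an admissible monsoon decomposition.

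It remains to show that $f(\cdot,m)$ is convex, so that $f(\bar\rho_m,m)\le q_m^{-1}\sum_{i:m(i)=m}p_i f(\rho_i,m)$. For this we pick near-optimal $\sigma_i\in\mc{T}_m$ for each $\rho_i$. The set $\mc{T}_m$ is convex because $\norm{T_m(\cdot)}_1$ is convex and positivity is preserved under mixing. Hence $\bar\sigma_m\coloneq q_m^{-1}\sum_{i:m(i)=m} p_i\sigma_i\in\mc{T}_m$. Joint convexity of $\boldsymbol{D}$ then gives $\boldsymbol{D}(\bar\rho_m\|\bar\sigma_m)\le q_m^{-1}\sum_{i:m(i)=m} p_i\boldsymbol{D}(\rho_i\|\sigma_i)$. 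Summing with weights $q_m$ yields $\boldsymbol{R}_{\operatorname{M}}(\rho)\le\sum_i p_i\boldsymbol{R}_{\operatorname{H}}(\rho_i)+2\varepsilon$. Letting $\varepsilon\to0$ and taking the infimum over decompositions gives LHS $\le$ RHS.

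\textbf{Main obstacle.} The only nontrivial ingredient is the convexity of $f(\cdot,m)$ in Step 2. This is exactly where the joint convexity hypothesis on $\boldsymbol{D}$ enters, together with the convexity of $\mc{T}_m$. A minor technicality is that the decomposition may be countable or infinite. The grouping argument still works because only finitely many bipartitions exist: each group is a convergent sum, and joint convexity extends to countable mixtures by a limiting argument, or one can first restrict to finite decompositions.
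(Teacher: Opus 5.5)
Your proof is correct and follows the paper's argument. The easy direction treats a bipartition-indexed decomposition as a general one, and the reverse direction groups the components by their best bipartition and combines joint convexity of $\boldsymbol{D}$ with convexity of $\mc{T}_m$; your $\varepsilon$-approximate choices are slightly cleaner than the paper's version, which tacitly assumes that optimal decompositions and optimal operators exist and invokes Carath\'eodory to get finitely many terms.
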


\begin{proof}
    Let $\rho=\sum_mp_m^*\rho_m^*$ be an optimal decomposition of $\rho$ for the monsoon entanglement, and let $\sigma_m^*\in\mc{T}_m$ be an associated optimal operator for each $m$ in~\eqref{def:monsoon}. Then,
    \begin{align}
        \boldsymbol{R}_{\operatorname{M}}(\rho)&=\sum_mp_m^*\boldsymbol{D}(\rho_m^*\|\sigma_m^*)\\
        &\geq \inf_{\rho=\sum_ip_i\rho_i}\sum_ip_i\inf_{m_i}\inf_{\sigma_i\in\mc{T}_{m_i}}\boldsymbol{D}(\rho_i\|\sigma_i)\\
        &=\inf_{\rho=\sum_ip_i\rho_i}\sum_ip_i\boldsymbol{R}_{\operatorname{H}}(\rho_i).\label{eq:monsoon_mcr_ineq1}
    \end{align}
    The inequality follows from taking an infimum over all possible mixed-state decompositions, not just those of the form $\rho=\sum_mp_m\rho_m$ as in~\eqref{def:monsoon}.
    The last equality follows from the definition of the hurricane entanglement in~\eqref{def:hurricane}.
    
    Now, let $\rho=\sum_ip_i^*\rho_i^*$ be an optimal mixed-state decomposition of $\rho$ for the right-hand side of~\eqref{eq:monsoon_mcr}, with $\sigma_i^*$ and $m_i^*$ being associated optimal operators and bipartitions for each $i$. Applying Caratheodory's theorem, we know that there are a finite number of terms in an optimal mixed-state decomposition of $\rho$. Let $N=2^{k-1}-1$ denote the number of unique bipartitions. Then,
    \begin{align}
        \sum_ip_i^*\boldsymbol{D}(\rho_i^*\|\sigma_i^*)&=\sum_{i:\sigma_i^*\in\mc{T}_{m_1}}p_i^*\boldsymbol{D}(\rho_i^*\|\sigma_i^*)+\sum_{\substack{i:\sigma_i^*\in\mc{T}_{m_2},\\\sigma_i^*\notin\mc{T}_{m_1}}}p_i^*\boldsymbol{D}(\rho_i^*\|\sigma_i^*)+\cdots+\sum_{\substack{i:\sigma_i^*\in\mc{T}_{m_N},\\\sigma_i^*\notin\mc{T}_{m_1},\ldots,\mc{T}_{m_{N-1}}}}p_i^*\boldsymbol{D}(\rho_i^*\|\sigma_i^*)\\
        &\geq \sum_{j=1}^N\widetilde{p}_{m_j}\boldsymbol{D}\bp{\widetilde{\rho}_{m_j}\|\widetilde{\sigma}_{m_j}}\label{eq:monsoon_mcr1}\\
        &\geq\boldsymbol{R}_{\operatorname{M}}.\label{eq:monsoon_mcr_ineq2}
    \end{align}
    In the first equality, we grouped the sum into $N$ sums depending on the feasible set(s) that each $\sigma_i^*$ belongs to. To arrive at~\eqref{eq:monsoon_mcr1}, we used the joint convexity of $\boldsymbol{D}$. In~\eqref{eq:monsoon_mcr1}, we define for each $j=1,\ldots,N$
    \begin{align}
        \widetilde{\rho}_{m_j}&\coloneq\bp{\sum_{\substack{i:\sigma_i^*\in\mc{T}_{m_j},\\\sigma_i^*\notin\mc{T}_{m_k}\,\forall\, k<j}}p_i^*}^{-1}\sum_{\substack{i:\sigma_i^*\in\mc{T}_{m_j},\\\sigma_i^*\notin\mc{T}_{m_k}\,\forall\, k<j}}p_i^*\rho_i^*,\\
        \widetilde{\sigma}_{m_j}&\coloneq\bp{\sum_{\substack{i:\sigma_i^*\in\mc{T}_{m_j},\\\sigma_i^*\notin\mc{T}_{m_k}\,\forall\, k<j}}p_i^*}^{-1}\sum_{\substack{i:\sigma_i^*\in\mc{T}_{m_j},\\\sigma_i^*\notin\mc{T}_{m_k}\,\forall\, k<j}}p_i^*\sigma_i^*,\\
        \widetilde{p}_{m_j}&\coloneq\sum_{\substack{i:\sigma_i^*\in\mc{T}_{m_j},\\\sigma_i^*\notin\mc{T}_{m_k}\,\forall\, k<j}}p_i^*.
    \end{align}
    Because each set $\mc{T}_m$ is convex, $\widetilde{\sigma}_{m_j}\in\mc{T}_{m_j}$ for each $j=1,\ldots,N$. We conclude the proof of~\eqref{eq:monsoon_mcr} by combining~\eqref{eq:monsoon_mcr_ineq1} and~\eqref{eq:monsoon_mcr_ineq2}.
\end{proof}

\section{Multipartite entanglement measure properties}\label{sec:props}
Here, we discuss various properties of the collection of multipartite Rains entanglement measures. 
\begin{proposition}\label{prop:basic_props}
    The Rains, monsoon, hurricane, and squall entanglement are nonnegative and monotone under PPT channels. The Rains and monsoon entanglement are equal to zero for biseparable states, the hurricane entanglement is equal to zero for partially separable states (states separable with respect to a fixed bipartition), and the squall entanglement is equal to zero for fully-separable states.
\end{proposition}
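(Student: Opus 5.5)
Three claims need to be checked for each of the four measures: nonnegativity, monotonicity under PPT channels, and vanishing on the relevant separable class. For each, I will use the most convenient form of the measure: the $\mc{T}$-representation of Lemma~\ref{lem:alt_rains_def} for the Rains entanglement, and the sets $\mc{T}_m$ and $\mc{T}'$ for the others.

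\textbf{Nonnegativity.} Every feasible $\sigma$ in $\mc{T}$, $\mc{T}_m$ or $\mc{T}'$ has $\Tr[\sigma]\le 1$. Indeed, for $\tau=\sum_m\tau_m\in\mc{T}$,
\begin{align}
\Tr[\tau]=\sum_m\Tr[T_m(\tau_m)]\le\sum_m\norm{T_m(\tau_m)}_1\le 1,
\end{align}
because partial transposition preserves the trace. Write $\sigma=c\,\hat\sigma$ with $c=\Tr[\sigma]\in(0,1]$ and $\hat\sigma$ a state. The scaling property \eqref{eq:scaling-prop} gives $\boldsymbol{D}(\rho\|\sigma)=\boldsymbol{D}(\rho\|\hat\sigma)-\log_2 c\ge \boldsymbol{D}(\rho\|\hat\sigma)$. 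Data processing under the trace channel then gives $\boldsymbol{D}(\rho\|\hat\sigma)\ge\boldsymbol{D}(1\|1)$, and scaling forces $\boldsymbol{D}(1\|1)=0$. The monsoon entanglement is an average of nonnegative quantities, so it is nonnegative as well.

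\textbf{PPT monotonicity.} Let $\mc{P}$ be a PPT channel. For every bipartition $m$, the map $T_m\circ\mc{P}\circ T_m$ is CPTP, since the PPT property with respect to the full partition implies it for any coarse-graining. For $\sigma\ge0$ this gives
\begin{align}
\norm{T_m(\mc{P}(\sigma))}_1=\norm{(T_m\circ\mc{P}\circ T_m)(T_m(\sigma))}_1\le\norm{T_m(\sigma)}_1,
\end{align}
using trace-norm contraction under positive trace-preserving maps; also $\mc{P}(\sigma)\ge0$. Hence $\mc{P}$ maps $\mc{T}_m$ into $\mc{T}_m$ and $\mc{T}'$ into $\mc{T}'$. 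It also maps $\mc{T}$ into $\mc{T}$, by applying the bound termwise to each $\tau_m$. Data processing then yields monotonicity of the Rains, hurricane and squall entanglement. For the monsoon entanglement, a decomposition $\rho=\sum_m p_m\rho_m$ maps to $\mc{P}(\rho)=\sum_m p_m\mc{P}(\rho_m)$, and each $\sigma_m\in\mc{T}_m$ maps to $\mc{P}(\sigma_m)\in\mc{T}_m$, so the infimum can only decrease.

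\textbf{Vanishing on separable classes.} For each measure I exhibit a feasible $\sigma$ equal to $\rho$ itself; then $\boldsymbol{D}(\rho\|\rho)=0$, which follows from scaling and data processing as above. The key input is that a state separable across a bipartition $m$ has PPT across $m$, so $T_m(\omega)\ge0$ and $\norm{T_m(\omega)}_1=\Tr[\omega]$.
\begin{itemize}
\item \emph{Rains.} If $\rho$ is biseparable, write $\rho=\sum_m r_m\omega_m$ with each $\omega_m$ separable across $m$. Taking $\tau_m=r_m\omega_m$ gives $\sum_m\norm{T_m(\tau_m)}_1=1$, so $\rho\in\mc{T}$.
\item \emph{Monsoon.} For a biseparable $\rho$, use the same decomposition and choose $\sigma_m=\omega_m\in\mc{T}_m$.
\item \emph{Hurricane.} If $\rho$ is separable across a fixed $m$, then $\rho\in\mc{T}_m$.
\item \emph{Squall.} A fully separable $\rho$ is separable across every bipartition, so $\rho\in\mc{T}'$.
\end{itemize}

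The main obstacle is the PPT step. I need to state carefully which notion of PPT channel is intended (PPT with respect to every party), and verify that this implies $T_m\circ\mc{P}\circ T_m$ is completely positive for every bipartition $m$. This holds because $T_m$ is a composition of single-party transposes, each of which the PPT condition for every party controls. I also need trace-norm contraction for positive trace-preserving maps applied to Hermitian inputs. Everything else reduces to data processing and the scaling property.
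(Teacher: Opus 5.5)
Your route is the same as the paper's: trace at most one for nonnegativity, closure of $\mc{T}_m$ and $\mc{T}'$ (and, in your version, also $\mc{T}$) under PPT channels followed by data processing, and the choice $\sigma=\rho$ for the vanishing claims. The genuine problem is the step you flag as the main obstacle. You take ``PPT channel'' to mean that $T_i\circ\mc{P}\circ T_i$ is completely positive for each single party $i$. You then argue that this gives complete positivity of $T_m\circ\mc{P}\circ T_m$ for every bipartition $m$, because $T_m$ is a product of single-party transposes. That inference does not hold. Write $T_{AC}\circ\mc{P}\circ T_{AC}=T_A\circ(T_C\circ\mc{P}\circ T_C)\circ T_A$: complete positivity of $T_A\circ\mc{P}\circ T_A$ and of $T_C\circ\mc{P}\circ T_C$ says nothing about conjugating the CP map $T_C\circ\mc{P}\circ T_C$ by $T_A$. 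For $k=3$ every bipartition is a single-party cut, so nothing goes wrong there. For $k\geq 4$ the implication fails. For example, with Pauli operators on qubits $A,B,C,D$, the state
\begin{align*}
16\,\rho_0&=\ii+\tfrac13\bigl(Z_AZ_B+Z_CZ_D-Z_AZ_BZ_CZ_D\bigr)\\
&\quad+\tfrac14\bigl(X_AX_BX_CX_D-Y_AY_BX_CX_D-X_AX_BY_CY_D+Y_AY_BY_CY_D\bigr)
\end{align*}
is PPT across each of the four single-qubit cuts, while $T_{AC}(\rho_0)$ has eigenvalue $-\tfrac{1}{16}$. The replacement channel $\omega\mapsto\Tr[\omega]\rho_0$ then satisfies your single-party condition. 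Yet it maps a product state, whose squall entanglement is zero, to $\rho_0\notin\mc{T}'$, which has strictly positive squall entanglement. So under your notion the squall monotonicity claim is false, not merely unproved.

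The remedy is to adopt the paper's definition: $\mc{P}$ is completely PPT-preserving when $T_{m'}\circ\mc{P}\circ T_m$ is completely positive for every bipartition $m$. The property you need then holds by definition. The rest of your monotonicity argument then goes through: the trace-norm contraction, the termwise argument for $\mc{T}$, and the decomposition argument for the monsoon entanglement. The termwise argument also covers the Rains case, which the paper's proof of this proposition leaves implicit.

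A smaller issue is that scaling and data processing do not force $\boldsymbol{D}(1\|1)=0$, since $D+c$ for any constant $c$ satisfies both. Your nonnegativity argument and your claim $\boldsymbol{D}(\rho\|\rho)=0$ in the vanishing part therefore need the normalization $\boldsymbol{D}(1\|1)=0$ as a separate hypothesis. It holds for the quantum relative entropy, which is the case the paper treats via Klein's inequality. It also holds for the sandwiched R\'enyi relative entropies.
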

\begin{proof}
    A channel $\mc{P}$ is completely PPT-preserving if $T_{m'}\circ\mc{P}\circ T_m$ is completely positive for each unique bipartition $m$~\cite{ishizaka2005}. Showing that $\mc{T}_m$ is closed under completely PPT-preserving channels is analogous to bipartite case (see~\cite[Eq.~(15)]{eisert2022}). Similarly, if $\sigma\in\mc{T}'$ and $m$ is a fixed and arbitrary bipartition, then
    \begin{align}
        \norm{T_{m'}(\mc{P}(\sigma))}_1&=\norm{T_{m'}\circ\mc{P}\circ T_m(T_m(\sigma))}_1\\
        &\leq\norm{T_m(\sigma)}_1\leq1,
    \end{align}
    where we used the fact that the partial transpose is self-inverse, that $T_{m'}\circ\mc{P}\circ T_m$ is a channel, and that the trace norm is non-increasing under the action of a channel. Thus $\mc{P}(\sigma)\in\mc{T}'(\hilb')$. With an application of the data-processing inequality, we conclude that the monsoon, hurricane, and squall entanglement are monotone under PPT channels.

    Observe that, because the partial transpose is trace preserving, all elements of $\mc{T}'$ and $\mc{T}_m$ have trace no greater than one. Moreover, if $\sigma\in\mc{T}$ with $\sigma=\sum_mq_m\sigma_m$, then
    \begin{align}
        \Tr[\sigma]&=\sum_mq_m\Tr[\sigma_m] \\
        &=\sum_mq_m\Tr[T_m(\sigma_m)] \\
        &\leq\sum_mq_m\norm{T_m(\sigma_m)}_1\leq1.\label{trace_element_t_leq_1}
    \end{align}
    Nonnegativity of the Rains, monsoon, hurricane, and squall entanglement follows from Klein's inequality (see, e.g., \cite[Prop.~7.3]{khatri2024}).

    To see that the Rains entanglement is equal to zero for biseparable states, note that all biseparable states are PPT mixtures with unit trace, implying that they are elements of $\mc{T}$. Thus, if $\rho$ is biseparable, selecting $\sigma=\rho$ in~\eqref{eq:d_rains_entangle_T} attains $R(\rho)=0$, which is the minimum value of the Rains entanglement because it is nonnegative. To see that the monsoon entanglement is zero for biseparable states, consider that a biseparable state $\rho$ has a decomposition of the form $\rho=\sum_mp_m\rho_m$ where each $\rho_m$ is separable with respect to $m$, and thus an element of $\mc{T}_m$ (because a state that is separable with respect to $m$ is PPT with respect to $m$). By selecting $\sigma_m=\rho_m$ in~\eqref{def:monsoon} for each $m$ and using the fact that the monsoon entanglement is nonnegative, we find that $R_{\operatorname{M}}(\rho)=0$. If $\rho$ is partially separable with respect to $m$, then $\rho\in\mc{T}_m$, and if $\rho$ is fully separable, then $\rho\in\mc{T}'$. Employing the same argument, we conclude that $R_{\operatorname{H}}(\rho)=0$ if $\rho$ is partially separable and $R_{\operatorname{S}}(\rho)=0$ if $\rho$ is fully separable.
\end{proof}
These basic properties characterize the type of multipartite entanglement each entanglement measure detects. For example, a state containing any negative partial transpose (NPT) entanglement will be detected by the squall entanglement. The Rains and monsoon entanglement are measures of GME, while the hurricane and squall entanglement are not.

We will now discuss the selective PPT monotonicity of the Rains entanglement.
\begin{definition}
\label{def:selective_ppt_op}
    Let $\operatorname{CP}(\hilb,\hilb')$ denote the set of completely positive maps between $\hilb$ and $\hilb'$. We say that a tuple $\left(\mc{P}_x\right)_x$ is a selective multipartite PPT operation if 
    \begin{enumerate}
        \item $\mc{P}_x\in\operatorname{CP}(\hilb_{k},\hilb'_{k})$ for all $x$,
        \item $T_{m}\circ\mc{P}_x\circ T_{m}\in\operatorname{CP}(\hilb_{k},\hilb'_{k})$ for all $x$ and for all $m$,     where $m$ labels corresponding bipartitions of $\hilb_k$ and $\hilb'_k$.    
        \item The sum map $\sum_x\mc{P}_x$ is trace preserving.
    \end{enumerate}
\end{definition}
Given $(\mc{P}_x)_x$ and a $k$-partite state $\sigma$, define
\begin{align}
        \sigma_x&\coloneq\frac{\mc{P}_x(\sigma)}{q_x},\\
        q_x&\coloneq\Tr[\mc{P}_x(\sigma)],\\
        \sigma_{m,x}&\coloneq\frac{\mc{P}_x(\omega_m)}{q_{x|m}},\\
        q_{x|m}&\coloneq\Tr[\mc{P}_x(\omega_m)].
    \end{align}
We will show that selective PPT operations preserve PPT mixtures. That is, if $\sigma$ is a PPT mixture given by $\sigma=\sum_mr_m\omega_m$ where each $\omega_m$ is PPT with respect to $m$, we will show that each $\sigma_x$ is also a PPT mixture. We can write $\sigma_x$ as 
\begin{align}
    \sigma_x&=\frac{1}{q_x}\mc{P}_x\bp{\sum_mr_m\omega_m}\\
    &=\frac{1}{q_x}\sum_mr_m\mc{P}_x(\omega_m).
\end{align}
Noting that the partial transpose is self-inverse, it follows that
\begin{align}
    T_m(\mc{P}_x(\omega_m))&=T_m\circ\mc{P}_x(T_m(T_m(\omega_m))) \\
    &=T_m\circ\mc{P}_x\circ T_m(T_m(\omega_m))\geq0.
    \label{px_omegam_ppt_condition}
\end{align}
By definition, $T_m(\omega_m)\geq0$ and $T_m\circ\mc{P}_x\circ T_m$ is completely positive for all $m$. Thus, the fact that $\sigma_x$ is a convex combination of $\mc{P}_x(\omega_m)$ along with~\eqref{px_omegam_ppt_condition} implies that $\sigma_x$ can be expressed as a PPT mixture. Furthermore, for each $x$,
\begin{align}
    T_m(\sigma_{m,x})&=T_m\bp{\frac{\mc{P}_x(\omega_m)}{q_{x|m}}}  \\
    &=\frac{1}{q_{x|m}}T_m\circ\mc{P}_x\circ T_m(T_m(\omega_m))\geq0.
\end{align} 
For each $\omega_m$, we have that $\sum_xq_{x|m}\norm{T_m(\sigma_{m,x})}_1\leq \norm{T_m(\omega_m)}_1$ using~\cite[Eq.~(8)]{plenio2005} or~\cite[Prop.~2.1]{eisert2006}. Using this, we arrive at the following useful inequality:
\begin{align}
    \sum_mr_m\norm{T_m(\omega_m)}_1&\geq\sum_mr_m\sum_xq_{x|m}\norm{T_m(\sigma_{m,x})}_1 \\
    &=\sum_{m,x}r_mq_{x|m}\norm{T_m(\sigma_{m,x})}_1.
    \label{normIneq}
\end{align}

\begin{theorem}\label{rains_ppt_monotone_thm}
Let $\left(\mc{P}_x\right)_x$ be a selective PPT operation, and let $\rho$ be a $k$-partite state. For all $x$, define a state $\rho_x$ and a probability $p_x$ as
\begin{align}
    &\rho_x\coloneq\frac{\mc{P}_x(\rho)}{p_x},&p_x\coloneq\Tr[\mc{P}_x(\rho)].&
\end{align}
Let $\mc{X}^+\coloneq\{x:p_x>0\}$. Then, every multipartite $\boldsymbol{D}$-Rains entanglement measure that satisfies the direct-sum inequality in~\eqref{res:direct_sum_sandwich_renyi} is a selective PPT monotone. That is,
\begin{align}
    \boldsymbol{R}(\rho)&\geq\sum_{x\in\mc{X}^+}p_x\boldsymbol{R}(\rho_x).
\end{align}
\end{theorem}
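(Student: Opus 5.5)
We follow the standard bipartite argument for selective monotonicity of the Rains bound (cf.\ \cite{eisert2022}), adapted to PPT mixtures via the bookkeeping already set up before the statement. Fix $\ep>0$ and pick a feasible $\sigma=\sum_m r_m\omega_m$ in Definition~\ref{def:rains} with
\begin{align}
\boldsymbol{D}(\rho\|\sigma)+\log_2\Big(\sum_m r_m\norm{T_m(\omega_m)}_1\Big)\leq \boldsymbol{R}(\rho)+\ep .
\end{align}
We then apply the channel $\mc{M}(\cdot)\coloneq\sum_x \op{x}{x}\otimes\mc{P}_x(\cdot)$, which is CPTP because $\sum_x\mc{P}_x$ is trace preserving. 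Data processing~\eqref{eq:data_processing} gives $\boldsymbol{D}(\rho\|\sigma)\geq \boldsymbol{D}(\mc{M}(\rho)\|\mc{M}(\sigma))$. Both outputs are classical--quantum: $\mc{M}(\rho)=\sum_x p_x\op{x}{x}\otimes\rho_x$ and $\mc{M}(\sigma)=\sum_x q_x\op{x}{x}\otimes\sigma_x$. The direct-sum inequality~\eqref{res:direct_sum_sandwich_renyi} therefore yields
\begin{align}
\boldsymbol{D}(\rho\|\sigma)\geq D(p\|q)+\sum_{x\in\mc{X}^+}p_x\boldsymbol{D}(\rho_x\|\sigma_x).
\end{align}
Terms with $q_x=0<p_x$ make $D(p\|q)=+\infty$, which is trivial, so we may assume $q_x>0$ on $\mc{X}^+$.

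Next we bound each $\boldsymbol{R}(\rho_x)$ using $\sigma_x$ as a candidate. Writing $\sigma_x=\sum_m \frac{r_m q_{x|m}}{q_x}\sigma_{m,x}$, this is a convex decomposition with weights $r_m q_{x|m}/q_x$ that sum to one, since $q_x=\sum_m r_m q_{x|m}$. Each $\sigma_{m,x}$ is a state (terms with $q_{x|m}=0$ are dropped). Hence
\begin{align}
\boldsymbol{R}(\rho_x)\leq \boldsymbol{D}(\rho_x\|\sigma_x)+\log_2\Big(\frac{N_x}{q_x}\Big),\qquad N_x\coloneq\sum_m r_m q_{x|m}\norm{T_m(\sigma_{m,x})}_1 .
\end{align}
Combining the two displays, it remains to show
\begin{align}
D(p\|q)-\sum_{x\in\mc{X}^+}p_x\log_2\frac{N_x}{q_x}+\log_2\Big(\sum_m r_m\norm{T_m(\omega_m)}_1\Big)\geq 0 .
\end{align}
The left-hand side equals $\sum_x p_x\log_2\frac{p_x}{N_x}+\log_2 \sum_m r_m\norm{T_m(\omega_m)}_1$. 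By~\eqref{normIneq}, $\sum_x N_x\leq \sum_m r_m\norm{T_m(\omega_m)}_1$. Setting $N\coloneq\sum_{x\in\mc{X}^+}N_x$, the log-sum inequality (nonnegativity of relative entropy between $p$ and the normalized $N_x/N$) gives $\sum_x p_x\log_2\frac{p_x}{N_x}\geq -\log_2 N\geq -\log_2\sum_m r_m\norm{T_m(\omega_m)}_1$. This closes the argument, and letting $\ep\to0$ completes the proof.

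The main obstacle is this final step. It requires combining the classical term $D(p\|q)$ with the normalization factors $\log_2(N_x/q_x)$ so that the $q_x$ cancel exactly, and it depends on the selective trace-norm inequality~\eqref{normIneq}, which in turn relies on the PPT-preserving property $T_m\circ\mc{P}_x\circ T_m\geq0$. We also need to check the edge cases $q_{x|m}=0$ and $N_x=0$. If $N_x=0$, then $\sigma_x=0$, which forces $\boldsymbol{D}(\rho_x\|\sigma_x)=\infty$ for $x\in\mc{X}^+$, so the bound is again trivial.
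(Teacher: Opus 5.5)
Your proposal is correct and follows essentially the same route as the paper. Both arguments use data processing under the flagged channel $\sum_x\op{x}{x}\otimes\mc{P}_x$, the direct-sum inequality, the trace-norm bound \eqref{normIneq}, and $\sigma_x=\sum_m\frac{r_mq_{x|m}}{q_x}\sigma_{m,x}$ as the candidate for each $\boldsymbol{R}(\rho_x)$; your log-sum step is the paper's monotonicity-plus-concavity-of-$\log_2$ (Jensen) step, written as nonnegativity of a classical relative entropy.
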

\begin{proof}
    The proof follows a similar structure to that of~\cite[Theorem~6]{eisert2022}. Consider $\sigma\in \mc{S}(\mc{H}_{k})$ with an arbitrary mixed-state decomposition $\sigma=\sum_mr_m\omega_m$, where $m$ labels the bipartition. Define
    \begin{align}
        \sigma_x&\coloneq\frac{\mc{P}_x(\sigma)}{q_x},\\
        q_x&\coloneq\Tr[\mc{P}_x(\sigma)],\\
        \sigma_{m,x}&\coloneq\frac{\mc{P}_x(\omega_m)}{q_{x|m}},\\
        q_{x|m}&\coloneq\Tr[\mc{P}_x(\omega_m)].
    \end{align}
    Define a quantum channel  $\mc{P}(\cdot)\coloneq\sum_x\op{x}{x}\otimes\mc{P}_x(\cdot)$. Then,
\begin{align}
    \boldsymbol{D}(\rho\|\sigma)&\geq \boldsymbol{D}(\mc{P}(\rho)\|\mc{P}(\sigma)) \\
    &=\boldsymbol{D}\!\left(\sum_x\op{x}{x}\otimes\mc{P}_x(\rho)\middle\|\sum_x\op{x}{x}\otimes\mc{P}_x(\sigma)\right) \\
    &=\boldsymbol{D}\!\left(\sum_xp_x\op{x}{x}\otimes\rho_x\middle\|\sum_xq_x\op{x}{x}\otimes\sigma_x\right) \\
    &\geq D(p\|q)+\sum_xp_x\boldsymbol{D}(\rho_x\|\sigma_x),\label{intres:rains_selective_ppt_monotone_step}
\end{align}
where we used the data-processing inequality~\eqref{eq:data_processing} and the direct-sum inequality~\eqref{res:direct_sum_sandwich_renyi}. Using~\eqref{intres:rains_selective_ppt_monotone_step} with~\eqref{normIneq} and the monotonicity of $\log_2$, we find that
\begin{align}
    \boldsymbol{D}(\rho\|\sigma)+\log_2\bp{\sum_mr_m\norm{T_m(\omega_m)}_1}
    \geq D(p\|q) +\sum_{x\in\mc{X}^+}p_x\boldsymbol{D}(\rho_x\|\sigma_x)
    +\log_2\bp{\sum_{m,x}r_mq_{x|m}\norm{T_m(\sigma_{m,x})}_1}.\label{ineq5parts}
\end{align}
We can then employ the monotonicity and concavity of $\log_2$ to find that
\begin{align}
    &D(p\|q)+\log_2\bp{\sum_{m,x}r_mq_{x|m}\norm{T_m(\sigma_{m,x})}_1} \notag \\
    &\geq D(p\|q)+\log_2\bp{\sum_{x\in\mc{X}^+}p_x\sum_{m}\frac{r_mq_{x|m}}{p_x}\norm{T_m(\sigma_{m,x})}_1} \\
    &\geq D(p\|q)+\sum_{x\in\mc{X}^+}p_x\log_2\bp{\sum_{m}\frac{r_mq_{x|m}}{p_x}\norm{T_m(\sigma_{m,x})}_1} \\
    &=\sum_{x\in\mc{X}^+}p_x\bp{\log_2\bp{\frac{p_x}{q_x}}+\log_2\bp{\sum_{m}\frac{r_mq_{x|m}}{p_x}\!\norm{T_m(\sigma_{m,x})}_1\!}\!\!} \\
    &=\sum_{x\in\mc{X}^+}p_x\log_2\bp{\sum_{m}\frac{r_mq_{x|m}}{q_x}\norm{T_m(\sigma_{m,x})}_1}.
    \label{simpIneq}
\end{align}
Let us note that $\sum_m\frac{r_mq_{x|m}}{q_x}\sigma_{m,x}$ is indeed a valid mixed-state decomposition of $\sigma_x$:
\begin{align}
    \sum_m\frac{r_mq_{x|m}}{q_x}\sigma_{m,x}&=\sum_m\frac{r_mq_{x|m}}{q_x}\frac{\mc{P}_x(\omega_m)}{q_{x|m}} \\
    &=\frac{1}{q_x}\sum_mr_m\mc{P}_x(\omega_m)=\sigma_x.
\end{align}
Additionally, $\bp{\frac{r_mq_{x|m}}{q_x}}_m$ is a probability distribution because $r_m,q_{x|m},q_x\geq0$ and $\sum_m\frac{r_mq_{x|m}}{q_x}=\frac{1}{q_x}\sum_mr_mq_{x|m}=\frac{1}{q_x}\Tr[\mc{P}_x(\sigma)]=1$. Then, using~\eqref{simpIneq} with~\eqref{ineq5parts}, we arrive at
\begin{align}
     \boldsymbol{D}(\rho\|\sigma)+\log_2\bp{\sum_mr_m\norm{T_m(\omega_m)}_1}&\geq\sum_{x\in\mc{X}^+}p_x\boldsymbol{D}(\rho_x\|\sigma_x)+\sum_{x\in\mc{X}^+}p_x\log_2\bp{\sum_{m}\frac{r_mq_{x|m}}{q_x}\norm{T_m(\sigma_{m,x})}_1} \\
     &\geq\sum_{x\in\mc{X}^+}p_x\boldsymbol{R}(\rho_x).
     \label{eq:pf-key-ineq}
\end{align}
Because~\eqref{eq:pf-key-ineq} holds for all $\sigma$, we can take an infimum over $\sigma$ and conclude that $\boldsymbol{R}(\rho)\geq\sum_{x\in\mc{X}^+}p_x\boldsymbol{R}(\rho_x)$.
\end{proof}

\begin{proposition}\label{lem:multipartite_rains_inequalities}
    For a $k$-partite quantum state $\rho$, the following relationships between the  multipartite Rains entanglement measures hold:
    \begin{align}
        \max\{\boldsymbol{R}(\rho),\boldsymbol{R}_{\operatorname{M}}(\rho)\} \leq\boldsymbol{R}_{\operatorname{H}}(\rho)\leq \boldsymbol{R}_{\operatorname{S}}(\rho).
    \end{align}
    Furthermore, if the underlying divergence $\boldsymbol{D}$ is jointly convex, then
    \begin{align}
        \boldsymbol{R}(\rho)\leq\boldsymbol{R}_{\operatorname{M}}(\rho).\label{eq:rains_leq_monsoon}
    \end{align}
\end{proposition}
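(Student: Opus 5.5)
The plan is to compare the feasible sets and decompositions directly; each inequality comes from restricting one optimization to a special case of another.

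\emph{Step 1: $\boldsymbol{R}_{\operatorname{H}}\le\boldsymbol{R}_{\operatorname{S}}$.} By Definition~\ref{def:squall}, $\mc{T}'=\bigcap_m\mc{T}_m$. Hence $\mc{T}'\subseteq\mc{T}_m$ for every bipartition $m$, and the infimum defining the squall entanglement runs over a smaller set than the inner infimum of the hurricane entanglement for any fixed $m$.

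\emph{Step 2: $\boldsymbol{R}\le\boldsymbol{R}_{\operatorname{H}}$.} I would use Lemma~\ref{lem:alt_rains_def}. For any $m$ and $\sigma\in\mc{T}_m$, set $\tau_m=\sigma$ and $\tau_{m'}=0$ for $m'\neq m$. This gives $\sigma\in\mc{T}(\hilb)$, so $\mc{T}_m\subseteq\mc{T}(\hilb)$. Taking the infimum over $m$ and $\sigma$ gives the claim.

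\emph{Step 3: $\boldsymbol{R}_{\operatorname{M}}\le\boldsymbol{R}_{\operatorname{H}}$.} In Definition~\ref{def:monsoon}, choose the trivial decomposition $p_{m_0}=1$ and $\rho_{m_0}=\rho$, with $m_0$ an arbitrary bipartition; the remaining weights are zero and contribute nothing. The monsoon objective then equals $\inf_{\sigma\in\mc{T}_{m_0}}\boldsymbol{D}(\rho\|\sigma)$. Minimizing over $m_0$ yields the hurricane entanglement. Steps~2 and~3 together give the bound on the maximum.

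\emph{Step 4: \eqref{eq:rains_leq_monsoon} under joint convexity.} Take any feasible decomposition $\rho=\sum_mp_m\rho_m$ and any $\sigma_m\in\mc{T}_m$. Joint convexity gives
\begin{align}
\sum_mp_m\boldsymbol{D}(\rho_m\|\sigma_m)\ge\boldsymbol{D}\Bigl(\rho\Big\|\sum_mp_m\sigma_m\Bigr).
\end{align}
Define $\tau_m\coloneq p_m\sigma_m\ge0$. Then $\sum_m\norm{T_m(\tau_m)}_1=\sum_mp_m\norm{T_m(\sigma_m)}_1\le1$, so $\sum_m\tau_m\in\mc{T}(\hilb)$. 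By Lemma~\ref{lem:alt_rains_def}, the right-hand side is therefore at least $\boldsymbol{R}(\rho)$. Taking infima over the $\sigma_m$ and then over decompositions gives the result.

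The only subtlety is handling infima that may not be attained, and terms with $p_m=0$. Working with arbitrary feasible choices rather than optimizers, as above, avoids both issues. Joint convexity is used only in Step~4, which is exactly where the statement requires it.
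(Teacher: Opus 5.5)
Your proposal is correct and follows the paper's proof essentially step for step. The inclusions $\mc{T}'\subseteq\mc{T}_m\subseteq\mc{T}(\hilb)$ give $\boldsymbol{R}\le\boldsymbol{R}_{\operatorname{H}}\le\boldsymbol{R}_{\operatorname{S}}$, the trivial decomposition gives $\boldsymbol{R}_{\operatorname{M}}\le\boldsymbol{R}_{\operatorname{H}}$, and joint convexity together with $\sum_m p_m\sigma_m\in\mc{T}(\hilb)$ gives $\boldsymbol{R}\le\boldsymbol{R}_{\operatorname{M}}$, just as in the paper.
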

\begin{proof}
    We will prove~\eqref{eq:rains_leq_monsoon} first. Consider a mixed-state decomposition of $\rho$ given by $\rho=\sum_mp_m\rho_m$. For each $m$, let $\sigma_m\in\mc{T}_m$. If $\boldsymbol{D}$ is jointly convex, then
    \begin{align}
        \sum_mp_m\boldsymbol{D}(\rho_m\|\sigma_m)&\geq \boldsymbol{D}\bp{\sum_mp_m\rho_m\middle\|\sum_mp_m\sigma_m}\\
        &=\boldsymbol{D}\bp{\rho\middle\|\sum_mp_m\sigma_m}.\label{eq:mc_gmre_ineq}
    \end{align}
    Let us show that $\sum_mp_m\sigma_m\in\mc{T}$. Consider that each $p_m\sigma_m\geq0$, and
    \begin{align}
        \sum_mp_m\norm{T_m(\sigma_m)}_1&\leq\sum_mp_m=1,
    \end{align}
    where we used the fact that $\norm{T_m(\sigma_m)}_1\leq1$ for each $\sigma_m\in\mc{T}_m$. With~\eqref{eq:mc_gmre_ineq}, we conclude that
    \begin{align}
        \sum_mp_m\boldsymbol{D}(\rho_m\|\sigma_m)&\geq\inf_{\sigma\in\mc{T}}\boldsymbol{D}(\rho\|\sigma)=\boldsymbol{R}(\rho).\label{eq:mc_gmre_ineq2}
    \end{align}
    Because the decomposition of $\rho$ is arbitrary, we can take an infimum over decompositions of $\rho$ in~\eqref{eq:mc_gmre_ineq2} to conclude that $\boldsymbol{R}(\rho)\leq\boldsymbol{R}_{\operatorname{M}}(\rho)$.
    
    To see that $\boldsymbol{R}_{\operatorname{M}}(\rho)\leq\boldsymbol{R}_{\operatorname{H}}(\rho)$, consider the trivial decomposition of $\rho$, and let $m$ be an arbitrary bipartition. Then,
    \begin{align}
        \inf_{\sigma_m\in\mc{T}_m}\boldsymbol{D}(\rho\|\sigma_m)\geq\inf_{\rho=\sum_mp_m\rho_m}\sum_mp_m\inf_{\sigma_m\in\mc{T}_m}\boldsymbol{D}(\rho_m\|\sigma_m).
    \end{align}
    Since the inequality holds for an arbitrary bipartition $m$, we can then take an infimum of the left-hand side over $m$ and conclude that $\boldsymbol{R}_{\operatorname{M}}(\rho)\leq\boldsymbol{R}_{\operatorname{H}}(\rho)$.

    We will now prove that $\boldsymbol{R}(\rho)\leq\boldsymbol{R}_{\operatorname{H}}(\rho)$. For each $m$, $\mc{T}_m$ is a subset of $\mc{T}$; we can see this by taking the trivial decomposition of $\sigma_m\in\mc{T}_m$, which satisfies $\sigma_m\geq0$ and $\norm{T_m(\sigma_m)}_1\leq1$. Thus, for an arbitrary $m$ and $\sigma_m\in\mc{T}_m$,
    \begin{align}
        \boldsymbol{D}(\rho\|\sigma_m)\geq\inf_{\sigma\in\mc{T}}\boldsymbol{D}(\rho\|\sigma)=\boldsymbol{R}(\rho),
    \end{align}
    from which we conclude that $\boldsymbol{R}(\rho)\leq\boldsymbol{R}_{\operatorname{H}}(\rho)$.
    
    Next, let us prove that $\boldsymbol{R}_{\operatorname{H}}(\rho)\leq \boldsymbol{R}_{\operatorname{S}}(\rho)$. Consider that $\mc{T}'=\mc{T}_{m_1}\cap\cdots\cap\mc{T}_{m_N}$ where $N=2^{k-1}-1$ is the number of unique bipartitions of the system. Thus, $\mc{T}'\subseteq\mc{T}_m$ for each $m$. With this, we find that, for every $\sigma\in\mc{T}'$,
    \begin{align}
        \boldsymbol{D}(\rho\|\sigma)&\geq\inf_m\inf_{\sigma\in\mc{T}_m}\boldsymbol{D}(\rho\|\sigma).
    \end{align}
    Taking an infimum over elements in $\mc{T}'$ gives $\boldsymbol{R}_{\operatorname{H}}(\rho)\leq \boldsymbol{R}_{\operatorname{S}}(\rho)$, concluding the proof.
\end{proof}

\begin{lemma}\label{lem:min_bi_rains_tensor_power_subadd}
     Suppose that the underlying divergence $\boldsymbol{D}$ is tensor-product subadditive. Then the $\boldsymbol{D}$-hurricane entanglement is tensor-power subadditive. That is, for an arbitrary $k$-partite state $\rho$,
     \begin{align}
             \boldsymbol{R}_{\operatorname{H}}(\rho^{\otimes n})\leq n\boldsymbol{R}_{\operatorname{H}}(\rho).
     \end{align}
\end{lemma}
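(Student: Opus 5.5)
The plan is to fix a bipartition $m$ and an operator $\sigma\in\mc{T}_m$, and to use $\sigma^{\otimes n}$ as a candidate for $\rho^{\otimes n}$. Here $\rho^{\otimes n}$ is regarded as a $k$-partite state, where party $j$ holds the $n$ copies of its original subsystem. The bipartition $m$ of the $k$ parties then induces a bipartition of the $n$-copy system, which we also denote by $m$. Under this identification the partial transpose factorizes: $T_m$ on the $n$-copy system equals $T_m^{\otimes n}$ acting copy by copy, up to the reordering of tensor factors.

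The first step is to show that $\sigma^{\otimes n}\in\mc{T}_m$ for the $n$-copy system. Positivity of $\sigma^{\otimes n}$ is immediate. For the norm constraint, multiplicativity of the trace norm under tensor products gives
\begin{align}
\norm{T_m(\sigma^{\otimes n})}_1=\norm{(T_m(\sigma))^{\otimes n}}_1=\norm{T_m(\sigma)}_1^n\leq 1.
\end{align}

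The second step applies tensor-product subadditivity of $\boldsymbol{D}$ repeatedly, by induction on $n$, writing $\rho^{\otimes n}=\rho^{\otimes(n-1)}\otimes\rho$ and similarly for $\sigma$. This yields
\begin{align}
\boldsymbol{R}_{\operatorname{H}}(\rho^{\otimes n})\leq\boldsymbol{D}(\rho^{\otimes n}\|\sigma^{\otimes n})\leq n\,\boldsymbol{D}(\rho\|\sigma).
\end{align}
The subadditivity hypothesis is stated for states, whereas $\sigma$ is only PSD with trace at most one. I will handle this in one of two ways. If $\Tr[\sigma]=0$ and $\rho\neq 0$, then $\boldsymbol{D}(\rho\|\sigma)=+\infty$ and the bound is trivial. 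Otherwise I will normalize $\sigma=c\,\hat\sigma$ with $\hat\sigma$ a state and $c=\Tr[\sigma]\in(0,1]$. The scaling property \eqref{eq:scaling-prop} gives $\boldsymbol{D}(\rho^{\otimes n}\|c^n\hat\sigma^{\otimes n})=\boldsymbol{D}(\rho^{\otimes n}\|\hat\sigma^{\otimes n})-n\log_2 c$. Subadditivity applied to the states $\hat\sigma$ then makes the $\log_2 c$ terms match exactly on both sides.

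Finally, I take the infimum over $\sigma\in\mc{T}_m$ and then over $m$, which gives $\boldsymbol{R}_{\operatorname{H}}(\rho^{\otimes n})\leq n\boldsymbol{R}_{\operatorname{H}}(\rho)$. I expect the only delicate point to be the bookkeeping in the first step: checking that the induced bipartition on the $n$-copy system is legitimate and that $T_m$ factorizes across copies. This only requires the observation that the $n$-copy system, with the regrouping of factors, is again a $k$-partite system.
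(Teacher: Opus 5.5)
Your proposal is correct and follows essentially the same route as the paper: for a fixed bipartition $m$ and $\sigma\in\mc{T}_m$, you show that $\sigma^{\otimes n}$ stays in $\mc{T}_m$ by multiplicativity of the trace norm under the factorized partial transpose. You then bound $\boldsymbol{R}_{\operatorname{H}}(\rho^{\otimes n})\leq\boldsymbol{D}(\rho^{\otimes n}\|\sigma^{\otimes n})\leq n\boldsymbol{D}(\rho\|\sigma)$ and take infima; your use of the scaling property to apply subadditivity (stated only for states) to subnormalized $\sigma$ fills in a detail the paper's proof passes over.
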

\begin{proof}
         For clarity, we will begin by proving that $\mc{T}_m$ is tensor stable for tripartite systems for a fixed $m$, with the extension to the general scenario being straightforward. Note that
    \begin{align}
        \mc{T}(\hilb_{AA'|BB'CC'})&=\bc{\tau\geq0:\norm{T_{AA'}(\tau)}_1\leq1},
    \end{align} 
    with $\mc{T}(\hilb_{AA'BB'|CC'})$ and $\mc{T}(\hilb_{AA'CC'|BB'})$ being defined similarly. Suppose that $\tau_{ABC}\in\mc{T}_{A|BC}(\hilb_{ABC})$, and $\omega_{A'B'C'}\in\mc{T}_{A'|B'C'}(\hilb_{A'B'C'})$. Then,
         \begin{align}
             \norm{T_{AA'}(\tau_{ABC}\otimes\omega_{A'B'C'})}_1&=\norm{T_{A}(\tau_{ABC})\otimes T_{A'}(\omega_{A'B'C'})}_1\\
             &=\norm{T_{A}(\tau_{ABC})}_1\norm{T_{A'}(\omega_{A'B'C'})}_1\leq1.
         \end{align}
         Similarly, when $\tau\in\mc{T}_{AB|C}$, $\omega\in\mc{T}_{A'B'|C'}$, $\tau\in\mc{T}_{AC|B}$, and $\omega\in\mc{T}_{A'C'|B'}$, we find that $\tau\otimes\omega\in\mc{T}_{AA'BB'|CC'}$ and $\tau\otimes\omega\in\mc{T}_{AA'CC'|BB'}$, respectively, from which we conclude that $\mc{T}_m$ is tensor stable for a fixed $m$.
         
         Let $\rho\in\mc{S}(\hilb_k)$ be a $k$-partite state, and let $m$ be a bipartition of $\hilb_k$ with $\tau\in\mc{T}_m(\hilb_k)$. Then,
         \begin{align}
             2\boldsymbol{D}(\rho\|\tau)&\geq\boldsymbol{D}(\rho\otimes\rho\|\tau\otimes\tau)\\
             &\geq\inf_{mm',\sigma\in\mc{T}_{mm'}(\hilb_k\otimes\hilb'_k)}\boldsymbol{D}(\rho\otimes\rho\|\sigma)=\boldsymbol{R}_{\operatorname{H}}(\rho^{\otimes 2}),
         \end{align}
         where we used the tensor-product subadditivity of $\boldsymbol{D}$. Because $m$ and $\tau$ are arbitrary, we conclude that $\boldsymbol{R}_{\operatorname{H}}(\rho^{\otimes 2})\leq2\boldsymbol{R}_{\operatorname{H}}(\rho)$. Performing this process inductively, we conclude that $\boldsymbol{R}_{\operatorname{H}}(\rho^{\otimes n})\leq n\boldsymbol{R}_{\operatorname{H}}(\rho)$, completing the proof.
\end{proof}

\begin{remark}\label{rmk:hurricane_subadd}
    The more general statement of tensor-product subadditivity is out of reach for $\boldsymbol{R}_{\operatorname{H}}$ because, although $\mc{T}_m$ is closed under tensor products for a fixed $m$, if we are given $\tau\in\mc{T}_{m}$ and $\omega\in\mc{T}_{\ell}$ for distinct bipartitions $m$ and $\ell$, there is no guarantee that $\tau\otimes\omega$ will be an element of $\mc{T}_{mm'}$, $\mc{T}_{\ell\ell'}$, or any other set $\mc{T}_{XX'}$ with $XX'$ corresponding to a bipartition of the overall system. We discuss the relevance of these observations later on in Remark~\ref{rmk:hurricane_asymptotic_bound}.
\end{remark}

\begin{lemma}\label{lem:squall_subadd}
    Suppose that the underlying divergence $\boldsymbol{D}$ is tensor-product subadditive, as defined in Lemma~\ref{lem:min_bi_rains_tensor_power_subadd}. Then the $\boldsymbol{D}$-squall entanglement is tensor-product subadditive. That is, for $k$-partite states $\rho$ and $\sigma$,
    \begin{align}
        \boldsymbol{R}_{\operatorname{S}}(\rho\otimes\sigma)&\leq\boldsymbol{R}_{\operatorname{S}}(\rho)+\boldsymbol{R}_{\operatorname{S}}(\sigma).
    \end{align}
\end{lemma}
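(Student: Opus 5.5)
The plan mirrors the proof of Lemma~\ref{lem:min_bi_rains_tensor_power_subadd}. There is one key difference. The squall feasible set $\mc{T}'$ imposes the trace-norm constraint for \emph{all} bipartitions simultaneously. So the obstruction noted in Remark~\ref{rmk:hurricane_subadd} disappears. The first step is to show that $\mc{T}'$ is closed under tensor products, in the following sense. Let $\tau\in\mc{T}'(\hilb_k)$ and $\omega\in\mc{T}'(\hilb'_k)$, where $\hilb_k\otimes\hilb'_k$ is regarded as a $k$-partite system whose $j$th party holds the $j$th subsystem of both factors. Then $\tau\otimes\omega\in\mc{T}'(\hilb_k\otimes\hilb'_k)$.

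To prove this, fix an arbitrary bipartition $M$ of the combined $k$ parties. It is induced by a bipartition $m$ of $\hilb_k$ together with the corresponding bipartition $m'$ of $\hilb'_k$. These share the same labeling of parties, as in the $AA'|BB'CC'$ example earlier. The partial transpose factorizes as $T_M=T_m\otimes T_{m'}$, and the trace norm is multiplicative on tensor products. Hence
\begin{align}
\norm{T_M(\tau\otimes\omega)}_1=\norm{T_m(\tau)}_1\norm{T_{m'}(\omega)}_1\leq 1,
\end{align}
and positivity of $\tau\otimes\omega$ is clear. Since $M$ was arbitrary, $\tau\otimes\omega\in\mc{T}'$.

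Next, let $\tau^*$ and $\omega^*$ be minimizers (or near-minimizers) of $\boldsymbol{D}(\rho\|\cdot)$ and $\boldsymbol{D}(\sigma\|\cdot)$ over $\mc{T}'$. Tensor-product subadditivity of $\boldsymbol{D}$ then gives
\begin{align}
\boldsymbol{R}_{\operatorname{S}}(\rho)+\boldsymbol{R}_{\operatorname{S}}(\sigma)=\boldsymbol{D}(\rho\|\tau^*)+\boldsymbol{D}(\sigma\|\omega^*)\geq\boldsymbol{D}(\rho\otimes\sigma\|\tau^*\otimes\omega^*)\geq\boldsymbol{R}_{\operatorname{S}}(\rho\otimes\sigma).
\end{align}
If minimizers are not assumed to exist, the same chain with arbitrary feasible $\tau,\omega$, followed by infima, gives the result.

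The only delicate point is bookkeeping. Every bipartition of the tensor-product system must be a product of matching bipartitions of the factors under the party-wise grouping. This holds because the $k$ parties are unchanged, and it is what makes the argument work. A minor technicality is that the subadditivity hypothesis was stated for states, while elements of $\mc{T}'$ are subnormalized. I would handle this with the scaling property~\eqref{eq:scaling-prop}: normalize $\tau$ and $\omega$, apply subadditivity, and note that the $-\log_2$ of the traces is additive on both sides. Zero-trace elements can be ignored, since they give infinite divergence.
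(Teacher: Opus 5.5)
Your proof is correct and follows the paper's argument: show that $\mc{T}'$ is closed under tensor products by factorizing the partial transpose and using multiplicativity of the trace norm, then apply tensor-product subadditivity of $\boldsymbol{D}$ to arbitrary feasible $\tau,\omega$ and take infima. Your use of the scaling property to pass from the states in the subadditivity hypothesis to the subnormalized elements of $\mc{T}'$ handles a small technicality that the paper leaves implicit.
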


\begin{proof}
    We will prove this lemma in the tripartite scenario for simplicity; the extension to the general setting is straightforward. We will begin by showing that $\mc{T}'$ is closed under taking tensor products. Note that
    \begin{align}
        \mc{T}'(\hilb_{AA'BB'CC'})&=\bc{\tau\geq0:\norm{T_{AA'}(\tau)}_1\leq1,\norm{T_{BB'}(\tau)}_1\leq1,\norm{T_{CC'}(\tau)}_1\leq1}.
    \end{align}
    Let $\tau_{ABC}\in\mc{T}'(\hilb_{ABC})$ and $\omega_{A'B'C'}\in\mc{T}'(\hilb_{A'B'C'})$. Then,
    \begin{align}
        \norm{T_{AA'}(\tau_{ABC}\otimes\omega_{A'B'C'})}&=\norm{T_A(\tau_{ABC})\otimes T_{A'}(\omega_{A'B'C'})}_1\label{eq:bi_rains_subadd1}\\
        &=\norm{T_A(\tau_{ABC})}_1\norm{T_{A'}(\omega_{A'B'C'})}_1\leq1.\label{eq:bi_rains_subadd2}
    \end{align}
    A similar argument follows when the partial transpose is taken with respect to $BB'$ or $CC'$, from which we conclude that $\tau_{ABC}\otimes\omega_{A'B'C'}\in\mc{T}'(\hilb_{AA'BB'CC'})$, and thus $\mc{T}'$ is closed under the action of the tensor product. 
    
    We will now show that $\boldsymbol{R}_{\operatorname{S}}$ is tensor-product subadditive. Let $\tau\in\mc{T}'(\hilb_{ABC})$ and $\omega\in\mc{T}'(\hilb_{A'B'C'})$. Then,
    \begin{align}
        \boldsymbol{D}(\rho\|\tau)+\boldsymbol{D}(\sigma\|\omega)&\geq\boldsymbol{D}(\rho\otimes\sigma\|\tau\otimes\omega)\\
        &\geq\inf_{\eta\in\mc{T}'(\hilb_{AA'BB'CC'})}\boldsymbol{D}(\rho\otimes\sigma\|\eta)= \boldsymbol{R}_{\operatorname{S}}(\rho\otimes\sigma),
    \end{align}
    where we used the hypothesis of tensor-product subadditivity of $\boldsymbol{D}$ and the fact that $\tau\otimes\omega\in\mc{T}'(\hilb_{AA'BB'CC'})$. Because $\tau$ and $\omega$ are arbitrary, we conclude that $\boldsymbol{R}_{\operatorname{S}}(\rho\otimes\sigma)\leq\boldsymbol{R}_{\operatorname{S}}(\rho)+\boldsymbol{R}_{\operatorname{S}}(\sigma)$.
\end{proof}

Because the quantum relative entropy is tensor-product additive, the squall entanglement is tensor-product subadditive and the hurricane entanglement is tensor-power subadditive. Similarly, for $\alpha\in(0,1)\cup(1,\infty)$, the sandwiched R\'enyi relative entropy is tensor-product additive, implying that the sandwiched R\'enyi squall entanglement and sandwiched R\'enyi hurricane entanglement satisfy tensor-product and tensor-power subadditivity, respectively, for this range of $\alpha$.

\section{Pure-state entanglement}

\label{sec:pure}

In this section, we derive various bounds for the Rains, monsoon, hurricane, and squall entanglement of a pure state. The bounds are in terms of entropies of reduced states minimized or maximized over all possible bipartitions. We also derive an exact expression for the Rains entanglement of GHZ and W states. For these examples, we focus on the Rains entanglement because it results in the tightest upper bound on the one-shot distillation rates, established in Section~\ref{sec:distill}.

\subsection{General pure states}

In preparation for Lemma~\ref{lem:gmre_pure_state_ub_lb}, we define the following quantities:
 \begin{align}
     \underline{H}_{\min}(\psi) & \coloneq \min_m H_{\min}(\psi_m),\label{def:underline_h_min}   \\
     \underline{H}(\psi) & \coloneq \min_m H(\psi_m)=\min_mE_m(\psi), \\
    \overline{H}_{\min}(\psi)&\coloneq\max_mH_{\min}(\psi_m),\\
    \overline{H}_\alpha(\psi) & \coloneqq \max_m H_\alpha(\psi_m) , \\
    \psi_m & \coloneq \Tr_m[\psi]. 
    \end{align}
    where $\alpha \in (0,1)\cup(1,\infty)$.
    Here, $E_m$ refers to the bipartite entanglement entropy in~\eqref{def:entangle_entropy} with respect to the bipartition $m$, and $\Tr_m$ refers to the partial trace with respect to the bipartition $m$ (e.g., if $m=A|BC$, $\Tr_m$ could be $\Tr_A$ or $\Tr_{BC}$).

\begin{lemma}\label{lem:gmre_pure_state_ub_lb}
    Let $\psi$ be a $k$-partite pure state. If a $k$-partite state $\rho$ satisfies
     \begin{equation}
     F\coloneq F(\psi,\rho)\geq\max_m\bp{\norm{T_m(\psi)}_\infty},     
     \end{equation}
     then 
    \begin{align}
        R(\rho)\geq F\underline{H}_{\min}(\psi)-h_2(F),\label{eq:gmre_lb_pure_state}
    \end{align}
    where the binary entropy is defined for $F\in [0,1]$ as
    \begin{equation}
        h_2(F) \coloneqq -F \log_2 F - (1-F) \log_2(1-F).
    \end{equation}
    If $\rho$ satisfies
    \begin{align}
        F\geq\min_m\bp{\norm{T_m(\psi)}_\infty},
    \end{align}
    then
    \begin{align}
        R_{\operatorname{S}}(\rho)\geq F\overline{H}_{\min}(\psi)-h_2(F).\label{eq:squall_lb}
    \end{align}
\end{lemma}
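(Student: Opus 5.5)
Use Lemma~\ref{lem:alt_rains_def} to write $R(\rho)=\inf_{\sigma\in\mc{T}}D(\rho\|\sigma)$. Then apply data processing under the two-outcome test channel
\begin{equation}
\mc{M}(\cdot)=\Tr[\psi\,\cdot\,]\op{0}{0}+\Tr[(\ii-\psi)\,\cdot\,]\op{1}{1}.
\end{equation}
This gives, for every $\sigma\in\mc{T}$,
\begin{equation}
D(\rho\|\sigma)\geq F\log_2\frac{F}{s}+(1-F)\log_2\frac{1-F}{1-s}, \qquad s\coloneq\Tr[\psi\sigma].
\end{equation}
Since $\Tr[\sigma]\leq 1$ by \eqref{trace_element_t_leq_1}, we have $1-s\geq\Tr[(\ii-\psi)\sigma]$ in the appropriate sense; more directly, we can drop the second term after writing it as $-(1-F)\log_2(1-s)+(1-F)\log_2(1-F)\geq(1-F)\log_2(1-F)$, because $1-s\leq 1$. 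So the bound reduces to
\begin{equation}
D(\rho\|\sigma)\geq -h_2(F)-F\log_2 s.
\end{equation}
The main work is to show $s\leq 2^{-\underline{H}_{\min}(\psi)}$ uniformly over $\mc{T}$. For that, the fidelity hypothesis ensures $F\geq s$, so the binary relative entropy is evaluated in its monotone regime and the inequality is meaningful.

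The key step is bounding $\Tr[\psi\sigma]$ for $\sigma=\sum_m\tau_m$ with $\tau_m\geq0$ and $\sum_m\norm{T_m(\tau_m)}_1\leq 1$. Since the partial transpose is self-adjoint with respect to the Hilbert--Schmidt inner product, and by H\"older's inequality,
\begin{equation}
\Tr[\psi\tau_m]=\Tr[T_m(\psi)T_m(\tau_m)]\leq\norm{T_m(\psi)}_\infty\norm{T_m(\tau_m)}_1 .
\end{equation}
For a pure state with Schmidt coefficients $\lambda_i$ across $m$, $T_m(\psi)$ has eigenvalues $\lambda_i$ and $\pm\sqrt{\lambda_i\lambda_j}$. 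Hence $\norm{T_m(\psi)}_\infty=\lambda_{\max}(\psi_m)=2^{-H_{\min}(\psi_m)}$. Summing over $m$ gives
\begin{equation}
s\leq\max_m\norm{T_m(\psi)}_\infty\sum_m\norm{T_m(\tau_m)}_1\leq 2^{-\underline{H}_{\min}(\psi)}.
\end{equation}
This also explains the hypothesis $F\geq\max_m\norm{T_m(\psi)}_\infty\geq s$. Substituting gives $D(\rho\|\sigma)\geq F\underline{H}_{\min}(\psi)-h_2(F)$, and taking the infimum over $\sigma$ yields \eqref{eq:gmre_lb_pure_state}.

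For the squall bound, run the same argument with $\sigma\in\mc{T}'$. Now $\norm{T_m(\sigma)}_1\leq1$ holds for every $m$, so we may choose the best bipartition:
\begin{equation}
s\leq\min_m\norm{T_m(\psi)}_\infty=2^{-\overline{H}_{\min}(\psi)},
\end{equation}
and the hypothesis $F\geq\min_m\norm{T_m(\psi)}_\infty$ again guarantees $F\geq s$.

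The only delicate point is handling the binary-divergence term carefully. If $s\geq F$ were possible, the bound could fail, which is exactly why the hypothesis on $F$ is needed. Beyond that, I expect the steps to be routine once the eigenvalue formula for $T_m(\psi)$ is established.
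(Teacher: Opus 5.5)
Your proof is correct and is essentially the paper's argument: the two-outcome test channel $\mc{M}_\psi$ with data processing, the H\"older bound $\Tr[\psi\tau_m]=\Tr[T_m(\psi)T_m(\tau_m)]\leq\norm{T_m(\psi)}_\infty\norm{T_m(\tau_m)}_1$, and $\norm{T_m(\psi)}_\infty=\lambda_{\max}(\psi_m)$ from the Schmidt form, with the same modification for $\mc{T}'$. One correction to your commentary: you discard $-(1-F)\log_2(\Tr[\sigma]-s)\geq 0$ first, and $-F\log_2 s$ is decreasing in $s$ for every $s$, so your argument never uses $F\geq s$. The paper needs that hypothesis only because it first shifts the second argument of the binary divergence, which is monotone only on $[0,F]$. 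So the bound does not fail without the hypothesis; it is merely trivial there, since $F\underline{H}_{\min}(\psi)-h_2(F)\leq 0$ whenever $F\leq 2^{-\underline{H}_{\min}(\psi)}$.
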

\begin{proof}

We will begin by proving~\eqref{eq:gmre_lb_pure_state}. First, we will establish an upper bound on the fidelity of $\psi$ and an element of $\mc{T}$. Let $\sigma\in\mc{T}$ be given by $\sigma=\sum_mq_m\sigma_m$. It follows that 
\begin{align}
    \Tr[\psi\sigma]&=\sum_mq_m\Tr\bb{\psi\sigma_m}\\
    &=\sum_mq_m\Tr\bb{\psi T_m(T_m(\sigma_m))}\\
    &=\sum_mq_m\Tr\bb{T_m(\psi)T_m(\sigma_m)}\label{eq:rains_lb_int1}\\
    &\leq\sum_mq_m\norm{T_m(\psi)}_\infty\norm{T_m(\sigma_m)}_1\label{eq:rains_lb_int2}\\
    &\leq\max_m\bp{\norm{T_m(\psi)}_\infty}\sum_mq_m\norm{T_m(\sigma_m)}_1\\
    &\leq\max_m\bp{\norm{T_m(\psi)}_\infty},\label{eq:rains_lb_int3}
\end{align}
where~\eqref{eq:rains_lb_int1} follows from the fact that the partial transpose is self-inverse and self-adjoint and~\eqref{eq:rains_lb_int2} follows from H\"older's inequality. The final inequality is a consequence of the definition of $\mc{T}$ in~\eqref{def:t_set}. 

Now, we will relate the expression in~\eqref{eq:rains_lb_int3} to $\underline{H}_{\min}(\psi)$. Let $m$ be a fixed bipartition, and let $\ket{\psi}=\sum_{i=1}^r\sqrt{\lambda_i}\ket{i}\ket{i}$ be the Schmidt decomposition of $\psi$ with respect to $m$. Then,
\begin{align}
    T_m(\psi)&=\sum_{i,j=1}^r\sqrt{\lambda_i\lambda_j}T_m(\op{i}{j}\otimes\op{i}{j})\\
    &=\sum_{i,j=1}^r\sqrt{\lambda_i\lambda_j}\op{i}{j}\otimes\op{j}{i}.
\end{align}
Without loss of generality, we have taken the partial transpose with respect to the Schmidt basis. Indeed, we are ultimately interested in the spectrum of $T_m(\psi)$, which is invariant under the choice of orthonormal basis for the partial transpose.
Furthermore,
\begin{align}
    T_m(\psi)^\dag T_m(\psi)&=\sum_{i,j,i',j'=1}^r\sqrt{\lambda_i\lambda_j\lambda_{i'}\lambda_{j'}}\bp{\op{j}{i}\otimes\op{i}{j}}\bp{\op{i'}{j'}\otimes\op{j'}{i'}}\\
    &=\sum_{i,j=1}^r\lambda_i\lambda_j\op{j}{j}\otimes\op{i}{i}\\
    &=\sum_{j=1}^r\lambda_j\op{j}{j}\otimes\sum_{i=1}^r\lambda_i\op{i}{i}. \label{eq:rewrite-T_m-op}
\end{align}
Because $\norm{A}_\infty=\sqrt{\norm{A^\dag A}_\infty}$, 
\begin{align}
    \norm{T_m(\psi)}_\infty&=\sqrt{\norm{T_m(\psi)^\dag T_m(\psi)}_\infty}\\
    &=\max_i\lambda_i\\
    &=\lambda_{\max}(\psi_m),\label{eq:spec_norm_min_entr}
\end{align}
where $\psi_m\coloneq\Tr_m[\psi]$. Observe also that
\begin{align}
    -\log_2\bp{\max_m\norm{T_m(\psi)}_\infty}&=-\log_2\bp{\max_m\lambda_{\max}(\psi_m)}\\
    &=\min_m-\log_2\bp{\lambda_{\max}(\psi_m)}\\
    &=\min_m H_{\min}(\psi_m)\\
    &=\underline{H}_{\min}(\psi).\label{eq:spectral_norm_partialtranspose_pure}
\end{align}

We will now derive the lower bound in~\eqref{eq:gmre_lb_pure_state}. Define a measurement channel $\mc{M}_{\psi}$ with action
    \begin{align}
        \mc{M}_{\psi}(\cdot)&\coloneq\Tr\bb{\psi(\cdot)}\op{1}{1}+\Tr\bb{(\ii-\psi)(\cdot)}\op{0}{0}.\label{def:meas_channel_pure_state}
    \end{align}
    Because $\psi$ is a pure state, $F\coloneq F(\psi,\rho)=\Tr[\psi\rho]$, and we can write $\mc{M}_{\psi}(\rho)=F\op{1}{1}+(1-F)\op{0}{0}$. Let $\sigma\in\mc{T}$, and assume that $F\geq\max_m\norm{T_m(\psi)}_\infty$. Then,
    \begin{align}
        D(\rho\|\sigma)&\geq D(\mc{M}_{\psi}(\rho)\|\mc{M}_{\psi}(\sigma)) \label{eq:ub_r_pure_data} \\
        &=D\bp{(F,1-F)\middle\|(\Tr[\psi\sigma],\Tr[\sigma]-\Tr\bb{\psi\sigma})} \\
        &\geq D\bp{(F,1-F)\middle\|(\Tr[\psi\sigma],1-\Tr\bb{\psi\sigma})}\label{eq:ub_r_pure_log_mon} \\
        &\geq D\bp{(F,1-F)\middle\|\,\bp{\max_m\norm{T_m(\psi)}_\infty,1-\max_m\norm{T_m(\psi)}_\infty}}\label{eq:ub_r_pure_entropy_mon} \\
        &\geq -F\log_2\bp{\max_m\norm{T_m(\psi)}_\infty}-h_2(F)\\
        &=F\min_mH_{\min}(\psi_m)-h_2(F),\label{eq:ub_r_pure_norm_pt_psi}
    \end{align}
    where we used data processing in~\eqref{eq:ub_r_pure_data}, the monotonicity of $\log$ and the fact that $\Tr[\sigma]\leq1$ in~\eqref{eq:ub_r_pure_log_mon}, a basic relative entropy property in~\eqref{eq:ub_r_pure_entropy_mon} (see~\cite[Lemma 14]{murray2026}), and~\eqref{eq:spectral_norm_partialtranspose_pure} in~\eqref{eq:ub_r_pure_norm_pt_psi}. 
    Taking an infimum over elements of $\mc{T}$, we find that, if $F(\psi,\rho)\geq\max_m\norm{T_m(\psi)}_\infty$, then
    \begin{align}
        R(\rho)\geq F\underline{H}_{\min}(\psi)-h_2(F),
    \end{align}
    concluding the proof of~\eqref{eq:gmre_lb_pure_state}.

    We will proceed to prove~\eqref{eq:squall_lb}. Let $\sigma\in\mc{T}'$, and let $m$ be an arbitrary bipartition. Then,
    \begin{align}
        \Tr\bb{\psi\sigma}&=\Tr\bb{T_m(\psi)T_m(\sigma)}\\
        &\leq\norm{T_m(\psi)}_\infty\norm{T_m(\sigma)}_1 \label{eq:squall_lb_pure2}\\
        &\leq\norm{T_m(\psi)}_\infty,\label{eq:squall_lb_pure1}
    \end{align}
    where we used the definition in~\eqref{def:t_prime_set} to arrive at~\eqref{eq:squall_lb_pure1}. Because this holds for an arbitrary bipartition, we can minimize~\eqref{eq:squall_lb_pure1} over bipartitions and conclude that
    \begin{align}
        \Tr\bb{\psi\sigma}\leq\min_m\norm{T_m(\psi)}_\infty.
    \end{align}
    Observe that
    \begin{align}
        -\log_2\bp{\min_m\norm{T_m(\psi)}_\infty}&=-\log_2\bp{\min_m\lambda_{\max}(\psi_m)}\\
    &=\max_m-\log_2\bp{\lambda_{\max}(\psi_m)}\\
    &=\max_m H_{\min}(\psi_m)\\
    &=\overline{H}_{\min}(\psi),\label{eq:max_bip_min_entr}
    \end{align}
    where we used~\eqref{eq:spec_norm_min_entr}. Let $\mc{M}_\psi$ be a measurement channel as defined in~\eqref{def:meas_channel_pure_state}, and assume $F\geq\max_m\norm{T_m(\psi)}_\infty$. Using the same arguments as in~\eqref{eq:ub_r_pure_data}--\eqref{eq:ub_r_pure_norm_pt_psi} with $\sigma\in\mc{T}'$ instead of $\sigma\in\mc{T}$ and $\min_m\norm{T_m(\psi)}_\infty$ instead of $\max_m\norm{T_m(\psi)}_\infty$, we conclude that
    \begin{align}
        R_{\operatorname{S}}(\rho)\geq F \overline{H}_{\min}(\psi)-h_2(F),
    \end{align}
    thus completing the proof.
\end{proof} 

\begin{lemma}\label{lem:just_pure_state_bounds}
    Let $\psi$ be a $k$-partite pure state. Then,
    \begin{align}
        \underline{H}_{\min}(\psi)&\leq R(\psi),\label{eq:gmre_pure_state_ub_lb}\\
        \underline{H}(\psi)&=R_{\operatorname{H}}(\psi),\label{eq:hurricane_pure_state}\\ 
        \overline{H}_{\min}(\psi)&\leq R_{\operatorname{S}}(\psi)\leq \overline{H}_{\frac{1}{2}}(\psi).\label{eq:squall_pure_lb}
    \end{align}
\end{lemma}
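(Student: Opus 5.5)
The lower bounds follow from Lemma~\ref{lem:gmre_pure_state_ub_lb} with $\rho=\psi$, so $F=F(\psi,\psi)=1$ and $h_2(1)=0$. The fidelity hypotheses hold because, by \eqref{eq:spec_norm_min_entr}, $\norm{T_m(\psi)}_\infty=\lambda_{\max}(\psi_m)\leq 1$ for every $m$. Hence $R(\psi)\geq\underline{H}_{\min}(\psi)$, which is \eqref{eq:gmre_pure_state_ub_lb}, and $R_{\operatorname{S}}(\psi)\geq\overline{H}_{\min}(\psi)$, which is the left half of \eqref{eq:squall_pure_lb}.

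For the hurricane identity \eqref{eq:hurricane_pure_state}, I note that Definition~\ref{def:hurricane} with $\boldsymbol{D}=D$ gives $R_{\operatorname{H}}(\psi)=\min_m R^{(m)}(\psi)$. Here $R^{(m)}$ is the bipartite Rains relative entropy \eqref{eq:bi_rains_conv} for bipartition $m$, since $\mc{T}_m$ is exactly $\operatorname{PPT}'$ for that cut. The known fact cited in the preliminaries \cite{audenaert2002}, that the bipartite Rains relative entropy of a pure state equals its entanglement entropy, then gives $R^{(m)}(\psi)=E_m(\psi)=H(\psi_m)$. Minimizing over $m$ yields $\underline{H}(\psi)$.

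The upper bound $R_{\operatorname{S}}(\psi)\leq\overline{H}_{1/2}(\psi)$ is the main step, since it needs an explicit element of $\mc{T}'$. My candidate is $\sigma=c\,\psi$ with $c=\min_m\norm{T_m(\psi)}_1^{-1}$. Then $\norm{T_m(\sigma)}_1\leq 1$ for every bipartition, so $\sigma\in\mc{T}'$. The scaling property \eqref{eq:scaling-prop} gives
\begin{equation}
D(\psi\|c\psi)=-\log_2 c=\max_m\log_2\norm{T_m(\psi)}_1 .
\end{equation}
It remains to compute $\norm{T_m(\psi)}_1$. From \eqref{eq:rewrite-T_m-op}, $|T_m(\psi)|=\sqrt{T_m(\psi)^\dag T_m(\psi)}$ has eigenvalues $\sqrt{\lambda_i\lambda_j}$. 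Therefore
\begin{equation}
\norm{T_m(\psi)}_1=\Big(\sum_i\sqrt{\lambda_i}\Big)^2=\Big(\Tr\big[\psi_m^{1/2}\big]\Big)^2 .
\end{equation}
Consequently $\log_2\norm{T_m(\psi)}_1=2\log_2\Tr[\psi_m^{1/2}]=H_{1/2}(\psi_m)$ by \eqref{eq:renyi-ent-def}. Maximizing over $m$ gives $\overline{H}_{1/2}(\psi)$, completing \eqref{eq:squall_pure_lb}.

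The only delicate points are the support condition, which holds trivially since $\sigma\propto\psi$, and the trace-norm computation, which reuses the Schmidt-basis calculation from the proof of Lemma~\ref{lem:gmre_pure_state_ub_lb}.
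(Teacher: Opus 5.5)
Your proof is correct and matches the paper's argument step for step. Like the paper, you get the lower bounds from Lemma~\ref{lem:gmre_pure_state_ub_lb} with $\rho=\psi$ and the hurricane identity from the bipartite pure-state Rains result. For the squall upper bound you use the same feasible point $c\,\psi = 2^{-\overline{H}_{1/2}(\psi)}\psi$ together with $\norm{T_m(\psi)}_1 = 2^{H_{1/2}(\psi_m)}$, and your explicit check that $F=1$ satisfies the fidelity hypotheses is a small, welcome addition.
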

\begin{proof}
    The lower bounds in~\eqref{eq:gmre_pure_state_ub_lb} and~\eqref{eq:squall_pure_lb} follow easily by setting $\rho=\psi$ in~\eqref{eq:gmre_lb_pure_state} and~\eqref{eq:squall_lb}. The equality in~\eqref{eq:hurricane_pure_state} follows from the fact that the bipartite Rains relative entropy of a pure state is equal to the entanglement entropy \cite{audenaert2002}. Indeed,
    \begin{align}
        R_{\operatorname{H}}(\psi)&=\min_m\inf_{\sigma\in\mc{T}_m}D(\psi\|\sigma)\\
        &=\min_mE_m(\psi)\\
        &=\underline{H}(\psi)\label{eq:min_bi_entangle_entr},
    \end{align}
    concluding the proof of~\eqref{eq:hurricane_pure_state}.

    We will now prove the upper bound in~\eqref{eq:squall_pure_lb}. Consider that
\begin{align}
\left\Vert T_{m}(\psi)\right\Vert _{1} & =\Tr\!\left[\sqrt{T_{m}(\psi)^{\dag}T_{m}(\psi)}\right]\\
 & =\Tr\!\left[\sqrt{\sum_{j=1}^{r}\lambda_{j}|j\rangle\!\langle j|\otimes\sum_{i=1}^{r}\lambda_{i}|i\rangle\!\langle i|}\right]\\
 & =\Tr\!\left[\sum_{j=1}^{r}\sqrt{\lambda_{j}}|j\rangle\!\langle j|\otimes\sum_{i=1}^{r}\sqrt{\lambda_{i}}|i\rangle\!\langle i|\right]\\
 & =\left(\sum_{j=1}^{r}\sqrt{\lambda_{j}}\right)\left(\sum_{i=1}^{r}\sqrt{\lambda_{i}}\right)\\
 & =\left(\sum_{i=1}^{r}\sqrt{\lambda_{i}}\right)^{2}\\
 & =\left(\Tr\!\left[\sqrt{\psi_{m}}\right]\right)^{2}\\
 & =2^{2\log_{2}\Tr\left[\sqrt{\psi_{m}}\right]}\\
 & =2^{H_{\frac{1}{2}}(\psi_{m})},
\end{align}
where we used~\eqref{eq:rewrite-T_m-op} for the second line and~\eqref{eq:renyi-ent-def} for the last line. Then set $\sigma=2^{-\overline{H}_{\frac{1}{2}}(\psi)}\psi$. It follows
that $\sigma\in\mathcal{T}'$ because the following chain holds for
all $m$:
\begin{align}
\left\Vert T_{m}(\sigma)\right\Vert _{1} & =\left\Vert T_{m}\!\left(2^{-\overline{H}_{\frac{1}{2}}(\psi)}\psi\right)\right\Vert _{1}\\
 & =2^{-\overline{H}_{\frac{1}{2}}(\psi)}\left\Vert T_{m}\!\left(\psi\right)\right\Vert _{1}\\
 & =2^{-\overline{H}_{\frac{1}{2}}(\psi)}2^{H_{\frac{1}{2}}(\psi_{m})}\\
 & \leq1.
\end{align}
Then finally we conclude that
\begin{align}
R_{\operatorname{S}}(\psi) & =\min_{\sigma'\in\mathcal{T}'}D(\psi\|\sigma')\\
 & \leq D(\psi\|\sigma)\\
 & =D(\psi\|\psi)+\overline{H}_{\frac{1}{2}}(\psi)\\
 & =\overline{H}_{\frac{1}{2}}(\psi),
\end{align}
thus completing the proof.
\end{proof}

Lemma~\ref{lem:just_pure_state_bounds} implies bounds on the Rains, monsoon, and hurricane entanglement for pure states. Specifically, using Proposition~\ref{lem:multipartite_rains_inequalities}, we immediately see that, for every pure state $\psi$,
\begin{align}
    \underline{H}_{\min}(\psi)\leq R(\psi)\leq R_{\operatorname{M}}(\psi)\leq R_{\operatorname{H}}(\psi)=\underline{H}(\psi).
\end{align}

\begin{lemma}\label{lem:renyi_gmre_pure_lb_ub}
    Let $\psi$ be a $k$-partite pure state. Then, for every $k$-partite state $\rho$ and for all $\alpha>1$, the following inequalities hold:
    \begin{align}
        \widetilde{R}_\alpha(\rho)\geq \underline{H}_{\min}(\psi)+\frac{\alpha}{\alpha-1}\log_2F,\label{eq:renyi_gmre_lb_pure}\\
        \widetilde{R}_{\alpha,\operatorname{S}}(\rho)\geq \overline{H}_{\min}(\psi)+\frac{\alpha}{\alpha-1}\log_2F,\label{eq:sandwich_squall_lb}
    \end{align}
    where $F\coloneq F(\psi,\rho)$.
\end{lemma}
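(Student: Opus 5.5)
We follow the structure of Lemma~\ref{lem:gmre_pure_state_ub_lb}, replacing the relative entropy by the sandwiched R\'enyi relative entropy for $\alpha>1$. The argument has three steps: bound the overlap of $\psi$ with any feasible operator, apply data processing under the projective measurement, and drop one term of the resulting classical R\'enyi divergence.

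\emph{Step 1 (overlap bound).} By Lemma~\ref{lem:alt_rains_def}, it suffices to bound $\widetilde{D}_\alpha(\rho\|\sigma)$ from below for every $\sigma\in\mc{T}$. From the Hölder-inequality chain \eqref{eq:rains_lb_int1}--\eqref{eq:rains_lb_int3} together with \eqref{eq:spectral_norm_partialtranspose_pure}, every such $\sigma$ satisfies
\begin{equation}
\Tr[\psi\sigma]\leq \max_m\norm{T_m(\psi)}_\infty=2^{-\underline{H}_{\min}(\psi)}.
\end{equation}
In the same way, every $\sigma\in\mc{T}'$ satisfies $\Tr[\psi\sigma]\leq 2^{-\overline{H}_{\min}(\psi)}$.

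\emph{Step 2 (data processing).} Apply the measurement channel $\mc{M}_\psi$ from \eqref{def:meas_channel_pure_state}. The data-processing inequality for $\widetilde{D}_\alpha$ with $\alpha>1$ gives
\begin{equation}
\widetilde{D}_\alpha(\rho\|\sigma)\geq \widetilde{D}_\alpha\bp{(F,1-F)\middle\|(q,\Tr[\sigma]-q)},
\end{equation}
where $q\coloneq\Tr[\psi\sigma]$. Both arguments are now diagonal, so the right-hand side is the classical expression
\begin{equation}
\frac{1}{\alpha-1}\log_2\bp{F^\alpha q^{1-\alpha}+(1-F)^\alpha(\Tr[\sigma]-q)^{1-\alpha}}.
\end{equation}
If $\Tr[\sigma]-q=0$ while $F<1$, the support condition fails and the divergence is $+\infty$, so this case is trivial. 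In every other case the second term inside the logarithm is nonnegative.

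\emph{Step 3 (drop the second term).} Since $\alpha>1$, the prefactor $\frac{1}{\alpha-1}$ is positive, so discarding the nonnegative second term gives
\begin{equation}
\widetilde{D}_\alpha(\rho\|\sigma)\geq \frac{1}{\alpha-1}\log_2\bp{F^\alpha q^{1-\alpha}}=\frac{\alpha}{\alpha-1}\log_2F-\log_2 q.
\end{equation}
Substituting $-\log_2 q\geq \underline{H}_{\min}(\psi)$ from Step 1 and taking the infimum over $\sigma\in\mc{T}$ yields \eqref{eq:renyi_gmre_lb_pure}. Running the same argument with $\sigma\in\mc{T}'$ and the bound $q\leq 2^{-\overline{H}_{\min}(\psi)}$ yields \eqref{eq:sandwich_squall_lb}.

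Unlike Lemma~\ref{lem:gmre_pure_state_ub_lb}, this argument needs no condition on $F$: dropping the second term avoids the monotonicity step \eqref{eq:ub_r_pure_entropy_mon}, so no threshold assumption appears. The remaining technical point is to verify that data processing of $\widetilde{D}_\alpha$ holds with a subnormalized second argument ($\Tr[\sigma]\leq1$). This follows from the stated range of validity $\alpha\in(1,\infty)$ for PSD $\sigma$, or alternatively from the scaling property \eqref{eq:scaling-prop}.
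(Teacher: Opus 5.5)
Your proposal is correct and follows essentially the same route as the paper's proof. Both arguments apply data processing under the measurement channel $\mc{M}_\psi$, discard the second term of the classical R\'enyi sum (which is valid since $\alpha>1$), and bound $\Tr[\psi\sigma]$ via the H\"older chain. The squall case is handled the same way, using $\mc{T}'$ and $\min_m\norm{T_m(\psi)}_\infty$; your explicit treatment of the degenerate support cases is a small extra that the paper leaves implicit.
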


\begin{proof}

    We first prove~\eqref{eq:renyi_gmre_lb_pure}. Let $\rho$ be a $k$-partite state, let $\sigma\in\mc{T}$, and define a measurement channel $\mc{M}_{\psi}$ as in~\eqref{def:meas_channel_pure_state}. Then,
    \begin{align}
        \widetilde{D}_\alpha(\rho\|\sigma)&\geq\widetilde{D}_\alpha((F,1-F)\|(\Tr[\psi\sigma],\Tr[\sigma]-\Tr[\psi\sigma]))\label{eq:renyi_rains_lb_int1}\\
        &=\frac{1}{\alpha-1}\log_2\bp{F^\alpha\Tr[\psi\sigma]^{1-\alpha}+(1-F)^\alpha\bp{\Tr[\sigma]-\Tr[\psi\sigma])}^{1-\alpha}}\\
        &\geq\frac{1}{\alpha-1}\log_2\bp{F^\alpha\bp{\max_m\norm{T_m(\psi)}_\infty}^{1-\alpha}} \label{eq:renyi_rains_lb_int2}\\
        &=-\log_2\max_m\norm{T_m(\psi)}_\infty+\frac{\alpha}{\alpha-1}\log_2F\\
        &=\min_mH_{\min}(\Tr_m[\psi])+\frac{\alpha}{\alpha-1}\log_2F,\label{eq:renyi_rains_lb_int3}
    \end{align}
    where we used data processing in~\eqref{eq:renyi_rains_lb_int1}, \eqref{eq:rains_lb_int3} in~\eqref{eq:renyi_rains_lb_int2}, and~\eqref{eq:spectral_norm_partialtranspose_pure} in~\eqref{eq:renyi_rains_lb_int3}. By taking an infimum over $\sigma\in\mc{T}$, we arrive at~\eqref{eq:renyi_gmre_lb_pure}. The proof of~\eqref{eq:sandwich_squall_lb} follows the exact same arguments as in~\eqref{eq:renyi_rains_lb_int1}--\eqref{eq:renyi_rains_lb_int3} with $\sigma\in\mc{T}'$ and $\min_m\norm{T_m(\psi)}_\infty$ instead of $\max_m\norm{T_m(\psi)}_\infty$ (see~\eqref{eq:max_bip_min_entr}).
\end{proof}

In preparation for Lemma~\ref{lem:sandwich_pure_bounds}, we define
    \begin{align}
        \underline{H}_{\frac{\alpha}{2\alpha-1}}(\psi)&\coloneq \min_mH_{\frac{\alpha}{2\alpha-1}}(\psi_m)=\min_m\widetilde{E}_{\alpha,m}(\psi),
    \end{align}
    where $\widetilde{E}_{\alpha,m}(\psi)$ is based on \eqref{def:sandwich_renyi_rel_entr_entanglement} and is defined as the bipartite sandwiched R\'enyi relative entropy of entanglement with respect to the bipartition $m$.
\begin{lemma}\label{lem:sandwich_pure_bounds}
    Let $\psi$ be a $k$-partite pure state. For all $\alpha>1$, the following inequalities hold:
    \begin{align}
        \underline{H}_{\min}(\psi)&\leq\widetilde{R}_\alpha(\psi),\label{eq:sandwich_rains_lb}\\
          \widetilde{R}_{\alpha,\operatorname{H}}(\psi) & \leq \underline{H}_{\frac{\alpha}{2\alpha-1}}(\psi),\label{eq:sandwich_hurricane_ub}\\
        \overline{H}_{\min}(\psi)&\leq \widetilde{R}_{\alpha,\operatorname{S}}(\psi)\leq\overline{H}_{\frac{1}{2}}(\psi).\label{eq:sandwich_squall_lb_pure}
    \end{align}
\end{lemma}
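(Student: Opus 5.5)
The three lower bounds come from Lemma~\ref{lem:renyi_gmre_pure_lb_ub}, the hurricane upper bound from the known bipartite value, and the squall upper bound from the same explicit operator used in Lemma~\ref{lem:just_pure_state_bounds}.

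\textbf{Lower bounds.} For \eqref{eq:sandwich_rains_lb} and the left inequality of \eqref{eq:sandwich_squall_lb_pure}, set $\rho=\psi$ in \eqref{eq:renyi_gmre_lb_pure} and \eqref{eq:sandwich_squall_lb}. Then $F=F(\psi,\psi)=1$, so the correction term $\frac{\alpha}{\alpha-1}\log_2 F$ vanishes. This leaves exactly $\underline{H}_{\min}(\psi)\leq\widetilde{R}_\alpha(\psi)$ and $\overline{H}_{\min}(\psi)\leq\widetilde{R}_{\alpha,\operatorname{S}}(\psi)$. Lemma~\ref{lem:renyi_gmre_pure_lb_ub} imposes no fidelity threshold, so nothing further needs checking.

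\textbf{Hurricane upper bound.} For \eqref{eq:sandwich_hurricane_ub}, fix a bipartition $m$. Every state separable across $m$ is PPT across $m$ with unit trace, so it lies in $\mc{T}_m$. Hence
\begin{equation}
\inf_{\sigma\in\mc{T}_m}\widetilde{D}_\alpha(\psi\|\sigma)\leq\widetilde{E}_{\alpha,m}(\psi).
\end{equation}
Next I would invoke the known closed form for the bipartite sandwiched R\'enyi relative entropy of entanglement of a pure state, $\widetilde{E}_{\alpha,m}(\psi)=H_{\frac{\alpha}{2\alpha-1}}(\psi_m)$ (e.g., from \cite{tomamichel2017} or \cite{khatri2024}). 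The paper already uses this identity in defining $\underline{H}_{\frac{\alpha}{2\alpha-1}}$. Minimizing over $m$ then gives the bound.

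This citation is the main obstacle. If a self-contained proof is wanted, I would try the Schmidt-dephased state $\sigma=\sum_i\lambda_i\op{ii}{ii}$, which is separable across $m$. One would then have to show that $\widetilde{D}_\alpha(\psi\|\sigma)=\frac{1}{\alpha-1}\log_2\bigl(\sum_i\lambda_i^{\frac{1-\alpha}{\alpha}}\lambda_i\bigr)^{\alpha}$ reduces to $H_{\frac{\alpha}{2\alpha-1}}$. That is not immediate: this particular $\sigma$ may not be optimal for $\alpha\neq1$, and a tilted choice $\sigma\propto\sum_i\lambda_i^{\beta}\op{ii}{ii}$ with $\beta=\frac{\alpha}{2\alpha-1}$ may be needed. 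Only the upper bound is required here, so a suitable feasible $\sigma$ is enough.

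\textbf{Squall upper bound.} For the right inequality of \eqref{eq:sandwich_squall_lb_pure}, reuse the construction from Lemma~\ref{lem:just_pure_state_bounds}. Take $\sigma=2^{-\overline{H}_{\frac12}(\psi)}\psi$; that lemma already shows $\sigma\in\mc{T}'$. By the scaling property \eqref{eq:scaling-prop},
\begin{equation}
\widetilde{D}_\alpha(\psi\|\sigma)=\widetilde{D}_\alpha(\psi\|\psi)+\overline{H}_{\frac12}(\psi)=\overline{H}_{\frac12}(\psi),
\end{equation}
because $\widetilde{D}_\alpha(\psi\|\psi)=0$. Taking the minimum over $\mc{T}'$ completes the proof.
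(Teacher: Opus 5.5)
Your proposal is correct and follows the paper's proof essentially step for step. The lower bounds come from setting $\rho=\psi$ (so $F=1$) in Lemma~\ref{lem:renyi_gmre_pure_lb_ub}, the hurricane bound from $\operatorname{SEP}\subset\mc{T}_m$ together with the pure-state formula $\widetilde{E}_{\alpha,m}(\psi)=H_{\frac{\alpha}{2\alpha-1}}(\psi_m)$ (the paper cites \cite[Prop.~9.20]{khatri2024}), and the squall bound from the scaled operator $2^{-\overline{H}_{1/2}(\psi)}\psi$ and the scaling property.

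Your optional self-contained route for the hurricane bound also works. The untilted Schmidt-dephased state only gives $H_{1/\alpha}(\psi_m)\geq H_{\frac{\alpha}{2\alpha-1}}(\psi_m)$. The tilted separable choice $\sigma\propto\sum_i\lambda_i^{\beta}\op{ii}{ii}$ with $\beta=\frac{\alpha}{2\alpha-1}$ yields exactly $\widetilde{D}_\alpha(\psi\|\sigma)=\frac{2\alpha-1}{\alpha-1}\log_2\sum_i\lambda_i^{\beta}=H_{\beta}(\psi_m)$.
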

\begin{proof}
    First, the lower bounds in~\eqref{eq:sandwich_rains_lb} and~\eqref{eq:sandwich_squall_lb_pure} follow easily from~\eqref{eq:renyi_gmre_lb_pure} and~\eqref{eq:sandwich_squall_lb} by setting $\rho=\psi$.
    
    For the upper bound in~\eqref{eq:sandwich_hurricane_ub}, note that $\widetilde{R}_{\alpha,\operatorname{H}}$ is at most the minimum bipartite sandwiched R\'enyi relative entropy of entanglement. This is because the sandwiched R\'enyi hurricane entanglement is the minimum bipartite sandwiched R\'enyi-Rains relative entropy, and $\operatorname{SEP}\subset\operatorname{PPT}'$. We conclude~\eqref{eq:sandwich_hurricane_ub} by observing that, for $\alpha\in[1/2,1)\cup(1,\infty)$, the bipartite sandwiched R\'enyi relative entropy of entanglement of a pure state $\psi$ is equal to the order-$\frac{\alpha}{2\alpha-1}$ R\'enyi entropy of the reduced state~\cite[Prop.~9.20]{khatri2024}.
    
    For the upper bound in~\eqref{eq:sandwich_squall_lb_pure}, because the sandwiched R\'enyi relative entropy satisfies the scaling property~\eqref{eq:scaling-prop} for $\alpha>1$, we can employ the same argument as in the proof of Lemma~\ref{lem:just_pure_state_bounds} to arrive at the desired bound.
\end{proof}

\subsection{GHZ and W states}

We will now look at the lower bound established in~\eqref{eq:gmre_lb_pure_state} when the pure state $\psi$ is a GHZ or W state. We will also provide exact expressions for the Rains entanglement of GHZ and W states. As a consequence, we will see that the lower bound in~\eqref{eq:gmre_pure_state_ub_lb} is tight for these states. While GHZ and W states are of particular interest as representatives for the two classes of tripartite pure-state entanglement, our results hold even when these states are shared among an arbitrary number of parties. 

The $k$-qudit GHZ state $\ket{\Phi^d_k}$ is defined as
\begin{align}
    \ket{\Phi^d_k}&\coloneq\frac{1}{\sqrt{d}}\sum_{i=0}^{d-1}\ket{i}^{\otimes k},
    \label{eq:GHZ-def}
\end{align}
and the $k$-qubit W state $\ket{W_k}$ is defined as
\begin{align}
    \ket{W_k}\coloneq\frac{1}{\sqrt{k}}(\underbrace{\ket{0\cdots01}+\ket{0\cdots10}+\cdots+\ket{10\cdots0}}_{k\textrm{ terms}}).
\end{align}

\begin{lemma} \label{multipartite_rains_lower_bound_lemma}
    Let $\Phi^d_k\coloneq\op{\Phi^d_k}{\Phi^d_k}$ denote the $k$-qudit GHZ state, and let $\rho$ be a $k$-partite state satisfying $F\coloneq F(\Phi^d_k,\rho)\geq 1/d$. Then,
    \begin{align}\label{res:multi_rains_lower_bound}
        R(\rho)& \geq F\log_2d - h_2(F).
    \end{align}
    Moreover,
    \begin{align}
        R(\Phi^d_k)&=\log_2 d.
    \end{align}
\end{lemma}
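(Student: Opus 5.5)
The lower bound is a direct specialization of Lemma~\ref{lem:gmre_pure_state_ub_lb} with $\psi=\Phi^d_k$. For any bipartition $m$, the GHZ state has Schmidt decomposition $\frac{1}{\sqrt d}\sum_{i}\ket{i}^{\otimes|S|}\ket{i}^{\otimes(k-|S|)}$, where $S$ is one side of $m$. Hence the reduced state $\psi_m$ is maximally mixed of rank $d$. This gives $\lambda_{\max}(\psi_m)=1/d$, so by~\eqref{eq:spec_norm_min_entr} we have $\norm{T_m(\Phi^d_k)}_\infty=1/d$ for every $m$. Consequently $\max_m\norm{T_m(\Phi^d_k)}_\infty=1/d$ and $\underline{H}_{\min}(\Phi^d_k)=\log_2 d$. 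The hypothesis $F\geq 1/d$ is then exactly the hypothesis of Lemma~\ref{lem:gmre_pure_state_ub_lb}, and~\eqref{eq:gmre_lb_pure_state} yields $R(\rho)\geq F\log_2 d-h_2(F)$. Taking $\rho=\Phi^d_k$ gives $F=1$ and $h_2(1)=0$, so $R(\Phi^d_k)\geq\log_2 d$.

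For the matching upper bound I will use Lemma~\ref{lem:alt_rains_def} and exhibit some $\sigma\in\mc{T}$ with $D(\Phi^d_k\|\sigma)\leq\log_2 d$. The natural choice is the completely dephased GHZ state
\begin{align}
\overline{\Phi}\coloneq\frac1d\sum_{i=0}^{d-1}\op{i}{i}^{\otimes k}.
\end{align}
It is fully separable, hence PPT across every bipartition. Taking the trivial decomposition $\tau_m=\overline{\Phi}$ for a single $m$ gives $\norm{T_m(\overline{\Phi})}_1=1$, so $\overline{\Phi}\in\mc{T}$. (Equivalently, it is biseparable, as used in Proposition~\ref{prop:basic_props}.)

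Next I compute $D(\Phi^d_k\|\overline{\Phi})$. Because $\Phi^d_k$ is pure, this equals $-\bra{\Phi^d_k}\log_2\overline{\Phi}\ket{\Phi^d_k}$, provided the support condition holds. It does, since $\ket{\Phi^d_k}$ lies in $\operatorname{span}\{\ket{i}^{\otimes k}\}$, which is exactly the support of $\overline{\Phi}$. On that span, $\log_2\overline{\Phi}$ acts as $-\log_2 d$ times the identity, so
\begin{align}
D(\Phi^d_k\|\overline{\Phi})=\log_2 d.
\end{align}
This shows $R(\Phi^d_k)\leq\log_2 d$, which combined with the lower bound gives equality.

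There is no real obstacle here. The only points needing care are the support condition for the relative entropy and confirming that the Schmidt coefficients are uniform for every bipartition, including unbalanced ones. Both follow immediately from the structure of the GHZ state.
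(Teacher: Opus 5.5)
Your proof is correct and follows the paper's argument. The lower bound is Lemma~\ref{lem:gmre_pure_state_ub_lb} with $\underline{H}_{\min}(\Phi^d_k)=\log_2 d$, and the matching upper bound uses the completely dephased GHZ state, which lies in $\mc{T}$ and gives relative entropy exactly $\log_2 d$. The differences are cosmetic: you get $\norm{T_m(\Phi^d_k)}_\infty=1/d$ from the uniform Schmidt coefficients via~\eqref{eq:spec_norm_min_entr}, where the paper writes $T_m(\Phi^d_k)$ as $\frac{1}{d}$ times a swap-type operator of unit norm, and you also check the support condition explicitly, which the paper leaves implicit.
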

\begin{proof}
    Let $\Phi^d_k$ denote the $n$-qudit GHZ state. The partial transpose of the GHZ state is proportional by a dimension factor to the unitary swap operator. That is, $T_m(\Phi^d_k)=\frac{1}{d}F_m$, where $F_m\coloneq\sum_{k,j}\op{k\cdots k\, j \cdots j}{j\cdots j\, k \cdots k}$ is the swap operator acting upon subsystems determined by the bipartition $m$ (see~\cite[Eq.~(2.5.12)]{khatri2024}). Because the swap operator is unitary, we conclude that
    \begin{equation}
    \norm{T_m(\Phi^d_k)}_\infty=\frac{1}{d}\norm{F_m}_\infty=\frac{1}{d}.    
    \end{equation}
     With~\eqref{eq:spectral_norm_partialtranspose_pure}, we find that $\underline{H}_{\min}(\psi)=\log_2d$, and with Lemma~\ref{lem:gmre_pure_state_ub_lb}, we conclude that $R(\rho)\geq F\log_2d - h_2(F)$.

If $\rho=\Phi^d_k$, the lower bound $R(\Phi^d_k)\geq\log_2d$ holds. We will now prove that this lower bound is tight. The completely dephased GHZ state is defined as
    \begin{align}
        \overline{\Phi}^d&\coloneqq \frac{1}{d}\sum_{i=0}^{d-1}\op{i}{i}^{\otimes n}.
    \end{align}
    Observe that the completely dephased GHZ state is invariant under partial transposition with respect to any collection of subsystems. This means that $\overline{\Phi}^d$ is PPT with respect to any bipartition and thus an element of the set $\mc{T}$ defined in~\eqref{def:t_set}. Now, let us consider the relative entropy $D(\Phi^d_k\|\overline{\Phi}^d)$:
    \begin{align}
        D(\Phi^d_k\|\overline{\Phi}^d)&=-H(\Phi^d_k)-\Tr\bb{\Phi^d_k\log_2\overline{\Phi}^d} \\
        &=-\Tr\bb{\Phi^d_k\sum_{i=0}^{d-1}\log_2\bp{\frac{1}{d}}\op{i}{i}^{\otimes n}} \\
        &=\log_2d\Tr\bb{\Phi^d_k\sum_{i=0}^{d-1}\op{i}{i}^{\otimes n}} \\
        &=\log_2d\Tr\bb{\frac{1}{d}\sum_{k,j=0}^{d-1}\op{k}{j}^{\otimes n}\sum_{i=0}^{d-1}\op{i}{i}^{\otimes n}} \\
        &=\log_2d\left(\frac{1}{d}\sum_{i=0}^{d-1}\bra{i}^{\otimes n}\ket{i}^{\otimes n} \right)\\
        &=\log_2d,
    \end{align}
    where $H(\rho)$ is the von Neumann entropy of $\rho$ and we used the fact that the von Neumann entropy of every pure state is equal to zero. Taking an infimum over elements of $\mc{T}$, we see that $R(\Phi^d_k)\leq\log_2d$, from which we conclude that $R(\Phi^d_k)=\log_2d$, thus completing the proof.
\end{proof}

\begin{lemma}\label{lem:gmre_lb_w}
    For $k\geq 2$, let $W_k\coloneq\op{W_k}{W_k}$ denote the $k$-qubit W state, and let $\rho$ be a $k$-partite state satisfying $F\coloneq F(W_k,\rho)\geq\frac{k-1}{k}$. Then, the following lower bounds on the Rains entanglement and sandwiched R\'enyi-Rains entanglement of $\rho$ hold for all $\alpha>1$:
    \begin{align}
        R(\rho)&\geq F\log_2\bp{\frac{k}{k-1}}-h_2(F),\label{eq:gmre_lb_w}\\
        \widetilde{R}_\alpha(\rho)&\geq \log_2\bp{\frac{k}{k-1}}+\frac{\alpha}{\alpha-1}\log_2F.
    \end{align}
    Moreover, for all $\alpha>1$,
    \begin{align}
        R(W_k)=\widetilde{R}_\alpha(W_k)=\log_2\bp{\frac{k}{k-1}}.
    \end{align}
\end{lemma}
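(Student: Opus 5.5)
The lower bounds follow from Lemma~\ref{lem:gmre_pure_state_ub_lb} and Lemma~\ref{lem:renyi_gmre_pure_lb_ub} with $\psi=W_k$; the equality then needs a matching feasible operator in $\mc{T}$. I would first compute $\max_m\norm{T_m(W_k)}_\infty$ and $\underline{H}_{\min}(W_k)$ via~\eqref{eq:spec_norm_min_entr}. Take a bipartition $m$ that puts $j$ qubits on one side and $k-j$ on the other. Since $W_k$ is permutation symmetric, we can write
\begin{equation}
\ket{W_k}=\sqrt{j/k}\,\ket{W_j}\ket{0\cdots0}+\sqrt{(k-j)/k}\,\ket{0\cdots0}\ket{W_{k-j}}.
\end{equation}
So the reduced state $\psi_m$ has eigenvalues $j/k$ and $(k-j)/k$, and $\lambda_{\max}(\psi_m)=\max\{j,k-j\}/k$. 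This is maximized at $j=1$, giving $\max_m\norm{T_m(W_k)}_\infty=(k-1)/k$ and $\underline{H}_{\min}(W_k)=\log_2\frac{k}{k-1}$. The hypothesis $F\geq\frac{k-1}{k}$ is then exactly the condition of Lemma~\ref{lem:gmre_pure_state_ub_lb}, which yields~\eqref{eq:gmre_lb_w}. Lemma~\ref{lem:renyi_gmre_pure_lb_ub} gives the sandwiched bound. Setting $\rho=W_k$ gives $R(W_k),\widetilde{R}_\alpha(W_k)\geq\log_2\frac{k}{k-1}$.

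For the matching upper bound, I will prove the stronger max-type statement: there exists $\sigma\in\mc{T}$ with $W_k\leq\frac{k}{k-1}\sigma$. By~\eqref{eq:lambda_sdp_dmax} this gives $D_{\max}(W_k\|\sigma)\leq\log_2\frac{k}{k-1}$. Monotonicity in $\alpha$ and~\eqref{eq:ineq-D-Dmax} then give $D(W_k\|\sigma)\leq\widetilde{D}_\alpha(W_k\|\sigma)\leq\log_2\frac{k}{k-1}$ for all $\alpha>1$, and Lemma~\ref{lem:alt_rains_def} finishes both equalities at once.

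I would build $\sigma=\sum_{i=1}^k\tau_i$ using only the single-party cuts $m_i=i|\text{rest}$. Write $\ket{e_i}$ for the weight-one basis vector with the $1$ in slot $i$, and $\ket{v_i}\coloneq(k-1)^{-1/2}\sum_{j\neq i}\ket{e_j}$, so that $\ket{W_k}=k^{-1/2}\ket{e_i}+\sqrt{(k-1)/k}\,\ket{v_i}$. Across the cut $m_i$ this is an effective two-qubit pure state with Schmidt coefficients $1/k$ and $(k-1)/k$.

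As an ansatz, take each $\tau_i$ to be a symmetric-under-relabelling combination
\begin{equation}
\tau_i=a\,W_k+b\,\op{e_i}{e_i}+c\,\op{v_i}{v_i},
\end{equation}
or, alternatively, a mixture of $W_k$ and the completely dephased W state $\overline{W}_k\coloneq\frac1k\sum_i\op{e_i}{e_i}$, as the introduction suggests. I will compute $\norm{T_{m_i}(\tau_i)}_1$ by transposing party $i$. The operator then splits into small blocks, namely the span of $\ket{0}_i\ket{W_{k-1}}$, $\ket{1}_i\ket{0\cdots0}$, $\ket{0}_i\ket{0\cdots0}$, and $\ket{1}_i\ket{W_{k-1}}$. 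Only the $\ket{0\cdots0}/\ket{W_{k-1}}$ off-diagonal coherence gets moved, and it contributes a $2\times 2$ block whose trace norm is explicit.

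After symmetrizing over $i$, the constraint $\sum_i\norm{T_{m_i}(\tau_i)}_1\leq1$ becomes a scalar inequality in $a,b,c$. The condition $\sigma\geq\frac{k-1}{k}W_k$ becomes a positivity condition on a $2\times2$ matrix in the span of $\ket{W_k}$ and the symmetric dephased part, since everything is permutation-invariant on the weight-one subspace. I would solve these two conditions for $(a,b,c)$.

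The main obstacle is this construction step: choosing weights so that the negative eigenvalue of each partial transpose, which has magnitude proportional to the coherence $\sqrt{(k-1)}/k$, is paid for exactly by the budget $1$. If the simple ansatz falls short, I would instead solve the dual SDP of Section~\ref{sec:sdp} restricted to permutation-symmetric operators to guess the optimal $\sigma$. An alternative fallback is to use the bound $\Tr[W_k\sigma]\leq\frac{k-1}{k}$, which is tight, to identify the required structure of $\sigma$.
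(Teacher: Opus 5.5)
\textbf{Lower bounds.} This part is correct and matches the paper. Reading $\lambda_{\max}(\psi_m)=\max\{j,k-j\}/k$ off the two-term Schmidt decomposition is a quicker way to get $\underline{H}_{\min}(W_k)=\log_2\frac{k}{k-1}$ than the paper's explicit diagonalization of $T_S(W_k)$. Your plan to certify $W_k\leq\frac{k}{k-1}\sigma$ and then use $D\leq\widetilde{D}_\alpha\leq D_{\max}$ is also a tidy way to get both equalities at once. The paper instead evaluates $D$ and $\widetilde{D}_\alpha$ separately from an eigenvector relation.

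\textbf{The gap.} You never establish the matching upper bound, which is the only nontrivial content of the equalities. You stop at an ansatz with undetermined $(a,b,c)$, and the difficulty you anticipate is misplaced: no coherence across any cut is needed. Choose $a=b=0$ and $c=1/k$. Then $\tau_i=\frac{1}{k}\op{v_i}{v_i}=\frac{1}{k}\op{0}{0}_i\otimes W_{k-1}$ is a product operator across $i|\text{rest}$, so $\norm{T_{m_i}(\tau_i)}_1=\frac{1}{k}$ and the budget is met with equality. Counting the $\ell$ that avoid both $i$ and $j$ gives $\sigma=\frac{1}{k}\sum_\ell\op{v_\ell}{v_\ell}=\frac{k-2}{k-1}W_k+\frac{1}{k-1}\overline{W}_k$. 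This is exactly the paper's $\hat{W}_k$, and it is in fact biseparable. Moreover, $\ket{W_k}$ is an eigenvector of it with eigenvalue $\frac{k-2}{k-1}+\frac{1}{k(k-1)}=\frac{k-1}{k}$. Hence $W_k\leq\frac{k}{k-1}\sigma$, and your $D_{\max}$ argument then goes through.

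\textbf{Why your ansatz lands here.} Solving it would have produced this choice. For $a,b,c\geq0$, the block structure you describe gives $\norm{T_{m_i}(\tau_i)}_1=a\bigl(1+\frac{2\sqrt{k-1}}{k}\bigr)+b+c$ and $\Tr[W_k\tau_i]=a+\frac{b}{k}+\frac{(k-1)c}{k}$. Per unit of trace-norm budget, this means:
\begin{itemize}
\item the $\op{v_i}{v_i}$ term buys overlap $\frac{k-1}{k}$;
\item the $W_k$ term buys only $\frac{k}{k+2\sqrt{k-1}}$, which is at most $\frac{k-1}{k}$, with equality only at $k=2$;
\item the $\op{e_i}{e_i}$ term buys only $\frac{1}{k}$.
\end{itemize}
Because $\sigma$ is permutation invariant and supported on the weight-one subspace, Schur's lemma makes $\ket{W_k}$ an eigenvector with eigenvalue $\Tr[W_k\sigma]$. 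So $W_k\leq\lambda\sigma$ reduces to $\lambda\Tr[W_k\sigma]\geq1$. The $c$-only choice attains the universal bound $\Tr[W_k\sigma]\leq\frac{k-1}{k}$ for $\sigma\in\mc{T}$ from the proof of Lemma~\ref{lem:gmre_pure_state_ub_lb}, so nothing better is possible.

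As written, however, your proposal proves only $R(W_k),\widetilde{R}_\alpha(W_k)\geq\log_2\frac{k}{k-1}$, not the equalities.
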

\begin{proof}
    See Appendix~\ref{app:gmre_lb_w}.
\end{proof}

\section{Pure-state distillation}\label{sec:distill}

Here, we establish upper bounds on the  one-shot and asymptotic rates at which an arbitrary state can be converted by LOCC to a fixed pure state. 
Let us note here that pure-state distillation has been considered in a more general context in \cite{regula_bu_takagi_liu2020,vijayan_chitambar_hsieh2020,FangLiu2020}, and lower bounds for the GHZ distillable entanglement were established in \cite{Salek2022,Salek2025}.
Because the free operations we consider are LOCC, bounds on distillable entanglement follow by setting the fixed pure target state to be a particular entangled state. Using our results on the Rains entanglement of the GHZ and W states from Section~\ref{sec:pure}, we establish upper bounds on the GHZ- and W-distillable entanglement.

\subsection{General pure-state distillation}

Let $\rho\in\mc{S}(\hilb_k)$ be a $k$-partite state, and let $\psi\in\mc{S}(\hilb')$ be a pure state. For $\varepsilon \in [0,1]$, we define the one-shot $\ep$-distillable-$\psi$ and one-shot $\ep$-probabilistic approximate distillable-$\psi$ of $\rho$, respectively, as
\begin{align}
    E^{(\ep,\psi)}_{\operatorname{D}}(\rho)&\coloneq\sup_{\mc{L}^{\leftrightarrow},\ell\in \NN}\bc{\ell:F(\psi^{\otimes \ell},\mc{L}^\leftrightarrow(\rho))\geq1-\ep},\label{def:ep_psi_one_shot_distill}\\
    E^{(\ep,\psi)}_{\operatorname{PD}}(\rho)&\coloneq\sup_{\substack{\mc{L}^{\leftrightarrow},\ell\in \NN,\\p\in[0,1]}}\bc{p\ell:\mc{L}^{\leftrightarrow}(\rho)=p\op{1}{1}\otimes\widetilde{\psi}^\ell+(1-p)\op{0}{0}\otimes\sigma, \, F(\psi^{\otimes \ell},\widetilde{\psi}^\ell)\geq1-\ep,\, \sigma\in\mc{S}(\hilb'^{\otimes \ell})},\label{def:ep_psi_one_shot_padme}
\end{align}
where $\mc{L}^{\leftrightarrow}\in\operatorname{LOCC}(\hilb_k,\hilb'^{\otimes\ell})$. Observe that $E^{(\ep,\psi)}_{\operatorname{D}}$ is no greater than $E^{(\ep,\psi)}_{\operatorname{PD}}$ because setting $p=1$ in~\eqref{def:ep_psi_one_shot_padme} recovers~\eqref{def:ep_psi_one_shot_distill}. Here, $p$ can be interpreted as the success probability of the distillation protocol, and the classical flag indicates if the protocol is a success or failure.

\begin{theorem}\label{thm:oneshot_psi_distill}
Let $\rho\in\mc{S}(\hilb_k)$ be a $k$-partite state, let $\psi\in\mc{S}(\hilb')$ be a pure state, and let $\underline{H}_{\min}(\psi)$ be defined as in~\eqref{def:underline_h_min}. Then, the following upper bounds on the one-shot $\ep$-distillable-$\psi$ and $\ep$-probabilistic approximate distillable-$\psi$ of $\rho$ hold for all $\varepsilon \in [0,1)$:
    \begin{align}
        E^{(\ep,\psi)}_{\operatorname{D}}(\rho)\leq E^{(\ep,\psi)}_{\operatorname{PD}}(\rho)&\leq\frac{1}{(1-\ep)\underline{H}_{\min}(\psi)}(R(\rho)+h_2(\ep)),\label{eq:oneshot_psi_distill_entangle}\\
        E^{(\ep,\psi)}_{\operatorname{D}}(\rho)\leq E^{(\ep,\psi)}_{\operatorname{PD}}(\rho)&\leq\frac{1}{\underline{H}_{\min}(\psi)}\,\inf_{\alpha>1}\bc{\widetilde{R}_\alpha(\rho)+\frac{\alpha}{\alpha-1}\log_2\bp{\frac{1}{1-\ep}}}\label{eq:oneshot_psi_distill_entangle_renyi}.
    \end{align}
\end{theorem}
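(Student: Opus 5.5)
The first inequality $E^{(\ep,\psi)}_{\operatorname{D}}\le E^{(\ep,\psi)}_{\operatorname{PD}}$ was already noted after \eqref{def:ep_psi_one_shot_padme}: setting $p=1$ recovers the deterministic scenario. So the plan concentrates on bounding $E^{(\ep,\psi)}_{\operatorname{PD}}$. The route is selective PPT monotonicity (Theorem~\ref{rains_ppt_monotone_thm}) combined with the fidelity-based lower bounds of Lemma~\ref{lem:gmre_pure_state_ub_lb} and Lemma~\ref{lem:renyi_gmre_pure_lb_ub}, applied to the target $\psi^{\otimes\ell}$.

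\textbf{Step 1: reduce to a selective PPT operation.} Fix a feasible protocol $\mc{L}^{\leftrightarrow}$ with output $p\op{1}{1}\otimes\widetilde{\psi}^\ell+(1-p)\op{0}{0}\otimes\sigma$. Define
\begin{equation}
\mc{P}_1(\cdot)\coloneq\bra{1}\mc{L}^\leftrightarrow(\cdot)\ket{1}, \qquad \mc{P}_0(\cdot)\coloneq\bra{0}\mc{L}^\leftrightarrow(\cdot)\ket{0}.
\end{equation}
Since the flag is held classically by the parties, these form an LOCC instrument, and LOCC instruments are selective multipartite PPT operations in the sense of Definition~\ref{def:selective_ppt_op}. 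Applying Theorem~\ref{rains_ppt_monotone_thm}, which is valid for $D$ (direct-sum equality) and for $\widetilde D_\alpha$ with $\alpha>1$ (direct-sum inequality), gives
\begin{equation}
\boldsymbol{R}(\rho)\ge p\,\boldsymbol{R}(\widetilde{\psi}^\ell)+(1-p)\boldsymbol{R}(\sigma)\ge p\,\boldsymbol{R}(\widetilde{\psi}^\ell),
\end{equation}
where the last step uses nonnegativity from Proposition~\ref{prop:basic_props}.

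\textbf{Step 2: lower-bound $\boldsymbol{R}(\widetilde{\psi}^\ell)$ via fidelity.} We need $\underline{H}_{\min}(\psi^{\otimes\ell})=\ell\,\underline{H}_{\min}(\psi)$. Bipartitions of the $k$ parties act on all $\ell$ copies simultaneously, so $\Tr_m[\psi^{\otimes\ell}]=\psi_m^{\otimes\ell}$ and $\lambda_{\max}$ is multiplicative; hence $\max_m\norm{T_m(\psi^{\otimes \ell})}_\infty=(\max_m\lambda_{\max}(\psi_m))^\ell$.

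In the Rényi case ($\alpha>1$), Lemma~\ref{lem:renyi_gmre_pure_lb_ub} requires no fidelity threshold. It gives
\begin{equation}
\widetilde R_\alpha(\widetilde\psi^\ell)\ge \ell\,\underline{H}_{\min}(\psi)+\tfrac{\alpha}{\alpha-1}\log_2(1-\ep).
\end{equation}
Then
\begin{equation}
p\ell\,\underline{H}_{\min}(\psi)\le \widetilde R_\alpha(\rho)+p\tfrac{\alpha}{\alpha-1}\log_2\tfrac{1}{1-\ep}\le \widetilde R_\alpha(\rho)+\tfrac{\alpha}{\alpha-1}\log_2\tfrac{1}{1-\ep},
\end{equation}
using $p\le1$. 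Optimizing over $\alpha$ and over protocols yields \eqref{eq:oneshot_psi_distill_entangle_renyi}.

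\textbf{Step 3: the relative-entropy case (main obstacle).} Here we would use \eqref{eq:gmre_lb_pure_state} with $F\ge1-\ep$. Two things need care.

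The first is the threshold hypothesis $F\ge\max_m\norm{T_m(\psi^{\otimes\ell})}_\infty=2^{-\ell\underline{H}_{\min}(\psi)}$. My first attempt is to avoid it: redo the chain \eqref{eq:ub_r_pure_data}--\eqref{eq:ub_r_pure_norm_pt_psi} by dropping the $(1-F)$ term directly. For a binary distribution, $D((F,1-F)\|(q,1-q))\ge -F\log_2 q-h_2(F)$ holds for all $q\in(0,1]$, because $-(1-F)\log_2(1-q)\ge0$. This gives
\begin{equation}
R(\widetilde\psi^\ell)\ge F\ell\,\underline{H}_{\min}(\psi)-h_2(F)
\end{equation}
unconditionally. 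If instead one insists on using the lemma as stated, the case where the threshold fails must be handled separately, and the bound there is trivial.

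The second is replacing $F$ by $1-\ep$. The map $F\mapsto F\ell\underline{H}_{\min}-h_2(F)$ is not obviously increasing. Its derivative is $\ell\underline{H}_{\min}+\log_2\frac{F}{1-F}$, which is nonnegative for $F\ge 2^{-\ell\underline{H}_{\min}}/(1+2^{-\ell\underline{H}_{\min}})$. I would combine this with the bound $h_2(F)\le h_2(\ep)$, valid when $\ep\le1/2$, and handle the small-$F$ regime separately. The overall bound is trivially true when its right-hand side exceeds $\ell$, so that regime may be dispatched that way. One must also track the factor $p$: Step 1 gives $R(\rho)\ge p[(1-\ep)\ell\underline{H}_{\min}(\psi)-h_2(\ep)]$, and $p\,h_2(\ep)\le h_2(\ep)$. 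Rearranging yields $p\ell\le (R(\rho)+h_2(\ep))/((1-\ep)\underline{H}_{\min}(\psi))$, which is \eqref{eq:oneshot_psi_distill_entangle}. Taking the supremum over protocols completes the argument.
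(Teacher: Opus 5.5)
Your proposal is correct and follows the paper's proof. Selective PPT monotonicity (Theorem~\ref{rains_ppt_monotone_thm}) plus nonnegativity gives $\boldsymbol{R}(\rho)\ge p\,\boldsymbol{R}(\widetilde\psi^\ell)$. You then apply the pure-state fidelity bounds with $\underline{H}_{\min}(\psi^{\otimes\ell})=\ell\,\underline{H}_{\min}(\psi)$. Your R\'enyi argument is the paper's.

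In the relative-entropy case you reorder the paper's steps:
\begin{itemize}
\item The paper splits on whether $1-\ep\le q^\ell$, with $q\coloneqq\max_m\norm{T_m(\psi)}_\infty$. It dispatches that case by $p\ell\,\underline{H}_{\min}(\psi)\le\log_2\frac{1}{1-\ep}\le\frac{h_2(\ep)}{1-\ep}$.
\item Otherwise the paper keeps the full binary relative entropy $D((F,1-F)\|(q^\ell,1-q^\ell))$, uses its monotonicity in $F$ for $F\ge1-\ep\ge q^\ell$, and only then drops the $(1-F)$ term.
\end{itemize}
Dropping that term first, as you do, is valid. It shows that the fidelity threshold in Lemma~\ref{lem:gmre_pure_state_ub_lb} is unnecessary for the weakened bound $F\ell\,\underline{H}_{\min}(\psi)-h_2(F)$. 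But it moves the work into the monotonicity of $g(F)\coloneqq F\ell\,\underline{H}_{\min}(\psi)-h_2(F)$, and there your endgame is loose.

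Two problems with that endgame. First, the detour through $h_2(F)\le h_2(\ep)$ only covers $\ep\le1/2$, and it is not needed. Second, the case split cannot be on $F$. If $1-\ep$ lies below the critical point of $g$, then $g$ decreases on that stretch, and $g(F)\ge g(1-\ep)$ fails for $F$ at or just above the critical point.

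Split instead on $\ep$:
\begin{itemize}
\item If $(1-\ep)\ell\,\underline{H}_{\min}(\psi)\le h_2(\ep)$, then $R(\rho)\ge0$ makes the claimed bound at least $\ell\ge p\ell$.
\item Otherwise, $(1-\ep)\ell\,\underline{H}_{\min}(\psi)>h_2(\ep)\ge(1-\ep)\log_2\frac{1}{1-\ep}$. This forces $1-\ep>2^{-\ell\underline{H}_{\min}(\psi)}$, which exceeds your critical point $2^{-\ell\underline{H}_{\min}(\psi)}/(1+2^{-\ell\underline{H}_{\min}(\psi)})$. So $g$ is nondecreasing on $[1-\ep,1]$, and $g(F)\ge g(1-\ep)=(1-\ep)\ell\,\underline{H}_{\min}(\psi)-h_2(\ep)$ for every $\ep\in[0,1)$, using $h_2(1-\ep)=h_2(\ep)$.
\end{itemize}
The inequality $(1-\ep)\log_2\frac{1}{1-\ep}\le h_2(\ep)$ is exactly the paper's first-case ingredient. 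With this fix, your proof and the paper's coincide up to the order of the two elementary steps.
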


\begin{proof}
    The leftmost inequalities in~\eqref{eq:oneshot_psi_distill_entangle} and~\eqref{eq:oneshot_psi_distill_entangle_renyi} follow immediately from the fact that $E^{(\ep,\psi)}_{\operatorname{D}}(\rho)\leq E^{(\ep,\psi)}_{\operatorname{PD}}(\rho)$ in general. For the rightmost inequalities, we will begin with the proof of~\eqref{eq:oneshot_psi_distill_entangle}. Consider an arbitrary distillation protocol defined by $(\mc{L},\ell,p)$, and let
    \begin{align}
        q\coloneq\max_m\norm{T_m(\psi)}_\infty.
    \end{align}
    Then, 
    \begin{align}
        q^\ell=\max_m\norm{T_m(\psi)^{\otimes \ell}}_\infty, \label{eq:mult-spec-norm-q}
    \end{align}
    as a consequence of multiplicativity of the spectral norm, 
    and
    \begin{align}
        -\log_2q&=\min_m\bp{-\log_2\norm{T_m(\psi)}_\infty}\\
        &=\underline{H}_{\min}(\psi).\label{eq:log_q_h_min}
    \end{align}
    First, assume that $1-\ep\leq q^\ell$. Then,
\begin{align}
    p\ell \underline{H}_{\min}(\psi)& = p (-\log_2 q^\ell) \\
    &\leq-\log_2q^\ell\\
    &\leq\log_2\frac{1}{1-\ep}\\
    &\leq \frac{1}{1-\ep}(R(\rho)+h_2(\ep)),
\end{align}
where the first inequality follows because $p\leq 1$. To arrive at the final inequality, we used the fact that the Rains entanglement is nonnegative (Proposition~\ref{prop:basic_props}), and $-\log_2(1-\ep)\leq h_2(\ep)/(1-\ep)$. With this, we find that
\begin{align}
    p\ell\leq\frac{1}{(1-\ep)\underline{H}_{\min}(\psi)}(R(\rho)+h_2(\ep)).
\end{align}
Alternatively, if $1-\ep\geq q^\ell$, consider that
\begin{align}
    F\bp{\psi^{\otimes \ell},\widetilde{\psi}^\ell}\geq1-\ep\geq q^\ell.
\end{align}
Now, we find that
\begin{align}
    R(\rho)&\geq pR\bp{\widetilde{\psi}^\ell}+(1-p)R(\sigma) \label{eq:pure_distill_entangle_selec_locc_monoton_gmre}\\
    &\geq pD\bp{(F,1-F)\middle\|\,\bp{q^\ell,1-q^\ell}}\label{eq:pure_distill_entangle_gmre_lb}\\
    &\geq pD\bp{(1-\ep,\ep)\middle\|\,\bp{q^\ell,1-q^\ell}}\label{eq:pure_distill_entangle_rel_ent_monoton}\\
    &\geq p\bp{(1-\ep)(-\log_2q^\ell)-h_2(\ep)}\\
    &\geq p\ell (1-\ep)\underline{H}_{\min}(\psi)-h_2(\ep),\label{eq:pure_distill_entangle_q}
\end{align}
where we used the selective PPT monotonicity of the Rains entanglement (Thm.~\ref{rains_ppt_monotone_thm}) in~\eqref{eq:pure_distill_entangle_selec_locc_monoton_gmre}, Lemma~\ref{lem:gmre_pure_state_ub_lb} and the nonnegativity of the Rains entanglement (Prop.~\ref{prop:basic_props}) in~\eqref{eq:pure_distill_entangle_gmre_lb}, \cite[Lemma 14]{murray2026} in~\eqref{eq:pure_distill_entangle_rel_ent_monoton}, and~\eqref{eq:log_q_h_min} in~\eqref{eq:pure_distill_entangle_q}. Rearranging this, we have
\begin{align}
    p\ell\leq\frac{1}{(1-\ep)\underline{H}_{\min}(\psi)}(R(\rho)+h_2(\ep)).
\end{align}
Because the distillation protocol is arbitrary, it follows that
\begin{align}
    E^{\ep,\psi}_{\operatorname{PD}}(\rho)\leq\frac{1}{(1-\ep)\underline{H}_{\min}(\psi)}(R(\rho)+h_2(\ep)),
\end{align}
concluding the proof of~\eqref{eq:oneshot_psi_distill_entangle}.

We will now proceed to prove~\eqref{eq:oneshot_psi_distill_entangle_renyi}. Consider that
\begin{align}
    \widetilde{R}_\alpha(\rho)&\geq p\widetilde{R}_\alpha\bp{\widetilde{\psi}^\ell}+(1-p)\widetilde{R}_\alpha(\sigma)\\
    &\geq p\bp{\min_mH_{\min}(\Tr_m[\psi]^{\otimes \ell})+\frac{\alpha}{\alpha-1}\log_2F}\\
    &\geq p\bp{-\log_2q^\ell+\frac{\alpha}{\alpha-1}\log_2(1-\ep)}\\
    &\geq p\ell \underline{H}_{\min}(\psi)+\frac{\alpha}{\alpha-1}\log_2(1-\ep),
\end{align}
where the first inequality follows from the selective PPT monotonicity of the sandwiched R\'enyi-Rains entanglement (Theorem~\ref{rains_ppt_monotone_thm}), the second inequality from Lemma~\ref{lem:renyi_gmre_pure_lb_ub} and the nonnegativity of the sandwiched R\'enyi-Rains entanglement, and the third from \eqref{eq:mult-spec-norm-q}. With this, we find that
\begin{align}
    p\ell\leq\frac{1}{\underline{H}_{\min}(\psi)}\bp{\widetilde{R}_\alpha(\rho)+\frac{\alpha}{\alpha-1}\log_2\bp{\frac{1}{1-\ep}}}.
\end{align}
Since this holds for an arbitrary distillation protocol and for all $\alpha>1$, we arrive at~\eqref{eq:oneshot_psi_distill_entangle_renyi}.
\end{proof}

We define the probabilistic approximate distillable-$\psi$ and strong converse probabilistic approximate distillable-$\psi$ of $\rho$, respectively, as
\begin{align}
    E^\psi_{\operatorname{PD}}(\rho)&\coloneq \inf_{\ep\in(0,1]}\liminf_{n\ra\infty}\frac{1}{n}E^{(\ep,\psi)}_{\operatorname{PD}}(\rho^{\otimes n}),\label{def:distillable_psi}\\
    \widetilde{E}^\psi_{\operatorname{PD}}(\rho)&\coloneq \sup_{\ep\in[0,1)}\limsup_{n\ra\infty}\frac{1}{n}E^{(\ep,\psi)}_{\operatorname{PD}}(\rho^{\otimes n}),\label{def:strong_conv_distillable_psi}
\end{align}
with the $\psi$-distillable entanglement $E^\psi_{\operatorname{D}}$ and strong converse $\psi$-distillable entanglement $\widetilde{E}^\psi_{\operatorname{D}}$ defined similarly with $E_{\operatorname{D}}^{(\ep,\psi)}$ rather than $E^{(\ep,\psi)}_{\operatorname{PD}}$. From these definitions, it is clear that the following inequalities hold:
\begin{equation}
\label{eq:distillable_quantities_rel}
\begin{aligned}
    \begin{tikzpicture}[baseline=(m.south)]
        \matrix (m) [matrix of math nodes,row sep=1em,column sep=1em,minimum width=2em] {
            E^\psi_{\operatorname{D}}(\rho) & E^\psi_{\operatorname{PD}}(\rho) \\
            \widetilde{E}^\psi_{\operatorname{D}}(\rho) &  \widetilde{E}^\psi_{\operatorname{PD}}(\rho).\\};
        \path[-stealth, auto] (m-1-1) edge[draw=none] node [auto=false]{$\leq$} (m-1-2)
                       (m-2-1) edge[draw=none] node [auto=false]{$\leq$} (m-2-2)
                       (m-1-1) edge[draw=none]
                                    node [sloped, auto=false,
                                     allow upside down] {$\leq$} (m-2-1)
                       (m-1-2) edge[draw=none]
                                    node [sloped, auto=false,
                                     allow upside down] {$\leq$} (m-2-2);
    \end{tikzpicture}
    \end{aligned}
    \end{equation}

\begin{theorem}\label{thm:pure_distill_bound}
    For every $k$-partite state $\rho$, the following inequalities hold:
    \begin{equation}
        \widetilde{E}^\psi_{\operatorname{PD}}(\rho) \leq \frac{1}{\underline{H}_{\min}}R_{\operatorname{H}}(\rho)\leq \frac{1}{\underline{H}_{\min}}R_{\operatorname{S}}(\rho).
    \end{equation}
\end{theorem}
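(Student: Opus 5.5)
The plan is to go through the sandwiched R\'enyi one-shot bound \eqref{eq:oneshot_psi_distill_entangle_renyi} of Theorem~\ref{thm:oneshot_psi_distill} and then take $\alpha\to1^+$. The von Neumann one-shot bound \eqref{eq:oneshot_psi_distill_entangle} is not usable here, because its $1/(1-\ep)$ prefactor blows up in the strong converse regime. Throughout, $\underline{H}_{\min}=\underline{H}_{\min}(\psi)$, and I assume it is positive; otherwise the statement is vacuous.

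Fix $\ep\in[0,1)$, $\alpha>1$ and $n\in\NN$. Applying \eqref{eq:oneshot_psi_distill_entangle_renyi} to the state $\rho^{\otimes n}$ gives
\begin{align}
E^{(\ep,\psi)}_{\operatorname{PD}}(\rho^{\otimes n})\leq\frac{1}{\underline{H}_{\min}}\bp{\widetilde{R}_\alpha(\rho^{\otimes n})+\frac{\alpha}{\alpha-1}\log_2\bp{\frac{1}{1-\ep}}}.
\end{align}
Proposition~\ref{lem:multipartite_rains_inequalities} needs no convexity for the inequality $\boldsymbol{R}\leq\boldsymbol{R}_{\operatorname{H}}$, so $\widetilde{R}_\alpha(\rho^{\otimes n})\leq\widetilde{R}_{\alpha,\operatorname{H}}(\rho^{\otimes n})$. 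The sandwiched R\'enyi relative entropy is tensor-product additive, so Lemma~\ref{lem:min_bi_rains_tensor_power_subadd} gives $\widetilde{R}_{\alpha,\operatorname{H}}(\rho^{\otimes n})\leq n\widetilde{R}_{\alpha,\operatorname{H}}(\rho)$. We now divide by $n$ and take $\limsup_{n\to\infty}$. The $\ep$-dependent term is $O(1/n)$ for each fixed $\alpha>1$ and $\ep<1$, so it vanishes. Taking the supremum over $\ep\in[0,1)$ then gives
\begin{align}
\widetilde{E}^\psi_{\operatorname{PD}}(\rho)\leq\frac{1}{\underline{H}_{\min}}\widetilde{R}_{\alpha,\operatorname{H}}(\rho)\qquad\forall\,\alpha>1.
\end{align}

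The remaining step, and the only real obstacle, is to show that $\inf_{\alpha>1}\widetilde{R}_{\alpha,\operatorname{H}}(\rho)=R_{\operatorname{H}}(\rho)$. I would argue this exactly as in the right-limit half of the proof of Proposition~\ref{prop:multi_rains_limit_sandwiched_renyi_rains}. For a fixed bipartition $m$ and $\tau\in\mc{T}_m$, monotonicity of $\widetilde{D}_\alpha$ in $\alpha$ together with \eqref{res:limit_sandwich_renyi_quantum_rel_entropy} gives $D(\rho\|\tau)=\inf_{\alpha>1}\widetilde{D}_\alpha(\rho\|\tau)$. Two infima can always be interchanged, so
\begin{align}
\inf_{m}\inf_{\tau\in\mc{T}_m}D(\rho\|\tau)=\inf_{\alpha>1}\inf_m\inf_{\tau\in\mc{T}_m}\widetilde{D}_\alpha(\rho\|\tau),
\end{align}
which is the claim. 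No minimax theorem is needed, since only the limit from the right is used. Taking the infimum over $\alpha>1$ in the bound above therefore yields $\widetilde{E}^\psi_{\operatorname{PD}}(\rho)\leq R_{\operatorname{H}}(\rho)/\underline{H}_{\min}$.

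The second inequality, $R_{\operatorname{H}}(\rho)\leq R_{\operatorname{S}}(\rho)$, is immediate from Proposition~\ref{lem:multipartite_rains_inequalities}. Via \eqref{eq:distillable_quantities_rel}, the same bound also holds for $E^\psi_{\operatorname{D}}$, $E^\psi_{\operatorname{PD}}$ and $\widetilde{E}^\psi_{\operatorname{D}}$.
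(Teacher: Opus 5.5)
Your proposal is correct and follows essentially the same route as the paper. Like the paper, you apply the sandwiched R\'enyi one-shot bound to $\rho^{\otimes n}$, pass to $\widetilde{R}_{\alpha,\operatorname{H}}$ and use its tensor-power subadditivity, let $n\to\infty$, and then take the infimum over $\alpha>1$ by commuting infima; your explicit invocation of $\widetilde{R}_\alpha\leq\widetilde{R}_{\alpha,\operatorname{H}}$ from Proposition~\ref{lem:multipartite_rains_inequalities} makes precise a step that the paper's proof uses only implicitly.
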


\begin{proof}
    The rightmost inequality follows from Proposition~\ref{lem:multipartite_rains_inequalities}. Thus, we need only show that
    \begin{align}
        \widetilde{E}^\psi_{\operatorname{PD}}(\rho)\leq\frac{1}{\underline{H}_{\min}}R_{\operatorname{H}}(\rho).
    \end{align}

    We can employ Theorem~\ref{thm:oneshot_psi_distill} along with the tensor-power subadditivity of the sandwiched R\'enyi hurricane entanglement (Lemma~\ref{lem:min_bi_rains_tensor_power_subadd}) to find that, for all $\alpha>1$,
    \begin{align}
        \widetilde{E}_{\operatorname{PD}}(\rho)&=\sup_{\ep\in[0,1)}\limsup_{n\ra\infty}\frac{1}{n}E^\ep_{\operatorname{PD}}\bp{\rho^{\otimes n}}\\
        &\leq\sup_{\ep\in[0,1)}\limsup_{n\ra\infty}\bc{\frac{1}{n\underline{H}_{\min}(\psi)}\bp{\widetilde{R}_{\alpha,\operatorname{H}}(\rho^{\otimes n})+\frac{\alpha}{\alpha-1}\log_2\bp{\frac{1}{1-\ep}}}}\\
        &\leq \sup_{\ep\in[0,1)}\limsup_{n\ra\infty}\bc{\frac{1}{\underline{H}_{\min}(\psi)}\bp{\widetilde{R}_{\alpha,\operatorname{H}}(\rho)+\frac{1}{n}\frac{\alpha}{(\alpha-1)}\log_2\bp{\frac{1}{1-\ep}}}}\\
        &=\sup_{\ep\in[0,1)}\frac{1}{\underline{H}_{\min}(\psi)}\widetilde{R}_{\alpha,\operatorname{H}}(\rho)\\
        &=\frac{1}{\underline{H}_{\min}(\psi)}\widetilde{R}_{\alpha,\operatorname{H}}(\rho).\label{eq:strong_conv_int1}
    \end{align}
    Because this holds for all $\alpha>1$, we can take an infimum over $\alpha>1$ in~\eqref{eq:strong_conv_int1}. The infimum over $\alpha>1$ commutes with the infima in the definition of the sandwiched R\'enyi hurricane entanglement~\eqref{def:hurricane}, and we can use the fact that the sandwiched R\'enyi relative entropy is monotonically increasing for $\alpha>1$ along with~\eqref{res:limit_sandwich_renyi_quantum_rel_entropy} to conclude that
    \begin{align}
        \inf_{\alpha>1}\frac{1}{\underline{H}_{\min}(\psi)}\widetilde{R}_{\alpha,\operatorname{H}}(\rho)=\frac{1}{\underline{H}_{\min}(\psi)}R_{\operatorname{H}}(\rho),
    \end{align}
    thus completing the proof.
\end{proof}

\begin{remark}\label{rmk:hurricane_asymptotic_bound}
Although the hurricane entanglement is only tensor-power subadditive, as opposed to tensor-product subadditive (see Remark~\ref{rmk:hurricane_subadd}), this does not prevent us from establishing an upper bound on asymptotic distillation rates in terms of $R_{\operatorname{H}}$ because we are concerned with many copies of one state in the context of distillation. Due to the ordering of the entanglement measures in Proposition~\ref{lem:multipartite_rains_inequalities}, the hurricane entanglement is a tighter upper bound than the squall entanglement.
\end{remark}

\subsection{W state distillation}

\begin{corollary}
Let $\rho\in\mc{S}(\hilb_k)$ be a $k$-partite state, and let $W_k$ denote the $k$-qubit W state. Then, the following upper bounds on the one-shot $(\ep,W_k)$-distillable entanglement of $\rho$ hold:
    \begin{align}
        E^{(\ep,W_k)}_{\operatorname{D}}(\rho)\leq E^{(\ep,W_k)}_{\operatorname{PD}}(\rho)&\leq\frac{1}{(1-\ep)\log_2\bp{\frac{k}{k-1}}}(R(\rho)+h_2(\ep)),\label{eq:oneshot_w_distill_entangle}\\
        E^{(\ep,W_k)}_{\operatorname{D}}(\rho)\leq E^{(\ep,W_k)}_{\operatorname{PD}}(\rho)&\leq\frac{1}{\log_2\bp{\frac{k}{k-1}}}\,\inf_{\alpha>1}\bc{\widetilde{R}_\alpha(\rho)+\frac{\alpha}{\alpha-1}\log_2\bp{\frac{1}{1-\ep}}}\label{eq:oneshot_w_distill_entangle_renyi}.
    \end{align}
\end{corollary}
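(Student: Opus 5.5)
This corollary is a direct specialization of Theorem~\ref{thm:oneshot_psi_distill} to the target pure state $\psi = W_k$. Both bounds in that theorem are already stated for an arbitrary pure target $\psi$, with the target entering only through $\underline{H}_{\min}(\psi)$. The only thing to establish is therefore
\begin{equation*}
\underline{H}_{\min}(W_k)=\log_2\bp{\tfrac{k}{k-1}}.
\end{equation*}
Once this is in hand, substituting it into \eqref{eq:oneshot_psi_distill_entangle} and \eqref{eq:oneshot_psi_distill_entangle_renyi} gives \eqref{eq:oneshot_w_distill_entangle} and \eqref{eq:oneshot_w_distill_entangle_renyi} verbatim. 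The leftmost inequalities $E^{(\ep,W_k)}_{\operatorname{D}}\leq E^{(\ep,W_k)}_{\operatorname{PD}}$ hold because setting $p=1$ in \eqref{def:ep_psi_one_shot_padme} recovers \eqref{def:ep_psi_one_shot_distill}.

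To compute $\underline{H}_{\min}(W_k)=\min_m H_{\min}(\Tr_m[W_k])$, I would consider a bipartition $m$ that splits the $k$ qubits into sets of sizes $j$ and $k-j$, with $1\leq j\leq k-1$. Grouping terms, $\ket{W_k}$ can be written as
\begin{equation*}
\sqrt{\tfrac{j}{k}}\,\ket{W_j}\ket{0\cdots0}+\sqrt{\tfrac{k-j}{k}}\,\ket{0\cdots0}\ket{W_{k-j}}.
\end{equation*}
This is a Schmidt decomposition, because $\ket{W_j}\perp\ket{0\cdots0}$ on each side. The reduced state therefore has eigenvalues $\{j/k,(k-j)/k\}$, so
\begin{equation*}
\lambda_{\max}(\Tr_m[W_k])=\frac{\max\{j,k-j\}}{k}.
\end{equation*}
Maximizing over bipartitions gives the largest eigenvalue $(k-1)/k$, attained by the single-qubit cuts $j=1$. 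Hence
\begin{equation*}
\underline{H}_{\min}(W_k)=-\log_2\tfrac{k-1}{k}=\log_2\tfrac{k}{k-1}.
\end{equation*}
Equivalently, via \eqref{eq:spec_norm_min_entr}, $\max_m\norm{T_m(W_k)}_\infty=(k-1)/k$.

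This value is consistent with the threshold $F\geq\frac{k-1}{k}$ in Lemma~\ref{lem:gmre_lb_w}, and the computation may already appear in Appendix~\ref{app:gmre_lb_w}; in that case I would simply cite it.

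There is no real obstacle here. The only point needing care is checking that the Schmidt decomposition is valid for every bipartition, including unbalanced ones. It is also worth noting that $k\geq2$ ensures $\underline{H}_{\min}(W_k)>0$, so the division in Theorem~\ref{thm:oneshot_psi_distill} is legitimate.
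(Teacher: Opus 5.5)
Your proposal is correct and matches the paper's proof, which simply combines Theorem~\ref{thm:oneshot_psi_distill} with the value $\underline{H}_{\min}(W_k)=\log_2\bp{\frac{k}{k-1}}$ computed in the proof of Lemma~\ref{lem:gmre_lb_w}. The only difference is how you obtain that value: you read it off the Schmidt decomposition of the reduced state, whereas Appendix~\ref{app:gmre_lb_w} explicitly diagonalizes $T_S(W_k)$, and the two computations are equivalent via \eqref{eq:spec_norm_min_entr}.
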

\begin{proof}
    This is a direct consequence of Theorem~\ref{thm:oneshot_psi_distill} and Lemma~\ref{lem:gmre_lb_w}.
\end{proof}

Recall from \eqref{eq:distillable_quantities_rel} that the following inequalities hold:
\begin{equation}\label{eq:distillable_quantities_rel}
\begin{aligned}
    \begin{tikzpicture}[baseline=(m.base)]
        \matrix (m) [matrix of math nodes,row sep=1em,column sep=1em,minimum width=2em] {
            E^{W_k}_{\operatorname{D}}(\rho) & E^{W_k}_{\operatorname{PD}}(\rho) \\
            \widetilde{E}^{W_k}_{\operatorname{D}}(\rho) &  \widetilde{E}^{W_k}_{\operatorname{PD}}(\rho).\\};
        \path[-stealth, auto] (m-1-1) edge[draw=none] node [auto=false]{$\leq$} (m-1-2)
                       (m-2-1) edge[draw=none] node [auto=false]{$\leq$} (m-2-2)
                       (m-1-1) edge[draw=none]
                                    node [sloped, auto=false,
                                     allow upside down] {$\leq$} (m-2-1)
                       (m-1-2) edge[draw=none]
                                    node [sloped, auto=false,
                                     allow upside down] {$\leq$} (m-2-2);
    \end{tikzpicture}
    \end{aligned}
    \end{equation}

\begin{corollary}\label{cor:w_distill}
    Let $W_k$ denote the $k$-qubit W state. For every $k$-partite state $\rho$, the following inequalities hold:
    \begin{equation}\label{eq:distillable_w_bounds}
    \widetilde{E}^{W_k}_{\operatorname{PD}}(\rho) \leq \frac{1}{\log_2\bp{\frac{k}{k-1}}}\,R_{\operatorname{H}}(\rho) \leq \frac{1}{\log_2\bp{\frac{k}{k-1}}}\,R_{\operatorname{S}}(\rho).
    \end{equation}
\end{corollary}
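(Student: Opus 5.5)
This is a specialization of Theorem~\ref{thm:pure_distill_bound} to the target $\psi = W_k$. The plan is to invoke that theorem and then evaluate $\underline{H}_{\min}(W_k)$ explicitly, showing that it equals $\log_2\bp{\frac{k}{k-1}}$. With that value in hand, both inequalities in~\eqref{eq:distillable_w_bounds} follow directly. The rightmost inequality is also immediate from $R_{\operatorname{H}}\leq R_{\operatorname{S}}$ in Proposition~\ref{lem:multipartite_rains_inequalities}.

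To compute $\underline{H}_{\min}(W_k)=\min_m H_{\min}(\Tr_m[W_k])$, fix a bipartition $m$ that splits the $k$ qubits into groups of sizes $j$ and $k-j$, with $1\leq j\leq k-1$. Write
\begin{align}
\ket{W_k}=\sqrt{\tfrac{j}{k}}\ket{W_j}\ket{0\cdots0}+\sqrt{\tfrac{k-j}{k}}\ket{0\cdots0}\ket{W_{k-j}}.
\end{align}
This is a Schmidt decomposition with coefficients $\frac{j}{k}$ and $\frac{k-j}{k}$, because $\ket{W_j}\perp\ket{0\cdots0}$ on each side. The largest eigenvalue of the reduced state is therefore $\max\{j,k-j\}/k$.

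Maximizing this over $j$ gives $\frac{k-1}{k}$, attained at $j=1$ or $j=k-1$. Hence
\begin{align}
\underline{H}_{\min}(W_k)=-\log_2\bp{\tfrac{k-1}{k}}=\log_2\bp{\tfrac{k}{k-1}}.
\end{align}
By~\eqref{eq:spec_norm_min_entr}, this is equivalent to $\max_m\norm{T_m(W_k)}_\infty=\frac{k-1}{k}$. That value matches the fidelity threshold in Lemma~\ref{lem:gmre_lb_w}, which gives a consistency check.

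Substituting this value into Theorem~\ref{thm:pure_distill_bound} yields the corollary. The only step needing care is the Schmidt-coefficient computation. No step poses a real obstacle, since the bound on the asymptotic rate is inherited entirely from the general theorem.
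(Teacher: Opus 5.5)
Your proposal is correct and follows the paper's proof, which likewise specializes Theorem~\ref{thm:pure_distill_bound} to $\psi=W_k$ using $\underline{H}_{\min}(W_k)=\log_2\bp{\frac{k}{k-1}}$, with that value taken from Lemma~\ref{lem:gmre_lb_w}. The only minor difference is that you obtain $\underline{H}_{\min}(W_k)$ directly from the Schmidt decomposition across each bipartition. The paper's appendix instead reads it off the spectrum of $T_S(W_k)$; the two routes are equivalent via~\eqref{eq:spec_norm_min_entr}, and yours is slightly more direct.
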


\begin{proof}
    This follows immediately from Lemma~\ref{lem:gmre_lb_w} and Theorem~\ref{thm:pure_distill_bound}.
\end{proof}

\subsection{GHZ distillation}

Given $\ep\in[0,1]$, the one-shot $\ep$-GHZ-distillable entanglement of a $k$-partite state $\rho$ is defined as
\begin{align}
    E^\ep_{\operatorname{D}}(\rho)&\coloneq\sup_{(d,\mc{L}^\leftrightarrow)}\bc{\log_2d:F(\Phi^d_k,\mc{L}^{\leftrightarrow}(\rho))\geq1-\ep},
    \label{eq:def-1-shot-GHZ-distill-ent}
\end{align}
where $d\in \mathbb{N}$, $\mc{L}^{\leftrightarrow}\in\textrm{LOCC}(\hilb_k,\hilb'^{\otimes k})$,  $\textrm{dim}(\hilb')=d$, and $\Phi^d_k$ is defined in \eqref{eq:GHZ-def}. The one-shot $\ep$-probabilistic approximate GHZ-distillable entanglement of $\rho$ is defined as
\begin{multline}
    E^\ep_{\operatorname{PD}}(\rho)\coloneq\sup_{(d,\mc{L}^\leftrightarrow,p\in[0,1])}\bigl\{p\log_2d:\mc{L}^\leftrightarrow(\rho)=p\op{1}{1}\otimes\widetilde{\Phi}^d_k+(1-p)\op{0}{0}\otimes\sigma,\, F(\Phi^d_k,\widetilde{\Phi}^d_k)\geq1-\ep,\, \sigma\in\mc{S}(\hilb'^{\otimes k})\bigr\}.
\end{multline}
The asymptotic quantities are then defined as
\begin{align}
    E_{\operatorname{D}}(\rho) & \coloneqq 
    \inf_{\ep\in(0,1]}\liminf_{n\ra\infty}\frac{1}{n} E^\ep_{\operatorname{D}}(\rho^{\otimes n}),\\
    E_{\operatorname{PD}}(\rho) & \coloneqq 
    \inf_{\ep\in(0,1]}\liminf_{n\ra\infty}\frac{1}{n} E^\ep_{\operatorname{PD}}(\rho^{\otimes n}),\\
    \widetilde{E}_{\operatorname{D}}(\rho) & \coloneqq  \sup_{\ep\in[0,1)}\limsup_{n\ra\infty} 
    \frac{1}{n} E^\ep_{\operatorname{D}}(\rho^{\otimes n}),\\
    \widetilde{E}_{\operatorname{PD}}(\rho) & \coloneqq  \sup_{\ep\in[0,1)}\limsup_{n\ra\infty} 
    \frac{1}{n} E^\ep_{\operatorname{PD}}(\rho^{\otimes n}).
\end{align}

Recall from \eqref{eq:distillable_quantities_rel}  that the following inequalities hold:
\begin{equation}\label{eq:distillable_quantities_rel}
\begin{aligned}
    \begin{tikzpicture}[baseline=(m.base)]
        \matrix (m) [matrix of math nodes,row sep=1em,column sep=1em,minimum width=2em] {
            E_{\operatorname{D}}(\rho) & E_{\operatorname{PD}}(\rho) \\
            \widetilde{E}_{\operatorname{D}}(\rho) &  \widetilde{E}_{\operatorname{PD}}(\rho).\\};
        \path[-stealth, auto] (m-1-1) edge[draw=none] node [auto=false]{$\leq$} (m-1-2)
                       (m-2-1) edge[draw=none] node [auto=false]{$\leq$} (m-2-2)
                       (m-1-1) edge[draw=none]
                                    node [sloped, auto=false,
                                     allow upside down] {$\leq$} (m-2-1)
                       (m-1-2) edge[draw=none]
                                    node [sloped, auto=false,
                                     allow upside down] {$\leq$} (m-2-2);
    \end{tikzpicture}
    \end{aligned}
    \end{equation}

\begin{corollary}\label{cor:ghz_distill}
    For every $k$-partite state $\rho$, the following inequalities hold:
    \begin{equation}\label{eq:distillable_w_bounds}
    \widetilde{E}_{\operatorname{PD}}(\rho) \leq R_{\operatorname{H}}(\rho)\leq R_{\operatorname{S}}(\rho).
    \end{equation}
\end{corollary}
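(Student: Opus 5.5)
The plan is to obtain Corollary~\ref{cor:ghz_distill} by applying the argument of Theorem~\ref{thm:pure_distill_bound} with the GHZ family as the target. The GHZ quantities, however, optimize over the local dimension $d$ rather than over a number of copies $\ell$ of a fixed $\psi$. So I will rerun the one-shot sandwiched R\'enyi bound directly, with $\Phi^d_k$ as the target, instead of citing Theorem~\ref{thm:oneshot_psi_distill} verbatim. The rightmost inequality $R_{\operatorname{H}}(\rho)\leq R_{\operatorname{S}}(\rho)$ is immediate from Proposition~\ref{lem:multipartite_rains_inequalities}, so only the first inequality needs work.

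\textbf{One-shot step.} Fix $\alpha>1$ and an arbitrary probabilistic protocol $(d,\mc{L}^\leftrightarrow,p)$ whose output is $p\op{1}{1}\otimes\widetilde{\Phi}^d_k+(1-p)\op{0}{0}\otimes\sigma$ with $F(\Phi^d_k,\widetilde{\Phi}^d_k)\geq 1-\ep$. Since LOCC channels are PPT operations, Theorem~\ref{rains_ppt_monotone_thm} (selective PPT monotonicity, using the direct-sum inequality for $\alpha>1$) gives $\widetilde{R}_\alpha(\rho)\geq p\,\widetilde{R}_\alpha(\widetilde{\Phi}^d_k)$. Nonnegativity disposes of the failure branch. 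Next I apply Lemma~\ref{lem:renyi_gmre_pure_lb_ub} with $\psi=\Phi^d_k$, using $\underline{H}_{\min}(\Phi^d_k)=\log_2 d$ from the proof of Lemma~\ref{multipartite_rains_lower_bound_lemma}. This yields
\begin{align}
\widetilde{R}_\alpha(\rho)\geq p\log_2 d+p\tfrac{\alpha}{\alpha-1}\log_2(1-\ep)\geq p\log_2 d-\tfrac{\alpha}{\alpha-1}\log_2\tfrac{1}{1-\ep}.
\end{align}
Hence $E^\ep_{\operatorname{PD}}(\rho)\leq \widetilde{R}_\alpha(\rho)+\frac{\alpha}{\alpha-1}\log_2\frac{1}{1-\ep}$. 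This Lemma holds for every $\rho$, with no fidelity threshold required, which is why I use the R\'enyi version rather than the $h_2$ version.

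\textbf{Asymptotic step.} I apply the one-shot bound to $\rho^{\otimes n}$ and upper bound $\widetilde{R}_\alpha\leq \widetilde{R}_{\alpha,\operatorname{H}}$ by Proposition~\ref{lem:multipartite_rains_inequalities}. The sandwiched R\'enyi divergence is a generalized divergence with scaling, so that proposition applies. Tensor-power subadditivity from Lemma~\ref{lem:min_bi_rains_tensor_power_subadd} then gives $\widetilde{R}_{\alpha,\operatorname{H}}(\rho^{\otimes n})\leq n\widetilde{R}_{\alpha,\operatorname{H}}(\rho)$. Dividing by $n$, the $\ep$-dependent term vanishes as $n\to\infty$ for each fixed $\ep<1$, so $\widetilde{E}_{\operatorname{PD}}(\rho)\leq \widetilde{R}_{\alpha,\operatorname{H}}(\rho)$ for all $\alpha>1$. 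Finally, I take $\inf_{\alpha>1}$ and exchange it with the infima over $m$ and $\sigma\in\mc{T}_m$. This is justified by monotonicity of $\widetilde{D}_\alpha$ in $\alpha$ and~\eqref{res:limit_sandwich_renyi_quantum_rel_entropy}, exactly as at the end of the proof of Theorem~\ref{thm:pure_distill_bound}, and it gives $R_{\operatorname{H}}(\rho)$.

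\textbf{Main obstacle.} The only delicate point is the mismatch in protocol structure: variable local dimension $d$ versus $\ell$ copies of a fixed target. The plan handles this by making the one-shot bound uniform in $d$; the bound depends on $d$ only through $\log_2 d=\underline{H}_{\min}(\Phi^d_k)$. One could instead note that $\Phi^{2^\ell}_k\cong(\Phi^2_k)^{\otimes\ell}$ and invoke Theorem~\ref{thm:pure_distill_bound} with $\psi=\Phi^2_k$, but that covers only $d$ a power of two. The direct argument avoids this restriction.
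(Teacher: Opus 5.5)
Your proposal is correct and follows the paper's route. The paper's proof just cites Theorem~\ref{thm:pure_distill_bound} together with $\underline{H}_{\min}(\Phi^d_k)=\log_2 d$ from Lemma~\ref{multipartite_rains_lower_bound_lemma}, while you rerun that same chain directly for the GHZ family: selective PPT monotonicity of $\widetilde{R}_\alpha$, Lemma~\ref{lem:renyi_gmre_pure_lb_ub}, $\widetilde{R}_\alpha\le\widetilde{R}_{\alpha,\operatorname{H}}$, tensor-power subadditivity, and $\alpha\to1$. Your one-shot bound depends on $d$ only through $\log_2 d$, so it handles the variable-dimension definition of $E^\ep_{\operatorname{PD}}$, a point the paper's one-line citation (stated for $\ell$ copies of a fixed $\psi$) glosses over.
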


\begin{proof}
    This follows immediately from Lemma~\ref{multipartite_rains_lower_bound_lemma} and Theorem~\ref{thm:pure_distill_bound}.
\end{proof}

\section{Multipartite max-Rains entanglement SDP}

\label{sec:sdp}

In this section, we define the max-Rains entanglement and the max-hurricane entanglement, and we formulate them as SDPs. They are easily related to the Rains and hurricane entanglement by the inequality in~\eqref{eq:ineq-D-Dmax} relating the relative entropy and max-relative entropy. One benefit of these entanglement measures is that they are easier to compute than the Rains and hurricane entanglement because they involve no relative-entropy optimization. While the max-Rains entanglement is larger than the Rains entanglement, we show that the max-Rains entanglement is no greater than the genuine multipartite log-negativity.

We define the max-Rains entanglement of a $k$-partite quantum state $\rho$ as
\begin{align}
    R_{\max}(\rho)
    &\coloneq\inf_{\tau\in\mc{T}(\hilb_k)}D_{\max}(\rho\|\tau).\label{eq:rmax_dmax}
\end{align}
We define the max-hurricane entanglement of $\rho$ as
\begin{align}
    R_{\operatorname{max,H}}(\rho)\coloneq\inf_{m} \inf_{\tau\in\mc{T}_m(\hilb_k)}D_{\operatorname{max}}(\rho\|\tau),
\end{align}
where, as before, $m $ is an index for an arbitrary bipartition.
It is straightforward to formulate the optimization for $R_{\max,\operatorname{H}}$ using the SDPs in~\eqref{eq:bi_max_rains_sdp1} and~\eqref{eq:bi_max_rains_sdp2}:
\begin{align}
    R_{\operatorname{max,H}}&=\inf_m\log_2W_{\operatorname{max},m}(\rho),\label{eq:sdp_max_min_bi_rains}\\
    W_{\operatorname{max},m}(\rho)&\coloneq \inf_{K,L\geq0}\bc{\Tr[K+L]:T_m(K-L)\geq\rho}\\
    &=\sup_{Y\geq0}\bc{\Tr[Y\rho]:\norm{T_m(Y)}_\infty\leq1}.
\end{align}

In the following lemma, we write the max-Rains entanglement as an SDP and derive a dual program.
\begin{lemma}\label{lem:sdp_max_gmre}
    For a $k$-partite state $\rho$, the max-Rains entanglement can be written as
    \begin{align}
        R_{\max}(\rho)&=\log_2W_{\max}(\rho),\label{eq:sdp_max_gmre}\\
    W_{\max}(\rho)&=\inf_{\substack{\eta_m,\eta^+_m,\\ \eta^-_m\geq0}}\bc{\sum_m\Tr\bb{\eta^+_m+\eta^-_m}:\rho\leq\sum_m\eta_m,\, T_m\bp{\eta_m}=\eta^+_m-\eta^-_m\ \forall m}\\
    &=\sup_{\substack{W\geq0, \\ H_m\in\operatorname{Herm}(\hilb_k)}}\bc{\Tr\bb{W\rho}:T_m\bp{H_m}-W\geq0,-\ii\leq H_m\leq\ii\ \forall m},\label{eq:wmax_sup}
    \end{align}
    where $\operatorname{Herm}(\hilb_k)$ is the set of Hermitian operators on $\hilb_k$ and $m$ is an index for an arbitrary bipartition.
\end{lemma}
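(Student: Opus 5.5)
The plan is to obtain the primal SDP by combining the max-relative entropy SDP in~\eqref{eq:lambda_sdp_dmax} with the semidefinite description of $\mc{T}$ in Lemma~\ref{lem:T_semidef_constraint}. The dual will then follow from a standard Lagrangian argument together with Slater's condition.

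\emph{Primal.} By~\eqref{eq:rmax_dmax} and~\eqref{eq:lambda_sdp_dmax},
\[
R_{\max}(\rho)=\log_2\inf\bc{\lambda\geq0:\ \tau\in\mc{T},\ \rho\leq\lambda\tau}.
\]
Take a feasible pair $(\lambda,\tau)$ with $\tau=\sum_m\widetilde{\tau}_m$, and take $\widetilde{\tau}_m^\pm$ as supplied by Lemma~\ref{lem:T_semidef_constraint}. Setting $\eta_m\coloneq\lambda\widetilde{\tau}_m$ and $\eta_m^\pm\coloneq\lambda\widetilde{\tau}_m^\pm$ gives a feasible point of the infimum defining $W_{\max}(\rho)$ with objective at most $\lambda$. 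Hence $W_{\max}(\rho)\leq$ the $\lambda$-infimum.

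For the converse, take a feasible $(\eta_m,\eta_m^\pm)$ and set $\lambda\coloneq\sum_m\Tr[\eta_m^++\eta_m^-]$. We have $\lambda>0$: if $\lambda=0$, then $\eta_m=T_m(\eta_m^+-\eta_m^-)=0$ for every $m$, contradicting $\rho\leq\sum_m\eta_m$ with $\Tr[\rho]=1$. Put $\widetilde{\tau}_m\coloneq\eta_m/\lambda$. By Lemma~\ref{lem:T_semidef_constraint}, $\tau=\sum_m\widetilde{\tau}_m\in\mc{T}$, and $\rho\leq\lambda\tau$. This gives the reverse inequality and hence~\eqref{eq:sdp_max_gmre} with the infimum formula.

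\emph{Dual.} Introduce a multiplier $W\geq0$ for the constraint $\sum_m\eta_m-\rho\geq0$, and Hermitian multipliers $H_m$ for the equality constraints $T_m(\eta_m)-\eta_m^++\eta_m^-=0$. Using self-adjointness of $T_m$, the Lagrangian rearranges to
\[
\Tr[W\rho]+\sum_m\bp{\Tr[\eta_m^+(\ii-H_m)]+\Tr[\eta_m^-(\ii+H_m)]+\Tr[\eta_m(T_m(H_m)-W)]}.
\]
Taking the infimum over $\eta_m,\eta_m^\pm\geq0$ yields $\Tr[W\rho]$ exactly when $\ii-H_m\geq0$, $\ii+H_m\geq0$, and $T_m(H_m)-W\geq0$ for all $m$, and $-\infty$ otherwise. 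This gives the supremum in~\eqref{eq:wmax_sup} together with weak duality ($\sup\leq\inf$).

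\emph{Strong duality.} This is the only step needing care, though it is routine. I would verify Slater's condition on the dual side. The point $W=\ep\ii$, $H_m=\tfrac12\ii$ is strictly feasible for small $\ep>0$: $T_m(\ii/2)=\ii/2$, so $T_m(H_m)-W=(\tfrac12-\ep)\ii>0$, and $-\ii<H_m<\ii$. The primal is also feasible, e.g.\ $\eta_{m_1}=\ii$ with $\eta_{m_1}^+=\ii$, $\eta_{m_1}^-=0$, and all other blocks zero. So the primal value is finite, and strong duality holds with the primal infimum attained. Taking $\log_2$ completes the plan.
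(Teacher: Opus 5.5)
Your proposal is correct and follows essentially the same route as the paper. It uses the same rescaling $\eta_m=\lambda\widetilde{\tau}_m$ through Lemma~\ref{lem:T_semidef_constraint}, the same Lagrangian, and strong duality from primal feasibility plus dual strict feasibility; your point $W=\ep\ii$, $H_m=\tfrac12\ii$ with $\ep=\tfrac14$ is exactly the paper's choice of $\tfrac14\ii$ blocks, and the differences (checking Slater directly in Lagrangian form rather than computing $\Phi^\dag$, using $\eta_{m_1}=\ii$ as the primal point, and your two-sided primal argument with $\lambda>0$) are cosmetic.
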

\begin{proof}
    See Appendix~\ref{app:sdp_max_gmre}.
\end{proof}
The genuine multipartite negativity (GMN) is an entanglement measure introduced in~\cite{jungnitsch2011}. For a state $\rho$, the GMN of $\rho$ is given by~\cite[Eq.~(9)]{hofmann2014}
\begin{align}
    N(\rho)&\coloneq\min_{\rho=\sum_mp_m\rho_m}\sum_mp_mN_m(\rho_m),
\end{align}
where $N_m$ refers to the bipartite negativity and $m$ denotes a bipartition.
\begin{remark}
    The genuine multipartite negativity of a $k$-partite state $\rho$ can be expressed as the following SDP:
    \begin{align}
        N(\rho)&=-\frac{1}{2}+\frac{1}{2}\sup_{H_m,W\in\operatorname{Herm}(\hilb_k)}\bc{\Tr[W\rho]:T_m(H_m)-W\geq0,-\ii\leq H_m\leq\ii\ \forall m}.\label{eq:gmn_sdp}
    \end{align}
    We include a derivation of this SDP in Appendix~\ref{app:gmn_sdp}, where we show that it is equivalent to the SDP for the GMN given in~\cite[Eq.~(A.3)]{hofmann2014}. The genuine multipartite log-negativity $E_N$ satisfies $E_N=\log_2(2N+1)$. Applying this to~\eqref{eq:gmn_sdp}, we find that the SDP for the genuine multipartite log-negativity (log-GMN) matches the SDP for $R_{\max}$ up to the condition that $W$ is Hermitian and not additionally constrained to be positive semidefinite. This same relationship holds for the SDPs corresponding to the bipartite max-Rains relative entropy and log-negativity. The fact that the formula for the max-Rains entanglement in~\eqref{eq:wmax_sup} is the same as the log-GMN SDP with an additional constraint implies that the max-Rains entanglement is no greater than the log-GMN.
\end{remark}

\begin{proposition}\label{prop:rmax_lim_sandwich}
    Let $\rho$ be a $k$-partite state. Then the following relationship between the sandwiched R\'enyi-Rains entanglement and the max-sandwiched R\'enyi-Rains entanglement holds:
    \begin{align}
        \lim_{\alpha\ra\infty}\widetilde{R}_\alpha(\rho)&=R_{\operatorname{max}}(\rho).\label{eq:rmax_lim_sandwich}
    \end{align}
\end{proposition}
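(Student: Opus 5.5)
The plan is to mirror the proof of Proposition~\ref{prop:multi_rains_limit_sandwiched_renyi_rains}, working on the ``large-$\alpha$'' side instead of the $\alpha\ra1$ side. By Lemma~\ref{lem:alt_rains_def} applied to $\widetilde{D}_\alpha$ (which satisfies data processing and the scaling property for $\alpha>1$), we have $\widetilde{R}_\alpha(\rho)=\inf_{\tau\in\mc{T}}\widetilde{D}_\alpha(\rho\|\tau)$. The definition~\eqref{eq:rmax_dmax} also gives $R_{\max}(\rho)=\inf_{\tau\in\mc{T}}D_{\max}(\rho\|\tau)$. Since $\widetilde{D}_\alpha$ is monotonically increasing in $\alpha$, the map $\alpha\mapsto\widetilde{R}_\alpha(\rho)$ is nondecreasing. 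Hence the limit exists and equals $\sup_{\alpha>1}\widetilde{R}_\alpha(\rho)$, so it suffices to show
\begin{align}
\sup_{\alpha>1}\inf_{\tau\in\mc{T}}\widetilde{D}_\alpha(\rho\|\tau)=\inf_{\tau\in\mc{T}}\sup_{\alpha>1}\widetilde{D}_\alpha(\rho\|\tau).
\end{align}
The right-hand side equals $R_{\max}(\rho)$ by~\eqref{eq:dmax_limit_sandwich_renyi}, again using monotonicity in $\alpha$ to replace the limit with a supremum.

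The inequality ``$\leq$'' is immediate. For every $\alpha$ and every $\tau\in\mc{T}$ we have $\widetilde{D}_\alpha(\rho\|\tau)\leq D_{\max}(\rho\|\tau)$, so $\widetilde{R}_\alpha(\rho)\leq R_{\max}(\rho)$. Taking the supremum over $\alpha$ gives the claim.

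The reverse inequality ``$\geq$'' is the main step, and I would invoke the Mosonyi--Hiai minimax theorem exactly as in Proposition~\ref{prop:multi_rains_limit_sandwiched_renyi_rains}. This requires checking its hypotheses:
\begin{enumerate}
\item The set $\mc{T}(\hilb_k)$ is compact. It is closed, and it is bounded because every element has trace at most one by~\eqref{trace_element_t_leq_1} and is PSD. Closedness follows from Lemma~\ref{lem:T_semidef_constraint}: $\mc{T}$ is the projection of the set of tuples $(\widetilde{\tau}_m,\widetilde{\tau}_m^\pm)$ satisfying semidefinite constraints with bounded traces, and that set is compact, so its continuous image is compact.
\item For each fixed $\alpha>1$, the map $\tau\mapsto\widetilde{D}_\alpha(\rho\|\tau)$ is lower semicontinuous, taking the value $+\infty$ when the support condition fails.
\item The family is monotone (nondecreasing) in $\alpha$.
\end{enumerate}
Corollary~A.2 of Mosonyi--Hiai then allows the supremum over the monotone index $\alpha$ to be exchanged with the infimum over the compact set. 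This yields $\sup_\alpha\inf_\tau=\inf_\tau\sup_\alpha=R_{\max}(\rho)$.

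As a fallback, in case the minimax citation is not directly applicable with the supremum taken over $\alpha\ra\infty$, I would argue by compactness directly. Pick minimizers $\tau_\alpha\in\mc{T}$ for each $\alpha$; these exist by lower semicontinuity and compactness. Extract a subsequence $\tau_{\alpha_j}\ra\tau^*\in\mc{T}$. For a fixed $\beta$ and all $\alpha_j\geq\beta$, monotonicity gives
\begin{align}
\widetilde{R}_{\alpha_j}(\rho)=\widetilde{D}_{\alpha_j}(\rho\|\tau_{\alpha_j})\geq\widetilde{D}_\beta(\rho\|\tau_{\alpha_j}).
\end{align}
Letting $j\to\infty$ and using lower semicontinuity yields $\lim_\alpha\widetilde{R}_\alpha(\rho)\geq\widetilde{D}_\beta(\rho\|\tau^*)$. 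Sending $\beta\to\infty$ and using~\eqref{eq:dmax_limit_sandwich_renyi} gives $\lim_\alpha\widetilde{R}_\alpha(\rho)\geq D_{\max}(\rho\|\tau^*)\geq R_{\max}(\rho)$.

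The expected obstacle is the justification of lower semicontinuity in $\tau$, including at $\tau$ whose support does not contain that of $\rho$. I would handle this by writing $\widetilde{D}_\alpha(\rho\|\tau)=\lim_{\ep\to0}\widetilde{D}_\alpha(\rho\|\tau+\ep\ii)$. This is a supremum of continuous functions, because $\widetilde{D}_\alpha$ is antimonotone in its second argument for $\alpha>1$, and a supremum of continuous functions is lower semicontinuous.
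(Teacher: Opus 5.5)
Your proposal is correct and follows the same route as the paper. Both arguments use monotonicity in $\alpha$ to write the limit as $\sup_{\alpha>1}\inf_{\tau\in\mc{T}}\widetilde{D}_\alpha(\rho\|\tau)$, swap the supremum and infimum with the Mosonyi--Hiai minimax theorem (compactness of $\mc{T}$ and lower semicontinuity in $\tau$), and identify $\sup_{\alpha>1}\widetilde{D}_\alpha$ with $D_{\max}$; your explicit checks of compactness and lower semicontinuity, and your direct subsequence argument, only fill in details the paper leaves implicit.
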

\begin{proof}
    Using the fact that the sandwiched R\'enyi relative entropy is monotonically increasing in $\alpha$ for all $\alpha\in(0,1)\cup(1,\infty)$, along with~\eqref{eq:dmax_limit_sandwich_renyi}, it follows that
    \begin{align}
        \sup_{\alpha\in(1,\infty)}\widetilde{D}_\alpha(\rho\|\sigma)=\lim_{\alpha\ra\infty}\widetilde{D}_\alpha(\rho\|\sigma)=D_{\operatorname{max}}(\rho\|\sigma).
    \end{align}
    Then, 
    \begin{align}
        \lim_{\alpha\ra\infty}\widetilde{R}_\alpha(\rho)&=\lim_{\alpha\ra\infty}\inf_{\sigma\in\mc{T}}\widetilde{D}_\alpha(\rho\|\sigma)\\
        &=\sup_{\alpha\in(1,\infty)}\inf_{\sigma\in\mc{T}}\widetilde{D}_\alpha(\rho\|\sigma)\label{eq:rmax_lim_sandwich1}\\
        &=\inf_{\sigma\in\mc{T}}\sup_{\alpha\in(1,\infty)}\widetilde{D}_\alpha(\rho\|\sigma)\label{eq:rmax_lim_sandwich2}\\
        &=\inf_{\sigma\in\mc{T}}D_{\operatorname{max}}(\rho)\\
        &=R_{\operatorname{max}}(\rho),
    \end{align}
    where we used the lower semi-continuity of $(\cdot,\alpha)\ra\widetilde{D}_\alpha(\rho\|\cdot)$ as well as the compactness of $\mc{T}$ to employ the Mosonyi--Hiai minimax theorem \cite[Corollary A.2]{mosonyi_hiai2011} and switch the order of the supremum and infimum in~\eqref{eq:rmax_lim_sandwich2}.
\end{proof}

To close out this section, we develop a related application where the SDP for the max-Rains entanglement is helpful. This involves formulating an SDP to compute the Rains and hurricane entanglement.  For this purpose, the following lower bound on the quantum relative entropy can be used~\cite[Proposition 1]{koßmann_schwonnek2025}:
\begin{align}
    D(\rho\|\sigma)&\geq\inf_{\mu_k\geq0}\bc{\sum_{k=1}^{r-1}\Tr[\mu_k]+\log\lambda+1-\lambda:\mu_k\geq\alpha_k\rho+\beta_k\sigma\fall k=1,\ldots,r-1},
\end{align}
where $r\in \mathbb{N}$, the coefficients $\alpha_k$ and $\beta_k$ are given by
\begin{align}
    \alpha_k&\coloneqq \log\bp{\frac{t_k}{t_{k+1}}},\\
    \beta_k&\coloneqq t_{k+1}-t_k,
\end{align}
and $\mu\leq t_1\leq\ldots\leq t_r=\lambda$ for $\lambda>\mu\geq0$. For simplicity, we can choose $\mu=0$. Because $\lambda$ is a bound on the max-relative entropy of the set of feasible states~\cite{koßmann_schwonnek2025}, we can use the SDPs in~\eqref{eq:sdp_max_min_bi_rains} and~\eqref{eq:sdp_max_gmre} to find $\lambda$, and then we can use the approach of~\cite{koßmann_schwonnek2025} to compute the Rains or hurricane entanglement.

For a discussion of alternative algorithms for computing the Rains entanglement using the Frank--Wolfe algorithm, see Appendix~\ref{app:Rains-algs}.

\section{Applications to quantum networks}

\label{sec:qpin-models}

In this section, we consider applications of our multipartite entanglement measures to quantum networks. In particular, we analyze quantum Pairwise Independent Networks (qPINs). The qPIN model is inspired by the classical PIN model \cite{ye2006gaussian,ye2007group,nitinawarat-secret-key,nitinawarat-perfect}, wherein it allows computable and exact characterizations of the amount of secret key that can be extracted from the network. In the quantum setting, Ref.~\cite{azuma2021tools} introduced a model with terminals in a network sharing independent Bell pairs,  and Ref.~\cite{bhattacharya-isit-2025} analyzed it further from an information-theoretic perspective. An appealing property of quantum PIN models is that they permit simple GHZ-extraction schemes using \emph{tree packing}, and therefore they present a simple testbed for more complex algorithms and GHZ-distillation upper bounds. For some qPINs, for example the one in Figure~\ref{fig:tight-qpin-example}, the tree-packing algorithm is known to be tight \cite{bhattacharya-isit-2025}. Figures~\ref{fig:qpin-example-2} and~\ref{fig:triangle-qpin} depict other examples of qPINs.

qPINs are represented using \emph{multigraphs}, defined as follows:
\begin{definition}[Multigraph]
    A multigraph $\mathcal{G} = (\mathcal{M}, \mathcal{E})$ with vertex set $\mathcal{M}$ and edge (multi) set $\mathcal{E}$ is an undirected graph with multiple edges possible between any pair of vertices and no self-loops. Let $e_{ij}$ denote the number of edges between vertex $i \in \mathcal{M}$ and vertex $j \in \mathcal{M}$. Further, given $n\in\mathbb{N}$, let $\mathcal{G}^{(n)} = (\mathcal{M}, \mathcal{E}^{(n)})$ denote a multigraph with the same set of vertices and $n$ times as many edges between any pair of vertices as in $\mathcal{G}$. Note that $\mathcal{G}^{(1)} = \mathcal{G}$.
\end{definition}

\begin{definition}[Quantum PIN model]
    Given a multigraph $\mathcal{G} = (\mathcal{M}, \mathcal{E})$, the state $\psi_\mathcal{G}$ of the corresponding  quantum PIN model  consists of $\abs{\mathcal{M}}$ terminals, such that terminals $i, j \in \mathcal{M}$ share $e_{ij}$ independent copies of the Bell state between them, independently of all other pairs of terminals.
\end{definition}

\begin{figure}
    \centering
    \includegraphics[width=0.45\textwidth]{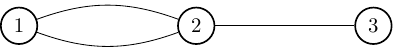}
        \caption{An example of a quantum PIN with three terminals. One GHZ state can be distilled from this quantum PIN, and this is known to be optimal \cite{bhattacharya-isit-2025}. $R = 1$ for this qPIN, which also confirms optimality.}
        \label{fig:tight-qpin-example}
    \end{figure}

\begin{figure}
    \centering
    \includegraphics[width=0.45\textwidth]{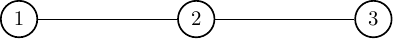}
        \caption{An example of a quantum PIN with three terminals that distinguishes between $R$ and $R_{\operatorname{S}}$. For this PIN, $R = R_{\operatorname{H}} = 1$ and $R_{\operatorname{S}} = 2$. One GHZ state can be distilled from this PIN, so $R$ is tight.}
        \label{fig:qpin-example-2}
    \end{figure}

\begin{figure}
    \centering
    \includegraphics[width=0.15\textwidth]{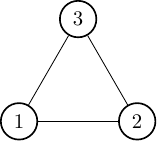}
        \caption{For this triangle PIN, $R = R_{\operatorname{H}} = R_{\operatorname{S}} = 2$. The tree-packing scheme extracts GHZ states at \emph{rate} 1.5 for this model; i.e., from two copies of this state three GHZ states can be extracted.}
        \label{fig:triangle-qpin}
    \end{figure}

We begin with some notation. Fix a bipartition $m$ of $\mathcal{H}$, and let
\begin{equation}
\mathcal{E}_m \coloneqq \{e \in \mathcal{E} : e \text{ crosses } m\}.    
\end{equation}
The size of the \emph{minimum cut} in $\mathcal{G}$, given by $\min_m |\mathcal{E}_m|$, equals $\min_mE_m(\psi_\mathcal{G}) = \underline{H}(\psi_\mathcal{G})$, as argued as part of the proof of Proposition~\ref{prop:R-equal-hurricane-PIN}. We begin by showing that for all qPINs, the measures $R$, $R_{\operatorname{M}}$, and $R_{\operatorname{H}}$ all equal the minimum cut for $\mathcal{G}$.

\begin{proposition}\label{prop:R-equal-hurricane-PIN}
    Let $\psi_\mathcal{G}$ be a pure state corresponding to the qPIN $\mathcal{G} = (\mathcal{M}, \mathcal{E})$. Then,
    \begin{align}
        R(\psi_\mathcal{G}) = R_{\operatorname{M}}(\psi_\mathcal{G}) = R_\mathrm{H}(\psi_\mathcal{G}) = \min_m |\mathcal{E}_m|.
    \end{align}
\end{proposition}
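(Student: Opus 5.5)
Because the chain $R(\psi_\mathcal{G})\leq R_{\operatorname{M}}(\psi_\mathcal{G})\leq R_{\operatorname{H}}(\psi_\mathcal{G})$ holds by Proposition~\ref{lem:multipartite_rains_inequalities} (the relative entropy is jointly convex), and $R_{\operatorname{H}}(\psi_\mathcal{G})=\underline{H}(\psi_\mathcal{G})$ by~\eqref{eq:hurricane_pure_state}, the plan is to establish two facts. First, $\underline{H}(\psi_\mathcal{G})=\min_m|\mathcal{E}_m|$. Second, the matching lower bound $R(\psi_\mathcal{G})\geq \min_m|\mathcal{E}_m|$. The lower bound will come from Lemma~\ref{lem:just_pure_state_bounds}, namely $R(\psi)\geq\underline{H}_{\min}(\psi)$, together with the observation that for a qPIN the min-entropy and von Neumann entropy of every bipartite marginal coincide.

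For the first step, fix a bipartition $m=(S,S^c)$ of the terminals. The state $\psi_\mathcal{G}$ is a tensor product of Bell pairs, one per edge. Tracing out $S^c$ leaves several kinds of factor:
\begin{itemize}
\item an edge with both endpoints in $S$ contributes a pure Bell pair;
\item an edge with both endpoints in $S^c$ is traced out entirely;
\item each crossing edge in $\mathcal{E}_m$ leaves a maximally mixed qubit $\ii/2$.
\end{itemize}
Hence $\psi_m$ is unitarily equivalent to $\ii_{2^{|\mathcal{E}_m|}}/2^{|\mathcal{E}_m|}$ tensored with a pure state. Its spectrum is flat, with $2^{|\mathcal{E}_m|}$ nonzero eigenvalues each equal to $2^{-|\mathcal{E}_m|}$. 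Therefore every R\'enyi entropy of $\psi_m$ equals $|\mathcal{E}_m|$, and in particular $H(\psi_m)=H_{\min}(\psi_m)=|\mathcal{E}_m|$. Minimizing over $m$ gives $\underline{H}(\psi_\mathcal{G})=\underline{H}_{\min}(\psi_\mathcal{G})=\min_m|\mathcal{E}_m|$.

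Combining these yields the full chain
\begin{equation}
\min_m|\mathcal{E}_m|=\underline{H}_{\min}(\psi_\mathcal{G})\leq R(\psi_\mathcal{G})\leq R_{\operatorname{M}}(\psi_\mathcal{G})\leq R_{\operatorname{H}}(\psi_\mathcal{G})=\underline{H}(\psi_\mathcal{G})=\min_m|\mathcal{E}_m|,
\end{equation}
so all the quantities are equal.

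The only points requiring care are bookkeeping ones. One is correctly identifying the marginal $\psi_m$ when the tensor factors are reordered across parties, which is handled by a local unitary permutation that does not affect the spectrum. The other is confirming that Lemma~\ref{lem:just_pure_state_bounds} applies with the parties being terminals, each holding many qubits. Neither is a real obstacle. The main conceptual point is simply that flatness of the spectrum makes the lower bound $\underline{H}_{\min}$ and the upper bound $\underline{H}$ meet.
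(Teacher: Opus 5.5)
Your proposal is correct and follows essentially the same route as the paper: both arguments show that every marginal $\psi_m$ has a flat spectrum with $2^{|\mathcal{E}_m|}$ eigenvalues equal to $2^{-|\mathcal{E}_m|}$, so that $\underline{H}_{\min}(\psi_\mathcal{G})=\underline{H}(\psi_\mathcal{G})=\min_m|\mathcal{E}_m|$, and then sandwich $R\leq R_{\operatorname{M}}\leq R_{\operatorname{H}}$ between the pure-state lower bound $R(\psi)\geq\underline{H}_{\min}(\psi)$ and the identity $R_{\operatorname{H}}(\psi)=\underline{H}(\psi)$. Your citation of Lemma~\ref{lem:just_pure_state_bounds} for these two facts is in fact more precise than the paper's own lemma reference.
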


\begin{proof}
    After taking the partial trace $\psi_m = \Tr_m[\psi_\mathcal{G}]$, every Bell pair corresponding to edges \emph{with both ends} in the retained system survives unchanged, while every edge in $\mathcal{E}_m$ contributes a maximally mixed state. Thus, we can write
    \begin{align}
        \psi_m = \bigotimes_{\substack{e\in\mathcal{E}:\\ e\,\subseteq \bar{m}}} \Phi^{(e)} \otimes \bigotimes_{e\,\in\,\mathcal{E}_m} \frac{\mathbb{I}_2}{2}. \label{eq:hurr-rains-equal-proof-10}
    \end{align}
    Since the first component in \eqref{eq:hurr-rains-equal-proof-10} is a rank-one pure state and tensoring with a pure state does not alter the spectrum, the spectrum of $\psi_m$ consists of $2^{|\mathcal{E}_m|}$ equal eigenvalues of value $2^{-|\mathcal{E}_m|}$. Thus, for all $\alpha>0$,
    \begin{align}
        H_{\alpha}(\psi_m)=\abs{\mc{E}_m}\label{eq:qpin_min_entr},
    \end{align}
    which, invoking Lemma~\ref{lem:gmre_pure_state_ub_lb}, means that $R(\psi_{\mathcal{G}}) = \min_m \abs{\mathcal{E}_m}$. The equality $R_{\operatorname{H}}(\psi_{\mathcal{G}}) = \min_m \abs{\mathcal{E}_m}$ is the second equality in Lemma~\ref{lem:gmre_pure_state_ub_lb}. Equality with $R_{\operatorname{M}}(\psi_{\mathcal{G}})$ follows from Proposition~\ref{lem:multipartite_rains_inequalities}.
\end{proof}
For a qPIN state $\psi_\mathcal{G}$, Proposition~\ref{prop:R-equal-hurricane-PIN} implies that $R(\psi_\mathcal{G})$ equals the minimum cut of the multigraph $\mathcal{G}$ and is thus computable in polynomial time in $|\mathcal{M}|$. This is in contrast to the SDP \eqref{eq:d_rains_entangle} for general multipartite states, the complexity of which scales with the number of bipartitions $2^{|\mathcal{M}|-1}-1$ and the Hilbert-space dimension.

Further, Proposition~\ref{prop:R-equal-hurricane-PIN} also implies that $R(\psi_{\mathcal{G}^{(2)}})=4$ for the qPIN $\mathcal{G}^{(2)}$ when $\mathcal{G}$ is the triangle graph shown in Figure~\ref{fig:triangle-qpin}. Consequently, there is still a gap between the best known scheme to extract GHZ states (three GHZ states can be distilled from $\mathcal{G}^{(2)}$ via tree-packing) and the best known upper bound, given by $R(\psi_{\mathcal{G}^{(2)}})$.

We next show that the squall entanglement equals the maximum cut for qPINs. Interestingly, the max-cut is NP-hard to compute, in contrast to the min-cut for the other measures.
\begin{proposition}
\label{prop:qpin-squall}
    Let $\psi_\mathcal{G}$ be a pure state corresponding to the qPIN $\mathcal{G} = (\mathcal{M}, \mathcal{E})$. Then,
    \begin{align}
        R_\mathrm{S}(\psi_\mathcal{G}) = \max_m \abs{\mathcal{E}_m}.
    \end{align}
\end{proposition}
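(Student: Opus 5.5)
The plan is to sandwich $R_{\operatorname{S}}(\psi_\mathcal{G})$ between the two bounds of Lemma~\ref{lem:just_pure_state_bounds}:
\begin{equation}
\overline{H}_{\min}(\psi_\mathcal{G})\leq R_{\operatorname{S}}(\psi_\mathcal{G})\leq \overline{H}_{\frac{1}{2}}(\psi_\mathcal{G}),
\end{equation}
and then to observe that both outer quantities equal $\max_m\abs{\mc{E}_m}$ for a qPIN.

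First, I reuse the structural computation from the proof of Proposition~\ref{prop:R-equal-hurricane-PIN}. For a fixed bipartition $m$, the reduced state $\psi_m=\Tr_m[\psi_\mathcal{G}]$ factorizes as in \eqref{eq:hurr-rains-equal-proof-10}. It is a tensor product of Bell pairs for the edges lying entirely inside the retained side, together with one maximally mixed qubit $\ii_2/2$ for each crossing edge $e\in\mc{E}_m$. (Edges lying entirely in the traced-out side are simply discarded.) The spectrum of $\psi_m$ is therefore flat, consisting of $2^{\abs{\mc{E}_m}}$ eigenvalues equal to $2^{-\abs{\mc{E}_m}}$. By \eqref{eq:qpin_min_entr}, this gives $H_\alpha(\psi_m)=\abs{\mc{E}_m}$ for every $\alpha$. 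This holds in particular for $\alpha=\tfrac12$, and also in the limit $\alpha\to\infty$, where $H_{\min}(\psi_m)=-\log_2 2^{-\abs{\mc{E}_m}}=\abs{\mc{E}_m}$.

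Second, I maximize over bipartitions. This yields $\overline{H}_{\min}(\psi_\mathcal{G})=\max_m H_{\min}(\psi_m)=\max_m\abs{\mc{E}_m}$, and likewise $\overline{H}_{\frac12}(\psi_\mathcal{G})=\max_m\abs{\mc{E}_m}$. Substituting these into the sandwich above, the lower and upper bounds coincide, so $R_{\operatorname{S}}(\psi_\mathcal{G})=\max_m\abs{\mc{E}_m}$.

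The only point needing care is making sure the squall bounds in \eqref{eq:squall_pure_lb} apply without extra hypotheses. The lower bound is obtained from \eqref{eq:squall_lb} with $\rho=\psi$, so $F=1$. The condition $F\geq\min_m\norm{T_m(\psi)}_\infty$ then holds trivially, and since $h_2(1)=0$, no correction term appears. The upper bound uses the explicit feasible operator $\sigma=2^{-\overline{H}_{1/2}(\psi)}\psi\in\mc{T}'$, which is valid for any pure state. Both bounds therefore hold for $\psi_\mathcal{G}$ unconditionally. Beyond this, the argument is a direct computation, and I do not anticipate a genuine obstacle. The underlying reason it works is the flatness of the Schmidt spectra of qPIN states, which makes all Rényi entropies of the reduced states coincide.
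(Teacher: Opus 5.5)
Your proposal is correct and follows the paper's proof. The paper likewise sandwiches $R_{\operatorname{S}}(\psi_\mathcal{G})$ between $\overline{H}_{\min}(\psi_\mathcal{G})$ and $\overline{H}_{\frac{1}{2}}(\psi_\mathcal{G})$ using Lemma~\ref{lem:just_pure_state_bounds}, and uses the flat reduced spectra in \eqref{eq:qpin_min_entr} to make both bounds equal $\max_m\abs{\mathcal{E}_m}$.
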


\begin{proof}
    The bounds $R_\mathrm{S}(\psi_\mathcal{G}) \geq \max_m \abs{\mathcal{E}_m}$  and $R_\mathrm{S}(\psi_\mathcal{G}) \leq \max_m \abs{\mathcal{E}_m}$ follow from Lemma~\ref{lem:just_pure_state_bounds} and~\eqref{eq:qpin_min_entr}.
\end{proof}
We conclude with the additivity properties of the various measures on qPINs. Let $\mathcal{G}$ and $\mathcal{G}'$ be multigraphs on the same terminal set, and let $\mathcal{G} + \mathcal{G}'$ be the multigraph union, with (multi)edge multiplicities added. Then, $\psi_\mathcal{G} \otimes \psi_{\mathcal{G}'} = \psi_{\mathcal{G} + \mathcal{G}'}$. We have the following, which, using Lemma~\ref{lem:min_bi_rains_tensor_power_subadd}, shows that the Rains, monsoon, and hurricane entanglements are tensor-power additive on qPINs.
\begin{proposition}
    The Rains, monsoon, and hurricane entanglements are superadditive on qPINs, while the squall entanglement is subadditive on qPINs.
\end{proposition}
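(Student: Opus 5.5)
The plan is to reduce everything to cut sizes, using the closed-form values already established for qPIN states. By Proposition~\ref{prop:R-equal-hurricane-PIN}, for any multigraph $\mathcal{G}$ on the terminal set $\mathcal{M}$,
\begin{equation}
R(\psi_\mathcal{G})=R_{\operatorname{M}}(\psi_\mathcal{G})=R_{\operatorname{H}}(\psi_\mathcal{G})=\min_m\abs{\mathcal{E}_m(\mathcal{G})}.
\end{equation}
By Proposition~\ref{prop:qpin-squall}, $R_{\operatorname{S}}(\psi_\mathcal{G})=\max_m\abs{\mathcal{E}_m(\mathcal{G})}$.

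Since $\psi_\mathcal{G}\otimes\psi_{\mathcal{G}'}=\psi_{\mathcal{G}+\mathcal{G}'}$ is again a qPIN state on the same terminal set, the same formulas apply to the tensor product. Both sides of each claimed inequality are therefore expressed through the cut functions of $\mathcal{G}$, $\mathcal{G}'$, and $\mathcal{G}+\mathcal{G}'$.

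The key combinatorial observation is that cut sizes add under multigraph union. For each fixed bipartition $m$ of the terminals, an edge of $\mathcal{G}+\mathcal{G}'$ crosses $m$ exactly when it is an edge of $\mathcal{G}$ or of $\mathcal{G}'$ that crosses $m$. Multiplicities add, so
\begin{equation}
\abs{\mathcal{E}_m(\mathcal{G}+\mathcal{G}')}=\abs{\mathcal{E}_m(\mathcal{G})}+\abs{\mathcal{E}_m(\mathcal{G}')}.
\end{equation}
One point needs care: the bipartitions of the combined system are indexed by bipartitions of the terminal set $\mathcal{M}$. Each terminal holds all of its qubits from both graphs, so the relevant bipartitions are exactly the same index set for $\mathcal{G}$, $\mathcal{G}'$, and $\mathcal{G}+\mathcal{G}'$. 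This is why the product state is treated as a $\abs{\mathcal{M}}$-partite qPIN state.

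Superadditivity then follows from the min of a sum being at least the sum of mins:
\begin{equation}
\min_m\bigl(a_m+b_m\bigr)\geq\min_m a_m+\min_m b_m.
\end{equation}
This gives $R(\psi_\mathcal{G}\otimes\psi_{\mathcal{G}'})\geq R(\psi_\mathcal{G})+R(\psi_{\mathcal{G}'})$, and likewise for $R_{\operatorname{M}}$ and $R_{\operatorname{H}}$. Subadditivity of the squall entanglement follows from $\max_m(a_m+b_m)\leq\max_m a_m+\max_m b_m$. Alternatively, it follows directly from Lemma~\ref{lem:squall_subadd}, because the relative entropy is tensor-product additive.

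Finally, I would note the consequence mentioned before the statement. Superadditivity applied to $\mathcal{G}^{(n)}$ gives $R_{\operatorname{H}}(\psi_\mathcal{G}^{\otimes n})\geq nR_{\operatorname{H}}(\psi_\mathcal{G})$, and Lemma~\ref{lem:min_bi_rains_tensor_power_subadd} gives the reverse inequality. Hence $R_{\operatorname{H}}$ is tensor-power additive on qPINs, and so are $R$ and $R_{\operatorname{M}}$ by the equalities above. This is also immediate because $\min_m n\abs{\mathcal{E}_m}=n\min_m\abs{\mathcal{E}_m}$.

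There is no real obstacle. The only step requiring attention is the identification of bipartitions of the combined system with bipartitions of $\mathcal{M}$, so that Propositions~\ref{prop:R-equal-hurricane-PIN} and~\ref{prop:qpin-squall} apply verbatim to $\psi_{\mathcal{G}+\mathcal{G}'}$.
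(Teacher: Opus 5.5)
Your proof is correct and follows the paper's argument. Like the paper, you reduce to the min-cut and max-cut formulas of Propositions~\ref{prop:R-equal-hurricane-PIN} and~\ref{prop:qpin-squall}, use that cut sizes add under multigraph union, and apply $\min_m(a_m+b_m)\geq\min_m a_m+\min_m b_m$ and its reverse for $\max$. Your alternative for the squall case via Lemma~\ref{lem:squall_subadd} is also valid, and it shows that part of the statement is just a special case of general tensor-product subadditivity.
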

\begin{proof}
    Using Proposition~\ref{prop:R-equal-hurricane-PIN}, we concentrate only on $R$. Let $c_\mathcal{G}(m)$ equal the number of edges in $\mathcal{G}$ crossing the bipartition $m$. Since  
    \begin{align}
        \min_m c_{\mathcal{G} + \mathcal{G}'}(m) \geq \min_m c_{\mathcal{G}}(m) + \min_m c_{\mathcal{G}'}(m),
    \end{align}
    we have 
    \begin{align}
        R(\psi_\mathcal{G} \otimes \psi_{\mathcal{G}'}) = R(\psi_\mathcal{G +\mathcal{G}'}) \geq R(\psi_\mathcal{G}) + R(\psi_{\mathcal{G}'}).
    \end{align}
    The subadditivity of squall entanglement follows similarly on flipping the inequality direction, replacing $\min$ by $\max$, and using Proposition~\ref{prop:qpin-squall}. Squall entanglement is also tensor-power additive on qPINs, since two copies of the same graph have the same maximum cut.
\end{proof}

\section{Comparison of multipartite entanglement measures in the transverse-field Ising model}

\label{sec:multi_ent_tfim}

\begin{figure}
\centerline{\includegraphics[width=11cm]{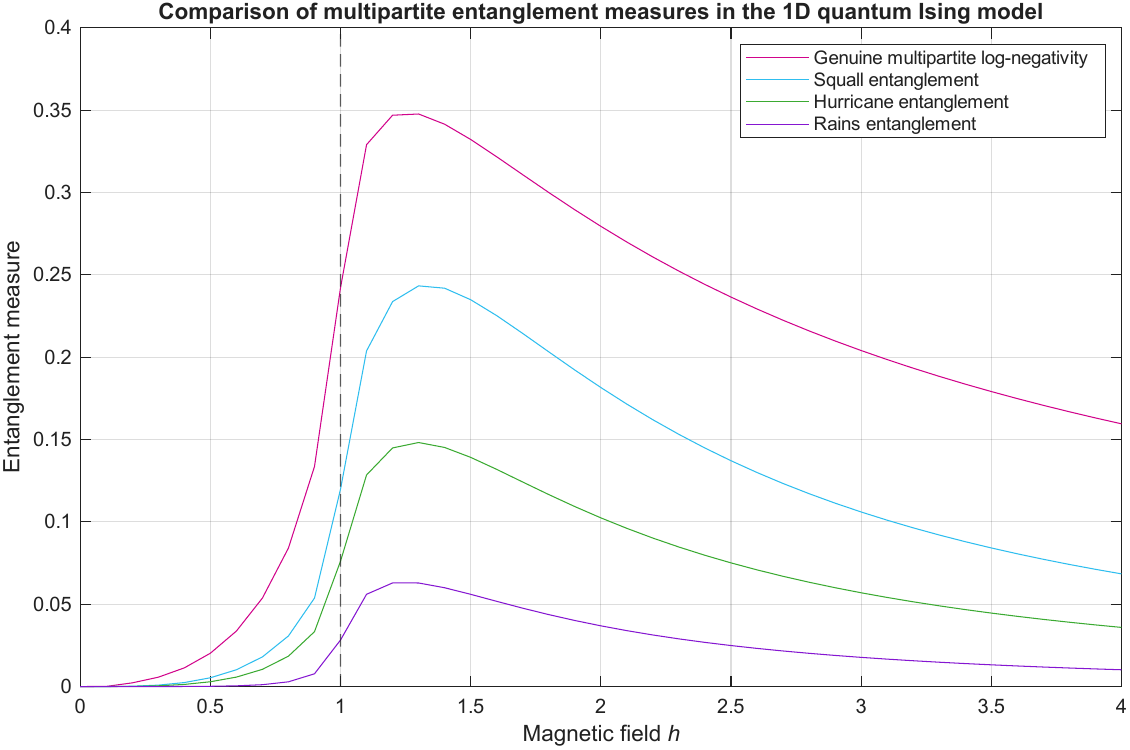}}
\caption{The genuine multipartite log-negativity, squall entanglement, hurricane entanglement, and Rains entanglement of tripartite reduced density matrices in the 1D transverse-field Ising model are plotted versus the magnetic field $h$. 
The reduced density matrices are obtained exactly via the Jordan-Wigner transformation~\cite{Fagotti2013RDM},
and the code for calculating the entanglement uses~\cite{cvx,cvxquad,qetlab}. We used the relationship $E_N=\log_2(2N+1)$ to calculate the log-GMN $E_N$, where~\cite[Eq.~(8)]{hofmann2014} was used to calculate the genuine multipartite negativity $N$.}
\label{fig:multipartite_measures_vs_h}
\end{figure}

This section presents an application of the Rains, hurricane, and squall entanglement in a standard condensed-matter setting and contrasts them with the genuine multipartite log-negativity (log-GMN) for the same family of states. We consider the one-dimensional transverse-field Ising model (TFIM):
\begin{equation}
H \;=\; -\sum_{i} (\sigma_i^{x}\sigma_{i+1}^{x} + h\sigma_i^{z}),
\label{eq:TFIM_Hamiltonian}
\end{equation}
where $\alpha \in \{x,z\}$, $\sigma_i^\alpha$ is a Pauli operator acting on site $i$,  and $h$ is the transverse field strength (the Ising coupling is set to unity). The model undergoes a quantum phase transition at $h=1$, where quantum fluctuations are strongest and correlations become scale invariant.
We focus on tripartite reduced density matrices extracted from the TFIM ground state, obtained exactly via the Jordan--Wigner transformation~\cite{Fagotti2013RDM}.

Figure~\ref{fig:multipartite_measures_vs_h} compares the Rains entanglement, hurricane entanglement, squall entanglement, and log-GMN as a function of $h$ for the tripartite reduced density matrix of three adjacent spins. All measures are enhanced in the vicinity of the critical point and peak near $h\approx 1$, consistent with previous literature~\cite{wang_etal2025,Lyu2025Ising}.
At the same time, there is a clear separation between all measures across the full field range. The Rains, hurricane, and squall entanglement are uniformly smaller than the log-GMN. In particular, compared with the log-GMN, the Rains entanglement decays much faster at small field strength, becoming vanishingly small already around $h\approx 0.5$, while the log-GMN remains much larger.

This suggests a practical utility of the Rains entanglement in condensed-matter applications: While negativity-based diagnostics and related GME measures studied previously in the one-dimensional Ising model show qualitatively similar trends versus $h$, the Rains entanglement appears substantially more selective in the low-field regime. In this sense, the Rains entanglement has the potential to be a sharper indicator of quantum criticality.

\section{Conclusion}

In this work, we introduced the monsoon, hurricane, and squall entanglement. We established an ordering among these entanglement measures along with the Rains entanglement. The hurricane and squall entanglement are of particular interest due to their operational interpretations as upper bounds on the GHZ-distillable entanglement. We also proved that the one-shot rate at which a fixed pure state can be distilled from an arbitrary state is bounded from above by  the Rains entanglement, and we proved that the corresponding asymptotic rate is bounded from above by a single-letter formula in terms of the hurricane entanglement. We derived an exact expression for the Rains entanglement of the GHZ and W states and provided single-letter upper bounds on the GHZ- and W-distillable entanglement. Further, we explored applications of our proposed entanglement measures to quantum networks and the Ising model. Finally, we provided SDPs for the max-Rains entanglement and max-hurricane entanglement. 

As a future direction, one could see if it is possible to derive exact expressions for the Rains entanglement of more multipartite states or classes of multipartite states (e.g., isotropic states or noisy pairwise entangled networks; we analyzed noiseless pairwise entangled networks in Section \ref{sec:qpin-models}). One could also compare what these entanglement measures evaluate to in other physical models of interest beyond the transverse-field Ising model. Finally, one could also extend the definitions here to entanglement measures for quantum channels.

\section*{Acknowledgments}

HSM~acknowledges support from NSF grant No.~DGE-2139899. This work was supported by the Quantum Science Center (QSC), a National Quantum Information Science Research Center of the U.S. Department of Energy (DOE).

\section*{Author Contributions}

The following describes the different contributions of all authors of this work, using roles defined by the CRediT
(Contributor Roles Taxonomy) project \cite{NISO}:

\medskip 

\noindent \textbf{HSM}: Conceptualization,  Formal Analysis, Investigation, Methodology, Software, Validation, Writing - Original Draft, Writing - Review \& Editing.

\medskip 
\noindent \textbf{SB}: Conceptualization, Formal Analysis, Investigation, Methodology, Writing - Original Draft (Section~\ref{sec:qpin-models}), Visualization.

\medskip 
\noindent \textbf{MC}: Writing - Review \& Editing.

\medskip 
\noindent \textbf{LL}: Conceptualization, Methodology, Software, Validation, Writing - Original Draft (Section~\ref{sec:multi_ent_tfim} and Appendix~\ref{app:Rains-algs}).

\medskip 
\noindent \textbf{MMW}: Conceptualization, Methodology, Supervision, Validation, Writing - Review \& Editing.

\bibliographystyle{IEEEtran}
\bibliography{references}

\appendices
\numberwithin{equation}{section}

\section{Alternate formula for the Rains entanglement (Proof of Lemma~\ref{lem:alt_rains_def})}

\label{app:d_rains_entangle_T}

\begin{proof}
Our goal is to prove that $\boldsymbol{R}(\rho)=\inf_{\tau\in\mc{T}(\hilb_{k})}\boldsymbol{D}(\rho\|\tau)$, and our approach will be similar to that of~\cite{audenaert2002}. We will first show $\boldsymbol{R}(\rho)\geq\inf_{\tau\in\mc{T}(\hilb_{k})}\boldsymbol{D}(\rho\|\tau)$ and then $\boldsymbol{R}(\rho)\leq\inf_{\tau\in\mc{T}(\hilb_{k})}\boldsymbol{D}(\rho\|\tau)$, where the set $\mathcal{T}$ is defined in~\eqref{def:t_set}.

Let $\rho\in \mc{S}(\hilb_{k})$, and let
\begin{align}
    \mc{S}'&\coloneq\{\sigma\in\mc{L}_+(\hilb_{k}),\Tr[\sigma]\leq1\}.
\end{align}
Consider the following:
\begin{align}
    \boldsymbol{R}(\rho)&=\inf_{\substack{\sigma\in\mc{S}(\hilb_{k}) \\ \sigma=\sum_mr_m\omega_m}}\!\!\bc{\boldsymbol{D}\bp{\rho\middle\|\sigma}+\log_2\bp{\sum_mr_m\norm{T_m(\omega_m)}_1}\!} \\
    &=\inf_{\substack{\sigma\in\mc{S}'(\hilb_{k}) \\ \sigma=\sum_mr_m\omega_m}}\Biggl\{\boldsymbol{D}\bp{\rho\middle\|\sigma}+\log_2\bp{\sum_mr_m\norm{T_m(\omega_m)}_1}\notag \\
    &\hspace{3cm}-\log_2(\Tr[\sigma])\Biggr\}.
\end{align}
We can extend the set of states we are minimizing over to $\mc{S}'$ because the first two terms do not depend on $\Tr[\sigma]$ (indeed they are invariant under the rescaling $\sigma \to c \sigma$ for $c>0$)  and the additional term on the second line achieves a minimal value for $\sigma \in \mc{S}'(\hilb_{k})$ when $\Tr[\sigma]=1$. Using the scaling property of $\boldsymbol{D}$, we find that
\begin{align}
    \boldsymbol{R}(\rho)&=\!\!\inf_{\substack{\sigma\in\mc{S}'(\hilb_{k}) \\ \sigma=\sum_mr_m\omega_m}}\!\!\bc{\boldsymbol{D}\bp{\rho\middle\|\sigma}-\log_2\bp{\frac{\Tr[\sigma]}{\sum_mr_m\norm{T_m(\omega_m)}_1}}\!} \\
    &=\inf_{\substack{\sigma\in\mc{S}'(\hilb_{k}) \\ \sigma=\sum_mr_m\omega_m}}\bc{\boldsymbol{D}\!\left(\rho\middle\|\frac{\Tr[\sigma]}{\sum_mr_m\norm{T_m(\omega_m)}_1}\sigma\right)} \\
    &\geq\inf_{\tau\in\mc{T}(\hilb_{k})}\bc{\boldsymbol{D}(\rho\|\tau)},\label{rains_inequality_1}
\end{align}
where the inequality follows because $\frac{\Tr[\sigma]}{\sum_mr_m\norm{T_m(\omega_m)}_1}\sigma\in\mc{T}$. To see this, consider that $\sigma\geq0$ and
\begin{align}
    \frac{\Tr[\sigma]}{\sum_mr_m\norm{T_m(\omega_m)}_1}&\geq0.
\end{align}
We also have
\begin{align}
    \sum_\ell&\frac{\Tr[\sigma]}{\sum_mr_m\norm{T_m(\omega_m)}_1}r_\ell\norm{T_\ell(\omega_\ell)}_1 \\
    &=\frac{\Tr[\sigma]}{\sum_mr_m\norm{T_m(\omega_m)}_1}\sum_\ell r_\ell\norm{T_\ell(\omega_\ell)}_1 \\
    &=\Tr[\sigma]\leq1.
\end{align}
Thus, $\frac{\Tr[\sigma]}{\sum_mr_m\norm{T_m(\omega_m)}_1}\sigma\in\mc{T}$, and we conclude the first desired inequality:
\begin{align}
    \boldsymbol{R}(\rho)\geq\inf_{\tau\in\mc{T}(\hilb_{k})}\boldsymbol{D}(\rho\|\tau).\label{intres:rains_convex_opt_ineq1}
\end{align}

We will now prove that $\boldsymbol{R}(\rho)\leq\inf_{\tau\in\mc{T}(\hilb_{k})}\boldsymbol{D}(\rho\|\tau)$. Let $\tau\in\mc{T}$ where $\tau=\sum_mq_m\tau_m$. Then,
\begin{align}
    \boldsymbol{D}(\rho\|\tau)&=\boldsymbol{D}(\rho\|\tau)-\log_2(\Tr[\tau])+\log_2(\Tr[\tau]) \notag \\
    &\hspace{.4cm}+\log_2\bp{\frac{1}{\Tr[\tau]}\sum_mq_m\norm{T_m(\tau_m)}_1} \notag \\
    &\hspace{.4cm}-\log_2\bp{\frac{1}{\Tr[\tau]}\sum_mq_m\norm{T_m(\tau_m)}_1} \\
    &=\boldsymbol{D}\bp{\rho\middle\|\frac{\tau}{\Tr[\tau]}}+\log_2\bp{\frac{1}{\Tr[\tau]}\sum_mq_m\norm{T_m(\tau_m)}_1} \notag \\
    &\hspace{.4cm}-\log_2\bp{\sum_mq_m\norm{T_m(\tau_m)}_1}  \\
    & \geq
    \boldsymbol{D}\bp{\rho\middle\|\frac{\tau}{\Tr[\tau]}}+\log_2\bp{\frac{1}{\Tr[\tau]}\sum_mq_m\norm{T_m(\tau_m)}_1\!}\!,
    \label{rains_inequality_2_pt1}
\end{align}
where~\eqref{rains_inequality_2_pt1} follows because $\tau\in\mc{T}$ by assumption. We thus conclude that
\begin{align}
     \boldsymbol{D}(\rho\|\tau)&\geq \boldsymbol{D}\bp{\rho\middle\|\frac{\tau}{\Tr[\tau]}}+\log_2\bp{\frac{1}{\Tr[\tau]}\sum_mq_m\norm{T_m(\tau_m)}_1\!}\!.
     \label{rains_inequality_2_pt1-1}
\end{align}
Let $\tau_m'\coloneqq \frac{\tau_m}{\Tr[\tau_m]}$ and $q_m'\coloneqq q_m\Tr[\tau_m]$. Then, $\sum_mq_m'\tau_m'=\sum_mq_m\tau_m$ and $\sum_mq_m'=\Tr[\tau]$. We can then write
\begin{align}
    &\boldsymbol{D}\bp{\rho\middle\|\frac{\tau}{\Tr[\tau]}}+\log_2\bp{\frac{1}{\Tr[\tau]}\sum_mq_m\norm{T_m(\tau_m)}_1} \notag \\
    &=\boldsymbol{D}\bp{\rho\middle\|\frac{\tau}{\Tr[\tau]}}+\log_2\bp{\frac{1}{\Tr[\tau]}\sum_mq_m'\norm{T_m(\tau_m')}_1} \\
    &\geq \boldsymbol{R}(\rho).\label{rains_inequality_2_pt2}
\end{align}
Note that $\bp{\frac{q_m'}{\Tr[\tau]}}_m$ is a probability distribution because $q_m'\geq0$, $\Tr[\tau]\geq0$, and $\frac{1}{\Tr[\tau]}\sum_mq_m'=1$, using the definition of $q_m'$. Because $\tau$ is arbitrary, we can take an infimum over $\tau$ in~\eqref{rains_inequality_2_pt1-1} and use~\eqref{rains_inequality_2_pt2} to conclude that $ \boldsymbol{R}(\rho)\leq\inf_{\tau\in\mc{T}(\hilb_{k})}\boldsymbol{D}(\rho\|\tau)$. With this and~\eqref{intres:rains_convex_opt_ineq1}, we conclude that $\boldsymbol{R}(\rho)=\inf_{\tau\in\mc{T}(\hilb_{k})}\boldsymbol{D}(\rho\|\tau)$.
\end{proof}

\section{Proof of Lemma~\ref{lem:gmre_lb_w}}\label{app:gmre_lb_w}
\begin{proof}
    Let $W_{k}$ denote the $k$-qubit W state. We will begin by proving the following inequalities:
    \begin{align}
        R(\rho)&\geq F\log_2\bp{\frac{k}{k-1}}-h_2(F),\label{eq:gmre_lb_w2}\\
         \widetilde{R}_\alpha(\rho)&\geq \log_2\bp{\frac{k}{k-1}}+\frac{\alpha}{\alpha-1}\log_2F.\label{eq:sandwich_renyi_rains_lb_w2}
    \end{align}
    We will proceed by deriving a formula for $\underline{H}_{\min}(W_{k})$  and then applying Lemmas~\ref{lem:gmre_pure_state_ub_lb} and~\ref{lem:renyi_gmre_pure_lb_ub}. Recall that $\underline{H}_{\min}(\psi)=\min_mH_{\min}(\psi_m)=-\log_2\norm{T_m(\psi)}_\infty$ from~\eqref{eq:spec_norm_min_entr}. Let us evaluate $\norm{T_m(W_{k})}_\infty$, which depends on how many subsystems the partial transpose is taken with respect to. Let $i\in[k]\coloneq\{1,2,\ldots,k\}$, and let $\ket{w_i}\in\{\ket{0},\ket{1}\}^{\otimes k}$ denote a $k$-qubit state with $\ket{1}$ in position $i$ and $\ket{0}$ elsewhere. Then,
\begin{align}
    \ket{W_{k}}=\frac{1}{\sqrt{k}}\sum_{i=1}^k \ket{w_i}.
\end{align}

Let $S\subsetneq[k]$ be a nonempty subset of $[k]$ with $\abs{S}=p$. Let $S^c\coloneq[k]\setminus S$. Then we can write $W_{k}$ as
\begin{align}
    W_{k}=\frac{1}{k}\bp{\sum_{i,j\in S}\op{w_{i}}{w_{j}}+\sum_{i\in S}\sum_{j\in S^c}\op{w_{i}}{w_{j}}+\sum_{i\in S}\sum_{j\in S^c}\op{w_{j}}{w_{i}}+\sum_{i,j\in S^c}\op{w_{i}}{w_{j}}}.
\end{align}
Let $\ket{w_{i,j}}$ denote a $k$-qubit state with $\ket{1}$ in positions $i$ and $j$ and $\ket{0}$ elsewhere, and let $T_S(\cdot)$ denote the partial transpose with respect to the subsystems in $S$. Then,
\begin{align}
    T_S(W_{k})&=\frac{1}{k}\bp{\sum_{i,j\in S}\op{w_{j}}{w_{i}}+\sum_{i\in S}\sum_{j\in S^c}\op{0}{w_{i,j}}+\sum_{i\in S}\sum_{j\in S^c}\op{w_{i,j}}{0}+\sum_{i,j\in S^c}\op{w_{i}}{w_{j}}}\\
    &=\frac{p}{k}\bp{\frac{1}{\sqrt{p}}\sum_{j\in S}\ket{w_i}}\bp{\frac{1}{\sqrt{p}}\sum_{i\in S}\bra{w_i}}\notag \\
    &\hspace{1cm}+\frac{\sqrt{(k-p)p}}{k}\bp{\ket{0}\bp{\frac{1}{\sqrt{(k-p)p}}\sum_{i\in S,j\in S^c}\bra{w_{i,j}}}+\bp{\frac{1}{\sqrt{(k-p)p}}\sum_{i\in S,j\in S^c}\ket{w_{i,j}}}\bra{0}}\notag \\
    &\hspace{6.3cm}+\frac{k-p}{k}\bp{\frac{1}{\sqrt{k-p}}\sum_{j\in S^c}\ket{w_i}}\bp{\frac{1}{\sqrt{k-p}}\sum_{i\in S^c}\bra{w_i}},
\end{align}
where $\ket{0}$ abbreviates $\ket{0}^{\otimes k}$ here, and we have written $T_S(W_{k})$ in terms of outer products of orthonormal vectors. The normalization factors come from the fact that $\abs{S}=p$ and $\abs{S^c}=k-p$. If $\ket{\psi}$ and $\ket{\ph}$ are orthonormal vectors, then the matrix $\op{\psi}{\ph}+\op{\ph}{\psi}$ has eigenvalues $\pm1$ with corresponding eigenvectors $\ket{\psi}\pm\ket{\ph}$. It follows that the matrix
\begin{align}
    \ket{0}\bp{\frac{1}{\sqrt{(k-p)p}}\sum_{i\in S,j\in S^c}\bra{w_{i,j}}}+\bp{\frac{1}{\sqrt{(k-p)p}}\sum_{i\in S,j\in S^c}\ket{w_{i,j}}}\bra{0}\label{eq:before_eigvals_partial_transpose_w}
\end{align}
has eigenvalues $\pm1$. Then the eigenvalues of $T_S(W_{k})$ are $\frac{p}{k},\frac{k-p}{k},\pm\frac{\sqrt{(k-p)p}}{k}$, and $\norm{T_S(W_{k})}_\infty=\max\!\left\{\frac{p}{k},\frac{k-p}{k}\right\}$. The maximum operator norm of $T_S(W_{k})$ for all choices of $S$ will occur when $p=1$ or $p=k-1$, in which case $\norm{T_S(W_{k})}_\infty=\frac{k-1}{k}$. Thus, $\norm{T_S(W_{k})}_\infty\leq\frac{k-1}{k}$ and $\underline{H}_{\min}(\psi)=\log_2\bp{\frac{k}{k-1}}$, and~\eqref{eq:gmre_lb_w2} and~\eqref{eq:sandwich_renyi_rains_lb_w2} follow immediately from respective applications of Lemma~\ref{lem:gmre_pure_state_ub_lb} and Lemma~\ref{lem:renyi_gmre_pure_lb_ub}.

When $\rho=W_{k}$, we find that $R(W_{k})\geq\log_2\bp{\frac{k}{k-1}}$. We will prove that this lower bound is tight by constructing a PPT mixture such that the quantum relative entropy of $W_{k}$ and our constructed state matches this lower bound. Let $i\in[k]\coloneq\{1,2,\ldots,k\}$, and let $\ket{w_i}\in\{\ket{0},\ket{1}\}^{\otimes k}$ denote a $k$-qubit state with $\ket{1}$ in position $i$ and $\ket{0}$ elsewhere. Then, we can write the $k$-qubit W state as
    \begin{align}
        W_{k}=\frac{1}{k}\sum_{i,j=1}^k\op{w_i}{w_j}
    \end{align}
    and the completely dephased W state as
    \begin{align}
        \overline{W}_{k}=\frac{1}{k}\sum_{i=1}^k\op{w_i}{w_i}.
    \end{align}
    Consider the following mixture of $W_{k}$ and $\overline{W}_{k}$:
    \begin{align}
        \hat{W}_{k}\coloneq\frac{k-2}{k-1}W_{k}+\frac{1}{k-1}\overline{W}_{k}.\label{eq:what}
    \end{align}
    Let us prove that $\hat{W}_{k}$ is a PPT mixture.
    \begin{align}
        \hat{W}_{k}&=\frac{k-2}{k-1}W_{k}+\frac{1}{k-1}\overline{W}_{k}\\
        &=\frac{1}{k(k-1)}\bp{(k-2)\sum_{i,j=1}^k\op{w_i}{w_j}+\sum_{i=1}^k\op{w_i}{w_i}}.
        \end{align}
    Now consider that
    \begin{align}
        &(k-2)\sum_{i,j=1}^k\op{w_i}{w_j}+\sum_{i=1}^k\op{w_i}{w_i}\notag \\
        &=\sum_{i,j\in[k]\setminus1}\op{w_i}{w_j}+\bp{\sum_{i\in[k]\setminus1}\op{w_1}{w_i}+\op{w_i}{w_1}}+\op{w_1}{w_1}+(k-3)\sum_{i,j=1}^k\op{w_i}{w_j}+\sum_{i=1}^k\op{w_i}{w_i}\\
        &=\sum_{i,j\in[k]\setminus1}\op{w_i}{w_j}+\sum_{i,j\in[k]\setminus2}\op{w_i}{w_j}+\cdots+\sum_{i,j\in[k]\setminus k-2}\op{w_i}{w_j}\notag \\
        &\quad+\bp{\sum_{i\in[k]\setminus1}\op{w_1}{w_i}+\op{w_i}{w_1}}+\cdots+\bp{\sum_{i\in[k]\setminus k-2}\op{w_{k-2}}{w_i}+\op{w_i}{w_{k-2}}}\notag \\
        &\quad+\sum_{i=1}^{k-2}\op{w_i}{w_i}+\sum_{i=1}^k\op{w_i}{w_i}\\
        &=\sum_{i,j\in[k]\setminus1}\op{w_i}{w_j}+\sum_{i,j\in[k]\setminus2}\op{w_i}{w_j}+\cdots+\sum_{i,j\in[k]\setminus k-2}\op{w_i}{w_j}\notag \\
        &\quad +\sum_{i,j\in[k]\setminus k-1}\op{w_i}{w_j}+\bp{\sum_{i\in[k]\setminus \{k-1,k\}}\op{w_{k-1}}{w_i}+\op{w_i}{w_{k-1}}}\notag \\
        &\quad+\bp{\sum_{\substack{i,j\in[k]\setminus \{k-1,k\}\\i\neq j}}\op{w_{i}}{w_j}}+\op{w_{k-1}}{w_{k-1}}+\sum_{i=1}^{k-2}\op{w_i}{w_i}\\
        &=\sum_{\ell=1}^{k}\bp{\sum_{i,j\in[k]\setminus \ell}\op{w_i}{w_j}}.
    \end{align}
    By writing $\hat{W}_{k}$ in such a way, we can directly see that it is a PPT mixture:
    \begin{align}
        \hat{W}_{k}=\frac{1}{k}\sum_{\ell=1}^{k}\bp{\frac{1}{k-1}\sum_{i,j\in[k]\setminus \ell}\op{w_i}{w_j}}&=\frac{1}{k}\sum_{\ell=1}^{k}\bp{T_\ell\bp{\frac{1}{k-1}\sum_{i,j\in[k]\setminus \ell}\op{w_i}{w_j}}}.
    \end{align}
    
    Each component of the outer sum is PSD, has unit trace, and is invariant under partial transposition with respect to some subsystem, from which we conclude that $\hat{W}_{k}\in\mc{T}$. The quantum relative entropy of $W_{k}$ and $\hat{W}_{k}$ is
    \begin{align}
        D(W_{k}\|\hat{W}_{k})&=-\Tr\bb{W_{k}\log_2\hat{W}_{k}},\label{eq:quant_rel_ent_w_what}
    \end{align}
    where we used the fact that the von Neumann entropy of a pure state is equal to zero. Observe that $\ket{W_{k}}$ is an eigenvector of $\hat{W}_k$ with eigenvalue $\frac{k-1}{k}$:
    \begin{align}
        \hat{W}_k \ket{W_{k}}&=\frac{k-2}{k-1}W_{k}\ket{W_{k}}+\frac{1}{k-1}\overline{W}_{k}\ket{W_{k}}\\
        &=\frac{k-2}{k-1}\ket{W_{k}}+\frac{1}{k-1}\bp{\frac{1}{k}\sum_{i=1}^k\op{w_i}{w_i}}\frac{1}{\sqrt{k}}\sum_{j=1}^k\ket{w_j}\\
        &=\frac{k-2}{k-1}\ket{W_{k}}+\frac{1}{k(k-1)}\ket{W_{k}}\\
        &=\bp{\frac{k-2}{k-1}+\frac{1}{k(k-1)}}\ket{W_{k}}\\
        &=\frac{k-1}{k}\ket{W_{k}}.
    \end{align}
    Returning to~\eqref{eq:quant_rel_ent_w_what}, we find that
    \begin{align}
        D(W_{k}\|\hat{W}_{k})&=-\Tr\bb{W_{k}\log_2\hat{W}_{k}}\\
        &=-\log_2\bp{\frac{k-1}{k}}\Tr[W_{k}]\\
        &=\log_2\bp{\frac{k}{k-1}},
    \end{align}
    where we used the fact that, because $\ket{W_{k}}$ is an eigenvector of $\hat{W}_\ell$, the other eigenvectors in the spectral decomposition of $\hat{W}_\ell$ are orthogonal to $\ket{W_{k}}$ and thus in the kernel of $W_{k}=\op{W_{k}}{W_{k}}$. It follows that
    \begin{align}
        R(W_{k})&=\inf_{\sigma\in\mc{T}}D(W_{k}\|\sigma)\label{eq:w_quant_rel_entr1}\\
        &\leq D(W_{k}\|\hat{W}_{k})\\
        &=\log_2\bp{\frac{k}{k-1}},\label{eq:w_quant_rel_entr3}
    \end{align}
    with which we conclude that $R(W_{k})=\log_2\bp{\frac{k}{k-1}}$.

    Consider also that
    \begin{align}
        \widetilde{R}_\alpha(W_k)&\leq\widetilde{D}_\alpha\bp{W_{k}\middle\|\hat{W}_{k}}\\
        &=\frac{1}{\alpha-1}\log_2\bp{\Tr\bb{\bp{\hat{W}_{k}^{\frac{1-\alpha}{2\alpha}}\op{W_{k}}{W_{k}}\hat{W}_{k}^{\frac{1-\alpha}{2\alpha}}}^\alpha}}\\
        &=\frac{1}{\alpha-1}\log_2\bp{\Tr\bb{\bp{\bp{\frac{k-1}{k}}^{\frac{1-\alpha}{2\alpha}}\op{W_{k}}{W_{k}}\bp{\frac{k-1}{k}}^{\frac{1-\alpha}{2\alpha}}}^\alpha}}\\
        &=\frac{1}{\alpha-1}\log_2\bp{\bp{\frac{k-1}{k}}^{1-\alpha}\Tr[\op{W_{k}}{W_{k}}]}\\
        &=\log_2\bp{\frac{k}{k-1}}.
    \end{align}
    With~\eqref{eq:sandwich_renyi_rains_lb_w2}, we conclude that $\widetilde{R}_\alpha(W_{k})=\log_2\bp{\frac{k}{k-1}}$ for all $\alpha>1$.
\end{proof}

\section{Proof of Lemma~\ref{lem:sdp_max_gmre}}\label{app:sdp_max_gmre}
\begin{proof}
Substituting~\eqref{eq:lambda_sdp_dmax} into~\eqref{eq:rmax_dmax}, we find that
\begin{align}
    R_{\max}(\rho)&=\inf_{\tau\in\mc{T}(\hilb_k)}D_{\max}(\rho\|\tau)\\
    &=\log_2\inf_{\tau\in\mc{T}(\hilb_k),\lambda\geq0}\bc{\lambda:\rho\leq\lambda\tau}\\
    &=\log_2\inf_{\omega_m\geq0,\lambda\geq0}\bc{\lambda:\rho\leq\lambda\tau,\tau=\sum_m\omega_m,\sum_m\norm{T_m\bp{\omega_m}}_1\leq1}\label{eq:rmax_initial_sdp1}\\
    &=\log_2\inf_{\substack{\omega_m,\omega^+_m,\omega^-_m\geq0,\\\lambda\geq0}}\bc{\lambda:\rho\leq\lambda\tau,\tau=\sum_m\omega_m,T_m(\omega_m)=\omega^+_m-\omega^-_m\ \forall m,\sum_m\Tr\bb{\omega^+_m+\omega^-_m}\leq1}\label{eq:rmax_initial_sdp2}\\
    &\eqcolon\log_2W_{\max}(\rho),
\end{align}
where we used Lemma~\ref{lem:T_semidef_constraint} to write the last condition of the infimization in~\eqref{eq:rmax_initial_sdp1} in terms of the semidefinite constraints seen in~\eqref{eq:rmax_initial_sdp2}. Let us examine $W_{\max}(\rho)$:
\begin{align}
    W_{\max}(\rho)&=\inf_{\substack{\omega_m,\omega^+_m,\omega^-_m\geq0,\\\lambda\geq0}}\bc{\lambda:\rho\leq\lambda\tau,\tau=\sum_m\omega_m,T_m(\omega_m)=\omega^+_m-\omega^-_m\ \forall m,\sum_m\Tr\bb{\omega^+_m+\omega^-_m}\leq1}\\
    &=\inf_{\substack{\omega_m,\omega^+_m,\omega^-_m\geq0,\\\lambda\geq0}}\bc{\lambda:\rho\leq\sum_m\lambda\omega_m,T_m\bp{\omega_m}=\omega^+_m-\omega^-_m\ \forall m,\sum_m\Tr\bb{\omega^+_m+\omega^-_m}\leq1}.
\end{align}
Because $\tau\in\mc{T}$ implies that $\Tr[\tau]\leq1$, taking the trace on both sides of the condition $\rho\leq\lambda\tau$ implies that $\lambda\geq1$. Defining $\eta_m\coloneq\lambda\omega_m$, $\eta^+_m\coloneq\lambda\omega^+_m$, and $\eta^-_m\coloneq\lambda\omega^-_m$, we have that $\eta_m,\eta^+_m,\eta^-_m\geq0$ for each $m$. Implementing this change of variables, $W_{\max}$ becomes
\begin{align}
     W_{\max}(\rho)&=\inf_{\substack{\eta_m,\eta^+_m,\eta^-_m\geq0,\\\lambda\geq0}}\bc{\lambda:\rho\leq\sum_m\eta_m,T_m\bp{\eta_m}=\eta^+_m-\eta^-_m\ \forall m,\sum_m\Tr\bb{\eta^+_m+\eta^-_m}\leq\lambda}.\label{eq:wmax_lambda}
\end{align}
Because $\eta^+_m$ and $\eta^-_m$ are positive semidefinite, the condition $\sum_m\Tr\bb{\eta^+_m+\eta^-_m}\leq\lambda$ implies that $\lambda=\sum_m\Tr\bb{\eta^+_m+\eta^-_m}$ in~\eqref{eq:wmax_lambda} due to the fact that we are taking the infimum over $\lambda$. Thus,
\begin{align}
    W_{\max}(\rho)&=\inf_{\substack{\eta_m,\eta^+_m,\\ \eta^-_m\geq0}}\bc{\sum_m\Tr\bb{\eta^+_m+\eta^-_m}:\rho\leq\sum_m\eta_m,T_m\bp{\eta_m}=\eta^+_m-\eta^-_m\ \forall m}.\label{eq:wmax}
\end{align}
We will now introduce Lagrange multipliers for each constraint. In a $k$-partite system, there are $2^{k-1}-1$ unique bipartitions, meaning that~\eqref{eq:wmax} has $2^{k-1}-1$ equality constraints and one inequality constraint, necessitating a Lagrange multiplier for each. So,
\begin{align}
    & W_{\max}(\rho)  \notag \\
    &=\inf_{\substack{\eta_m,\eta^+_m,\\ \eta^-_m\geq0}}\bc{\sum_m\Tr\bb{\eta^+_m+\eta^-_m}+\sup_{\substack{W\geq0 \\ H_m\in\operatorname{Herm}(\hilb_k)}}\bc{ \Tr\bb{W\bp{\rho-\sum_m\eta_m}}+\sum_m\Tr\bb{H_m\bp{T_m\bp{\eta_m}-\eta^+_m+\eta^-_m}} }}\\
    &=\inf_{\substack{\eta_m,\eta^+_m,\\ \eta^-_m\geq0}}\sup_{\substack{W\geq0 \\ H_m\in\operatorname{Herm}(\hilb_k)}}\bc{\Tr\bb{W\rho}+\sum_m\Tr\bb{\,\bp{T_m\bp{H_m}-W}\eta_m}+\sum_m\Tr\bb{\,\bp{\ii-H_m}\eta^+_m}+\sum_m\Tr\bb{\,\bp{\ii+H_m}\eta^-_m}}\\
    &\geq\sup_{\substack{W\geq0 \\ H_m\in\operatorname{Herm}(\hilb_k)}}\inf_{\substack{\eta_m,\eta^+_m,\\ \eta^-_m\geq0}}\bc{\Tr\bb{W\rho}+\sum_m\Tr\bb{\,\bp{T_m\bp{H_m}-W}\eta_m}+\sum_m\Tr\bb{\,\bp{\ii-H_m}\eta^+_m}+\sum_m\Tr\bb{\,\bp{\ii+H_m}\eta^-_m}}\\
    &=\sup_{\substack{W\geq0 \\ H_m\in\operatorname{Herm}(\hilb_k)}}\bc{\Tr\bb{W\rho}:T_m\bp{H_m}-W\geq0,\ii-H_m\geq0,\ii+H_m\geq0\ \forall m}\\
    &=\sup_{\substack{W\geq0 \\ H_m\in\operatorname{Herm}(\hilb_k)}}\bc{\Tr\bb{W\rho}:T_m\bp{H_m}-W\geq0,-\ii\leq H_m\leq\ii\ \forall m}.
\end{align}

To prove strong duality, let us first put the SDP into a canonical form. 
We can bring the constraint $\eta_m\geq0$ into the infimum in~\eqref{eq:wmax} and write
\begin{align}
    W_{\max}(\rho)&=\inf_{\eta^+_m,\eta^-_m\geq0}\bc{\sum_m\Tr\bb{\eta^+_m+\eta^-_m}:\rho\leq\sum_m\eta_m,T_m\bp{\eta_m}=\eta^+_m-\eta^-_m,\eta_m\geq0\ \forall m} \label{eq:rmax_misc1}\\
    &=\inf_{\eta^+_m,\eta^-_m\geq0}\bc{\Tr\bb{\sum_m\eta^+_m+\eta^-_m}:\rho\leq\sum_mT_m\bp{\eta^+_m-\eta^-_m},T_m\bp{\eta^+_m-\eta^-_m}\geq0\ \forall m}.\label{eq:rmax_misc2}
\end{align}
In~\eqref{eq:rmax_misc2}, we used the fact that the partial transpose is self-inverse, and we incorporated the equality constraint in~\eqref{eq:rmax_misc1} into the first inequality constraint. Let $K=2^{k-1}-1$ denote the number of unique bipartitions of a $k$-partite system. Define a matrix $X$ such that
\begin{align}
    X&\coloneq\begin{bmatrix}
        \eta_{m_1}^+&0&0&\cdots&0&0\\
        0&\eta_{m_1}^-&0&\cdots&0&0\\
        0&0&\eta_{m_2}^+&\cdots&0&0\\
        \vdots&\vdots&\vdots&\ddots&\vdots&\vdots\\
        0&0&0&\cdots&\eta_{m_K}^+&0\\
        0&0&0&\cdots&0&\eta_{m_K}^-
    \end{bmatrix}.
\end{align}
Define a superoperator $\Phi(\cdot)$ with action given by
\begin{align}
    \Phi(X)&=\begin{bmatrix}
        \sum_mT_m\bp{\eta^+_m-\eta^-_m}&0&0&\cdots&0\\
        0&T_{m_1}\bp{\eta^+_{m_1}-\eta^-_{m_1}}&0&\cdots&0\\
        0&0&T_{m_2}\bp{\eta^+_{m_2}-\eta^-_{m_2}}&\cdots&0\\
        \vdots&\vdots&\vdots&\ddots&\vdots\\
        0&0&0&\cdots&T_{m_K}\bp{\eta^+_{m_K}-\eta^-_{m_K}}.
    \end{bmatrix}
\end{align}
We can then write the SDP for $W_{\max}$ as
\begin{align}
     W_{\max}(\rho)&=\inf_{X\geq0}\bc{\Tr[AX]:\Phi(X)\geq B},
\end{align}
where
\begin{align}
    A&=\begin{bmatrix}
        \ii&0&\cdots&0\\
        0&\ii&\cdots&0\\
        \vdots&\vdots&\ddots&\vdots\\
        0&0&\cdots&\ii
    \end{bmatrix},\\
    B&=\begin{bmatrix}
        \rho&0&\cdots&0\\
        0&0&\cdots&0\\
        \vdots&\vdots&\ddots&\vdots\\
        0&0&\cdots&0
    \end{bmatrix}.
\end{align}
The dual SDP is then~\cite[Definition 2.27]{khatri2024}
\begin{align}
    \sup_{Y\geq0}\bc{\Tr[BY]:\Phi^\dag(Y)\leq A}.
\end{align}
Let us find the adjoint of the superoperator $\Phi$. Let $Y_0$ denote the upper-left block of $Y$ with the same dimensions as the upper-left block of $\Phi(X)$. For $i\in\{1,\ldots,K\}$, let $Y_{m_i}$ denote a remaining diagonal block of $Y$ with the same dimensions as the remaining diagonal blocks of $\Phi(X)$. Consider the following:
\begin{align}
    \Tr[\Phi(X)Y]&=\Tr\bb{\bp{\sum_mT_m\bp{\eta^+_m-\eta^-_m}}Y_0}+\sum_m\Tr\bb{T_m\bp{\eta_m^+-\eta_m^-}Y_m}\\
    &=\sum_m\Tr\bb{T_m\bp{\eta^+_m-\eta^-_m}Y_0}+\sum_m\Tr\bb{T_m\bp{\eta_m^+-\eta_m^-}Y_m}\\
    &=\sum_m\Tr\bb{\bp{\eta^+_m-\eta^-_m}T_m(Y_0)}+\sum_m\Tr\bb{\bp{\eta_m^+-\eta_m^-}T_m(Y_m)}\\
    &=\sum_m\Tr\bb{\eta_m^+T_m(Y_0+Y_m)+\eta_m^-(-T_m(Y_0+Y_m))}.
\end{align}
Thus,
\begin{align}
    \Phi^\dag(Y)=\begin{bmatrix}
        T_{m_1}(Y_0+Y_{m_1})&0&0&\cdots&0&0\\
        0&-T_{m_1}(Y_0+Y_{m_1})&0&\cdots&0&0\\
        0&0&T_{m_2}(Y_0+Y_{m_2})&\cdots&0&0\\
        \vdots&\vdots&\vdots&\ddots&\vdots&\vdots\\
        0&0&0&\cdots&T_{m_K}(Y_0+Y_{m_K})&0\\
        0&0&0&\cdots&0&-T_{m_K}(Y_0+Y_{m_K})
    \end{bmatrix}.
\end{align}
By Slater's condition, to prove strong duality, it suffices to find $X\geq0$ such that $\Phi(X)\geq B$ and $Y>0$ such that $\Phi^\dag(Y)<A$~\cite[Theorem 2.29]{khatri2024}. Let $\rho=\sum_m\rho_m$ be a fixed mixed-state decomposition of $\rho$. Then, each $T_m(\rho_m)$ is Hermitian and thus can be decomposed into positive and negative parts, both of which are positive semidefinite. For each $i=1,\ldots,K$, let $\eta^+_{m_i}$ and $\eta^-_{m_i}$ be the positive and negative parts of $T_{m_i}(\rho_{m_i})$, respectively. It follows that $T_{m_i}(\eta_{m_i}^+-\eta_{m_i}^-)=\rho_{m_i}\geq0$ and $\sum_mT_m(\eta_m^+-\eta_m^-)=\rho$. Therefore, $\Phi(X)\geq B$, and $X$ is primal feasible.

Now, let the diagonal blocks $Y_0$ and $Y_{m_1},\ldots,Y_{m_K}$ of $Y$ be given by $\frac{1}{4}\ii$, and let each other entry be 0. Then, $Y>0$. The condition $\Phi^\dag(Y)< A$ can be written as $-\ii<T_{m_i}(Y_0+Y_{m_i})<\ii$ for each $i$. We have $T_{m_i}(Y_0+Y_{m_i})=T_{m_i}(\frac{1}{2}\ii)=\frac{1}{2}\ii$, which strictly satisfies the desired inequalities for each $i$. Thus, strong duality holds for $W_{\max}$. It follows that
\begin{align}
    R_{\max}(\rho)&=\log_2W_{\max}(\rho),\\
    W_{\max}(\rho)&=\sup_{\substack{W\geq0 \\ H_m\in\operatorname{Herm}(\hilb_k)}}\bc{\Tr\bb{W\rho}:T_m\bp{H_m}-W\geq0,-\ii\leq H_m\leq\ii\ \forall m},
\end{align}
concluding the proof.
\end{proof}

\section{Genuine multipartite negativity SDP}

\label{app:gmn_sdp}

The negativity of a bipartite quantum state $\rho_{AB}$ is
\begin{align}
    N(\rho_{AB})&=\frac{\norm{T_B(\rho_{AB})}_1-1}{2}.
\end{align}
The trace norm of a Hermitian operator $X$ can be expressed as
\begin{align}
    \norm{X}_1=\min_{Y,Z\geq0}\bc{\Tr[Y+Z]:X=Y-Z}.
\end{align}
We begin with the formula for the genuine multipartite negativity from~\cite[Eq.~9]{hofmann2014}:
\begin{align}
    & N_g(\rho) \notag \\
    &=\min_{\rho=\sum_mp_m\rho_m}\sum_mp_mN_m(\rho_m)\\
    &=-\frac{1}{2}+\frac{1}{2}\min_{\rho=\sum_mp_m\rho_m}\sum_mp_m\norm{T_m(\rho_{m})}_1\\
    &=-\frac{1}{2}+\frac{1}{2}\min_{\rho=\sum_m\widetilde{\rho}_m}\sum_m\norm{T_m(\widetilde{\rho}_{m})}_1 \label{eq:renormalized_rho}\\
    &=-\frac{1}{2}+\frac{1}{2}\min_{\rho=\sum_m\widetilde{\rho}_m,Y_m\geq0,Z_m\geq0}\bc{\sum_m\Tr[Y_m+Z_m]:T_m(\widetilde{\rho}_m)=Y_m-Z_m\, \forall m}\\
    &=-\frac{1}{2}+\frac{1}{2}\min_{\substack{\widetilde{\rho}_m,Y_m,\\ Z_m\geq0}}\bc{\sum_m\Tr[Y_m+Z_m]:T_m(\widetilde{\rho}_m)=Y_m-Z_m,\rho=\sum_m\widetilde{\rho}_m\, \forall m}\\
    &=-\frac{1}{2}+\frac{1}{2}\min_{\substack{\widetilde{\rho}_m,Y_m,\\ Z_m\geq0}}\bc{\sum_m\Tr[Y_m+Z_m]+\sup_{X_m,H}\bc{\sum_m\Tr\bb{X_m\bp{T_m(\widetilde{\rho}_m)-Y_m+Z_m}}+\Tr\bb{H\bp{\rho-\sum_m\widetilde{\rho}_m}}}}\label{eq:sup_xm_h}\\
    &=-\frac{1}{2}+\frac{1}{2}\min_{\substack{\widetilde{\rho}_m,Y_m,\\ Z_m\geq0}}\sup_{X_m,H}\bc{\Tr[H\rho]+\sum_m\Tr\bb{\,\bp{\ii-X_m}Y_m}+\sum_m\Tr\bb{\,\bp{\ii+X_m}Z_m}+\sum_m\Tr\bb{\,\bp{T_m(X_m)-H}\widetilde{\rho}_m}}\\
    &=-\frac{1}{2}+\frac{1}{2}\sup_{X_m,H}\min_{\substack{\widetilde{\rho}_m,Y_m,\\ Z_m\geq0}}\bc{\Tr[H\rho]+\sum_m\Tr\bb{\,\bp{\ii-X_m}Y_m}+\sum_m\Tr\bb{\,\bp{\ii+X_m}Z_m}+\sum_m\Tr\bb{\,\bp{T_m(X_m)-H}\widetilde{\rho}_m}}\label{eq:strong_duality}\\
    &=-\frac{1}{2}+\frac{1}{2}\sup_{X_m,H}\bc{\Tr[H\rho]:\ii-X_m\geq0,\ii+X_m\geq0,T_m(X_m)-H\geq0\, \forall m}\\
    &=-\frac{1}{2}+\frac{1}{2}\sup_{X_m,H}\bc{\Tr[H\rho]:-\ii\leq X_m\leq\ii,T_m(X_m)-H\geq0\, \forall m}\\
    &=-\frac{1}{2}-\frac{1}{2}\inf_{X_m,H}\bc{-\Tr[H\rho]:-\ii\leq X_m\leq\ii,T_m(X_m)-H\geq0\, \forall m}\\
    &=-\inf_{X_m,H}\bc{\Tr\bb{\bp{\frac{\ii-H}{2}}\rho}:-\ii\leq -X_m\leq\ii,T_m(X_m)-H\geq0\, \forall m}\\
    &=-\inf_{X_m,H}\bc{\Tr\bb{\bp{\frac{\ii-H}{2}}\rho}:0\leq -X_m+\ii\leq2\ii,T_m(X_m)-H\geq0\, \forall m}\\
    &=-\inf_{X_m,H}\bc{\Tr\bb{\bp{\frac{\ii-H}{2}}\rho}:0\leq \frac{\ii-X_m}{2}\leq\ii,\frac{T_m(X_m)-H}{2}\geq0\, \forall m}.\label{eq:sdp1}
\end{align}
In~\eqref{eq:renormalized_rho}, we absorbed the constants in the decomposition of $\rho$ into new operators given by $\widetilde{\rho}_m\coloneq p_m\rho_m$ for each $m$. In~\eqref{eq:sup_xm_h}, the supremum is taken over $H$ and $X_m$ that are Hermitian (for each $m$). We assumed that strong duality holds to switch the places of $\min$ and $\sup$ and arrive at~\eqref{eq:strong_duality}.

For each $m$, we define
\begin{align}
    W&\coloneq\frac{\ii-H}{2},&P_m&\coloneq\frac{T_m(X_m)-H}{2}.
\end{align}
Then, we can solve for $H$ and $X_m$ in terms of $W$ and $P_m$:
\begin{align}
    H&=\ii-2W, \\
    X_m&=T_m\bp{2P_m+H}\\
    &=T_m\bp{2P_m+\ii-2W}.
\end{align}
We can rewrite $\frac{\ii-X_m}{2}$ as follows:
\begin{align}
    \frac{\ii-X_m}{2}&=\frac{\ii}{2}-T_m\bp{P_m+\frac{\ii}{2}-W}\\
    &=T_m\bp{\frac{\ii}{2}}+T_m\bp{W-P_m-\frac{\ii}{2}}\\
    &=T_m\bp{W-P_m}.
\end{align}
Substituting these into~\eqref{eq:sdp1}, it follows that
\begin{align}
    N_g(\rho)&=-\inf_{P_m,W}\bc{\Tr\bb{W\rho}:0\leq T_m\bp{W-P_m}\leq\ii,P_m\geq0\, \forall m},
\end{align}
which is the same SDP as~\cite[Eq.~A.3]{hofmann2014}.

\section{Algorithms for computing Rains entanglement}

\label{app:Rains-algs}

The Rains entanglement admits the variational formula
\begin{align}
    R(\rho)=\inf_{\tau\in\mc{T}(\hilb_k)}D(\rho\|\tau),
\end{align}
where the feasible set \(\mc{T}(\hilb_k)\) is defined in~\eqref{def:t_set}. In prior work~\cite{murray2026}, this optimization was formulated as a semidefinite program. While this provides a principled way to compute \(R(\rho)\), the direct SDP formulation scales poorly with the total Hilbert-space dimension \(d=\dim(\hilb_k)\) and with the number \(2^{k-1}-1\) of bipartitions. The program contains matrix variables and trace-norm constraints for each bipartition, together with the relative-entropy optimization, so it quickly becomes difficult to solve for many-body states or network states of moderate size.

We now describe a scalable alternative for estimating \(R(\rho)\). The method has two parts. First, we use a variational parametrization of the components in \(\mc{T}(\hilb_k)\) that produces feasible points by construction, and hence rigorous upper bounds on \(R(\rho)\). Second, we use the convexity of the original optimization problem to establish lower bounds. These  lower bounds are obtained from a Frank--Wolfe support-function calculation, which requires only one trace-norm SDP per bipartition evaluated at the final feasible point, rather than solving the full Rains entanglement SDP.

\subsection{Variational upper bound}

The feasible set
\begin{align}
\mc{T}(\hilb)
\coloneq
\left\{
\tau=\sum_m\tau_m\in\mc{L}_+(\hilb):
\tau_m\geq0\ \forall m,\ 
\sum_m\norm{T_m(\tau_m)}_1\leq1
\right\}
\end{align}
is star-shaped with respect to the origin: if \(\tau\in\mc{T}(\hilb)\), then
\(s\tau\in\mc{T}(\hilb)\) for every \(s\in[0,1]\). However, the relative entropy
scales as
\begin{align}
    D(\rho\|s\tau)=D(\rho\|\tau)-\log_2 s .
\end{align}
Thus decreasing the scale of a feasible operator can only increase the objective.
Equivalently, along any nonzero feasible ray, the optimal representative lies on
the boundary
\begin{align}
    \sum_m\norm{T_m(\tau_m)}_1=1 .
\end{align}

This observation motivates the following parametrization. For each bipartition
\(m\), write
\begin{align}
    \tau_m=X_mX_m^\dagger ,
\end{align}
and define the scale functional
\begin{align}
    c(X)\coloneq
    \sum_m\norm{T_m(X_mX_m^\dagger)}_1 .
\end{align}
For every nonzero collection \(X=(X_m)_m\), the normalized operator
\begin{align}
    \tau(X)\coloneq
    \frac{\sum_m X_mX_m^\dagger}{c(X)}
\end{align}
belongs to \(\mc{T}(\hilb)\) and saturates the trace-norm constraint. Therefore
every such \(X\) gives a feasible point and hence an upper bound
\begin{align}
    R(\rho)\leq D(\rho\|\tau(X)).
\end{align}
Using the scaling identity, this bound is obtained by minimizing the
unconstrained objective
\begin{align}
    \inf_{\{X_m\}_m}
    \Bigg\{
        \Tr[\rho\log_2\rho]
        -\Tr\!\left[
            \rho\log_2\!\left(\sum_m X_mX_m^\dagger\right)
        \right]
        +\log_2 c(X)
    \Bigg\}.
    \label{eq:gmre_variational_program}
\end{align}

The problem in~\eqref{eq:gmre_variational_program} is unconstrained but
nonconvex in the factors \(X_m\). It can therefore be approached with
gradient-based optimization methods, such as L-BFGS-type quasi-Newton methods,
applied to the real and imaginary parts of these matrices. Each objective
evaluation requires a matrix logarithm of \(\sum_mX_mX_m^\dagger\) and trace
norms of the partially transposed components \(T_m(X_mX_m^\dagger)\).

For numerical stability, one may replace the trace norm during optimization by
the smooth spectral function
\begin{align}
    \norm{Y}_{1,\mu}
    \coloneq
    \sum_i\sqrt{\lambda_i(Y)^2+\mu^2},
    \qquad \mu>0,
\end{align}
where \(\lambda_i(Y)\) are the eigenvalues of the Hermitian matrix \(Y\). This
gives a family of smooth unconstrained objectives that converges pointwise to
\eqref{eq:gmre_variational_program} as \(\mu\to0\). One can then use a homotopy
strategy, starting from a larger smoothing parameter \(\mu\) and gradually
decreasing \(\mu\), using the previous solution as the initial point for the next
optimization.

\subsection{First-order convexity lower bound}

Let us first recall a general consequence of convexity. Let \(f\) be a convex
differentiable function on a convex set \(\mc{C}\). Then, for every
\(x\in\mc{C}\),
\begin{align}
    f(x) \geq \inf_{y\in\mc{C}} f(y)
    \geq
    f(x)+\inf_{y\in\mc{C}}\langle \nabla f(x),y-x\rangle .
    \label{eq:first_order_convex_lb}
\end{align}
Thus, every feasible point $x$ gives an upper bound through its objective value and a
lower bound through the best affine approximation at that point.

We apply this observation to
\begin{align}
    f(\tau)\coloneq D(\rho\|\tau),
    \qquad
    \mc{C}=\mc{T}(\hilb).
\end{align}
For any feasible \(\tau\in\mc{T}(\hilb)\), let
\begin{align}
    G_\tau & \coloneq \nabla_\tau D(\rho\|\tau) \\
    & =  -\frac{1}{\ln 2} \int_{0}^\infty ds\, (\tau + s I )^{-1} \rho (\tau + s I)^{-1}.
\end{align}
See, e.g., \cite[Corollary~18]{huang2026accelerated} for the second equality.
Then
\begin{align}
    R(\rho)
    \geq
    D(\rho\|\tau)
    +
    \inf_{\xi\in\mc{T}(\hilb)}
    \langle G_\tau,\xi-\tau\rangle .
    \label{eq:gmre_first_order_lb}
\end{align}
Equivalently,
\begin{align}
    R(\rho)
    \geq
    D(\rho\|\tau)
    -
    \left(
        \langle G_\tau,\tau\rangle
        -
        \inf_{\xi\in\mc{T}(\hilb)}
        \langle G_\tau,\xi\rangle
    \right).
    \label{eq:gmre_gap_lb}
\end{align}

It remains to evaluate the linear minimization
\(\inf_{\xi\in\mc{T}(\hilb)}\langle G_\tau,\xi\rangle\). Define
\begin{align}
    \mc{K}_m
    \coloneq
    \left\{
        X\geq0:\norm{T_m(X)}_1\leq1
    \right\}.
\end{align}
Then
\begin{align}
    \mc{T}(\hilb)
    =
    \operatorname{conv}
\!\left(
        \{0\}\cup\bigcup_m \mc{K}_m
    \right).
\end{align}
Indeed, if \(\xi=\sum_m \xi_m\in\mc{T}(\hilb)\), set
\(a_m=\norm{T_m(\xi_m)}_1\). For \(a_m>0\), \(X_m=\xi_m/a_m\in\mc{K}_m\), and
\begin{equation}
    \xi=\sum_m a_m X_m+\left(1-\sum_m a_m\right)0 ,
\end{equation}
thus establishing the inclusion $\subseteq$.
The converse inclusion follows directly from the definition of \(\mc{T}(\hilb)\).

Since a linear functional over a convex hull attains the same minimum as over the
generating set, we obtain
\begin{align}
    \inf_{\xi\in\mc{T}(\hilb)}\langle G_\tau,\xi\rangle
    =
    \min\left\{
        0,\,
        \min_m
        \inf_{X\in\mc{K}_m}
        \langle G_\tau,X\rangle
    \right\}.
    \label{eq:support_decomp_T}
\end{align}

For each fixed bipartition \(m\), the remaining linear optimization over
\(\mc{K}_m\) (i.e., $\inf_{X\in\mc{K}_m}
        \langle G_\tau,X\rangle$) can be written as the SDP
\begin{align}
\begin{aligned}
    \inf_{X,P,N}\quad
    &\langle G_\tau,X\rangle \\
    \text{subject to}\quad
    &X\geq0,\quad P\geq0,\quad N\geq0,\\
    &T_m(X)=P-N,\\
    &\Tr[P+N]\leq1 .
\end{aligned}
\label{eq:first_order_lmo_sdp}
\end{align}
Thus, after obtaining a feasible variational point \(\tau\), the lower bound
in~\eqref{eq:gmre_first_order_lb} is computed by solving the above linear
trace-norm SDP independently for each bipartition \(m\). This certification step
is substantially cheaper than the direct Rains entanglement SDP because it contains no
relative-entropy objective or relative-entropy approximation.

\end{document}